%% file: 000-main.tex
\newif\ifsubmit 

\newif\ifeprint
\newif\ifconference
\newif\ifproceedings

\input{111-PARAMS}

\ifproceedings
\documentclass[runningheads]{llncs}
\fi
 
\ifconference
\documentclass{llncs}
\fi

\ifeprint
\documentclass[11pt]{article} 
\fi

\input{000-include/macros/config}
\input{000-include/macros/commoncmd}
\input{111-THIS-PROJECT/cmd}

\input{111-THIS-PROJECT/title-authors}

\begin{document}
\maketitle

\ifproceedings
\input{112-PROCEEDINGS}
\fi

\ifconference
\input{112-CONFERENCE}
\fi

\ifeprint
\input{112-EPRINT}

\fi

\bibliographystyle{alpha}
\end{document}

%% file: 111-PARAMS.tex
\submittrue
\eprinttrue
\conferencefalse
\proceedingsfalse

%% file: 000-include/macros/config.tex
\usepackage[utf8]{inputenc}

\ifproceedings
\usepackage[T1]{fontenc}
\fi

\ifeprint
\usepackage[pdftex,left=1in,top=1in,bottom=1in,right=1in]{geometry}
\usepackage{amsthm}
\fi

\usepackage{amsmath,amssymb,dsfont,mathrsfs}

\usepackage{braket}
\usepackage{physics}
\usepackage{mathtools}

\usepackage{xcolor}
\usepackage[framemethod=tikz]{mdframed}
\usepackage{graphicx}
\usepackage{caption}
\usepackage{bm}

\usepackage{verbatim} 
\usepackage{enumitem}

\usepackage[colorlinks=true,citecolor=blue,linkcolor=blue]{hyperref}
\usepackage[nameinlink, noabbrev, capitalize]{cleveref}

\hypersetup{
    colorlinks,
    linkcolor={red!50!black},
    citecolor={blue!50!black},
    urlcolor={blue!80!black}
}

\ifeprint
\usepackage{tocloft}

\fi

\definecolor{darkgreen}{rgb}{0,0.5,0}
\definecolor{darkblue}{rgb}{0,0,0.6}

\allowdisplaybreaks

\usepackage{tabularx}

%% file: 000-include/macros/commoncmd.tex
\newcommand{\addrefs}{\bibliography{111-THIS-PROJECT/refs}}

\DeclareMathAlphabet{\mathpzc}{OT1}{pzc}{m}{it}

\newcommand{\E}{\mathds{E}}

\newcommand{\zo}{\{0,1\}}
\newcommand{\poly}{\mathsf{poly}}

\newcommand{\ora}{\mathcal{O}}

\newcommand{\negl}{\mathsf{negl}}

\ifeprint
\newtheorem{theorem}{Theorem}

 \newtheorem{lemma}{Lemma}

 \newtheorem{definition}{Definition}
 
 \newtheorem{remark}{Remark}
 \newtheorem{fact}{Fact}
\fi

%% file: 111-THIS-PROJECT/cmd.tex
\newcommand{\QMA}{\mathsf{QMA}}
\newcommand{\QCMA}{\mathsf{QCMA}}
\newcommand{\QCIP}{\mathsf{QCIP}}
\newcommand{\TV}{\operatorname{TV}}
\newcommand{\YES}{\mathsf{YES}}
\newcommand{\NO}{\mathsf{NO}}
\newcommand{\ACCEPT}{\mathsf{ACCEPT}}
\newcommand{\REJECT}{\mathsf{REJECT}}
\newcommand{\bits}{\{0,1\}}

\newcommand{\protocoltitle}[1]{\noindent\underline{#1}\par\smallskip}
\newcommand{\protocolbody}{\par\smallskip\hrule height .4pt\relax\smallskip}

\newmdenv[
  nobreak=true,
  linewidth=.6pt,
  innerleftmargin=10pt,
  innerrightmargin=10pt,
  innertopmargin=8pt,
  innerbottommargin=8pt,
  skipabove=9pt,
  skipbelow=9pt
]{protocol}

%% file: 111-THIS-PROJECT/title-authors.tex
\title{Separating $\QMA$ from $\QCIP$\\
  with a Classical Oracle\\[0.5em]
  {\large\itshape or, the Power of Quantum Proofs over Classical Interaction\\
  for Quantum Verifiers}}

\ifproceedings
\fi

\date{}

\ifproceedings
\author{Alper Cakan\inst{1}\orcidID{0000-0003-3567-1704}}

\authorrunning{A. Cakan}

\institute{Carnegie Mellon University, Pittsburgh PA, USA \email{alpercakan98@gmail.com}}
\fi

\ifeprint
 \author{Alper Cakan\thanks{Carnegie Mellon University. \texttt{alpercakan98@gmail.com}.}}
\fi

%% file: 112-EPRINT.tex
\input{111-THIS-PROJECT/abstract}

\newpage
\setcounter{tocdepth}{2}
{\small\tableofcontents}

\newpage
\input{sections/introduction}
\subsection{Acknowledgments}
\input{111-THIS-PROJECT/acknowledgement}

\newpage
\input{sections/technical-overview}

\newpage
\input{sections/preliminaries}

\newpage
\part{Technical Tools}
\label{part:tools}
\input{sections/incompressibility}
\newpage
\input{sections/interactive-proof-tools}

\newpage
\part{Separating QMA from QCIP}
\label{part:separation}
\input{sections/oracle-problem}

\newpage
\input{sections/qma-upper-bound}

\newpage
\input{sections/qcip-lower-bound}

\newpage
\addrefs 

\newpage
\appendix
\part*{Appendix}
\addcontentsline{toc}{part}{Appendix}
\input{appendices/oracle-separations}

\newpage
\input{appendices/qcip-amplification}

\newpage
\input{appendices/interactive-proof-tools}

\newpage
\input{appendices/random-set-containment}

%% file: 111-THIS-PROJECT/abstract.tex
\begin{abstract}
  Whether some problems require quantum proofs has been a central
  question in quantum complexity (Aharonov and Naveh, 2002; Aaronson
  and Kuperberg, CCC 2007). Recently, breakthrough work
  of Bostanci, Haferkamp, Nirkhe, and Zhandry (STOC 2026), followed by
  a simpler separation due to Bostanci, Huang, and Vaikuntanathan
  (FOCS 2026), established a classical oracle separation between
  $\QMA$ and $\QCMA$. However, while they are not in $\QCMA$, the problems used in both separations
  still lie in $\mathsf{AM}$: they admit a two-message public-coin proof
  system with a \emph{classical} verifier.
  
  In this work, we ask whether some problems \emph{truly} require quantum
  proofs. More formally,
  we consider the complexity class $\QCIP$, introduced by Buhrman, Le Gall, and Weggemans
  (2024), where an efficient quantum verifier interacts with an unbounded prover over a \emph{classical channel} for an arbitrary polynomial number of rounds.

  We construct a classical oracle $\ora$ such that
  \[
    \QMA^\ora\not\subseteq\QCIP^\ora,
  \]
  thus showing that some languages indeed require quantum proofs,
  with no classical replacements. Since $\QCMA=\QCIP[1]$,
  this strengthens the earlier $\QMA$--$\QCMA$ separations, which
  now follow as a special case of our result.

  As a technical contribution, we extend to the complexity theory setting the techniques developed by Cakan, Goyal, and Shmueli (CRYPTO 2026) in the context of cryptography for analyzing classically interacting quantum machines. We believe this may be of independent interest.

  \ifproceedings
  \begingroup
  \raggedright
  \keywords{Quantum complexity \and quantum witnesses \and interactive proofs \and oracle separations \and QMA vs QCMA.}\par
  \endgroup
  \fi
\end{abstract}
\ifeprint
\vfill\par\textbf{Keywords:} Quantum complexity, quantum witnesses, interactive proofs, oracle separations, QMA vs QCMA.
\fi

%% file: sections/introduction.tex
\section{Introduction}
The question of whether quantum proofs are more powerful than classical ones, and if some languages require quantum proofs has been considered by many works. One of the first evidence that quantum proofs are indeed more powerful came with the work of \cite{AK07} who showed a unitary oracle $U$ such that $\QMA^U \not\subseteq \QCMA^U$. More recently, there have been many attempts on obtaining a classical oracle analogue of this result, culminating in the breakthrough work of Bostanci, Haferkamp, Nirkhe, and
Zhandry~\cite{BHNZ} who proved such a classical oracle separation for the first time, and later on the work of Bostanci, Huang, and Vaikuntanathan~\cite{BHV} who give a simpler classical oracle separation.

\paragraph{Classical replacements for quantum proofs?}
While \cite{BHNZ} and \cite{BHV} have shown the existence (relative to a classical oracle $\ora$) of a language $L$ such that $L \in \QMA^\ora$ but $L \not\in \QCMA^\ora$, for both works, the language they construct also can easily be shown to satisfy $L \in \mathsf{AM}^\ora$, meaning that it admits a single-round (i.e. two-message) public-coin proof system with a \emph{classical} verifier. For example, for the language $L$ considered in \cite{BHV}, the protocol is simple: the classical verifier sends a uniformly random challenge $r$, and the prover applies measurement (that depends on $r$) to their quantum proof and sends the measurement outcome. Then, the verifier makes their decision using a single classical oracle query. Thus, while the language $L$ is indeed not in $\QCMA$ (since it does not admit a classical witnes), in a sense, $L$ does not seem to \emph{truly} require quantum proofs.
Indeed, Bostanci, Huang, and Vaikuntanathan themselves highlight this distinction:
their problem ``does not need the full power of $\QMA$'' \cite{BHV}.

Thus, we in this work, we ask the following question.

\begin{quote}
  \centering\itshape
  Are there problems that truly require quantum proofs, with no classical replacements?
\end{quote}

Aside from understanding the power of quantum proofs, we note that this question has important to connections to other problems and areas, as we discuss later on.

\paragraph{Quantum proof systems with classical communcation}
To formalize this question, we consider the complexity class $\QCIP$, introduced by
Buhrman, Le Gall, and Weggemans~\cite{BLW}, which can be considered an extension of $\QCMA$ and $\mathsf{IP}$, in different axis. In a QCIP protocol, an efficient (i.e. polynomial-time) quantum verifier exchanges classical
messages of arbitrary polynomial length with a (unbounded) prover, over an arbitrary polynomial number of rounds, and the parties can retain private quantum memory between messages. Thus, $\QCIP$ is the classical-communication
analogue of $\mathsf{QIP}$ (or, quantum-verifier analogue of $\mathsf{IP}$). Note that $\QCMA$ is simply the one message case of $\QCIP$ (thus denoted $\QCIP[1]$), and we trivially have $\QCMA^\ora \subseteq \QCIP^\ora \subseteq \mathsf{QIP}^\ora$ and $\mathsf{IP}^\ora \subseteq \QCIP^\ora$ relative to any oracle $\ora$.

Using this formalization, we ask if there exist a language $L$ that admits quantum proofs, $L \in \QMA$, but no classical-message interactive proof system with a quantum verifier, $L \not\in \QCIP$.

\subsection{Our Results}

In this work, we answer the above question and show that some problems indeed truly require quantum proofs.

\begin{theorem}[Informal; see \cref{cor:oracle}]
  There exist a classical oracle $\ora:\bits^*\to\bits$ and a unary
  language $L$ such that
  \[
    L\in\QMA^\ora\setminus\QCIP^\ora.
  \]
\end{theorem}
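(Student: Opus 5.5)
The plan is to follow the standard template for an oracle separation. I will define a distribution over oracles $\ora$ that encodes, for each input length $n$, either a YES instance or a NO instance, and put $1^n \in L$ iff the instance at length $n$ is YES. I will then show two things: a QMA verifier that accepts a polynomial-size quantum witness with high probability on YES instances and rejects every witness on NO instances; and, for every fixed polynomial-time QCIP verifier, a failure of soundness or completeness at some length except with negligible probability over the oracle. A diagonalization over the countably many QCIP verifiers, via Borel--Cantelli, then yields a single fixed oracle.

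For the oracle problem I will start from a BHV-style instance. In a YES instance the oracle hides a random subset $S\subseteq\bits^n$ of suitable size, with membership checkable by a classical oracle, and the QMA witness is the subset state $\ket{S}$ (or a random coset state of a hidden subspace). Since the known instances lie in $\mathsf{AM}$, a verifier choosing a random measurement basis $r$ must not succeed with a classical prover's answer. I would therefore let the honest QMA verifier measure the witness in a basis that is itself defined only through the oracle, for example a Fourier-type check of a phase or relation oracle. That way no public challenge can be answered without holding the state: the classical answers a prover could send reveal at most one measurement outcome per verifier query pattern, while the check needs coherent access. Completeness and soundness of the QMA side should follow from a direct analysis of subset- or coset-state verification, using the oracle's hidden structure to reject any witness on NO instances.

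For the QCIP lower bound I will use the incompressibility tools of Part~\ref{part:tools} together with the interactive-proof tools extending Cakan--Goyal--Shmueli. The argument has four steps. First, amplify the QCIP protocol and make the verifier's per-round behaviour canonical, so that a transcript is a polynomial-length classical string. Second, convert an accepting interaction on YES instances into a compressed classical description: an unbounded prover's messages, together with the verifier's polynomially many oracle queries, give a $\poly(n)$-bit advice string plus a query-bounded quantum algorithm that distinguishes YES from NO. Third, rule out such a distinguisher with the incompressibility lemma. Fourth, handle the adaptivity of the prover, who sees the verifier's classical messages. I would address this by simulating the optimal prover through a guessing argument over transcripts, which loses a factor $2^{-\poly}$ that the incompressibility bound must absorb with room to spare.

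I expect the main obstacle to be this adaptive-interaction step. Unlike QCMA, the prover can answer randomness that the verifier generates quantumly, possibly correlated with the verifier's oracle queries. A naive union bound over all transcripts then costs $2^{\poly(n)}$, and this factor must be beaten by an oracle-hardness bound that decays as $2^{-\text{larger}\ \poly}$. I would first try a hybrid argument round by round. At each step the verifier's private quantum state is replaced by one prepared from the classical transcript and a bounded number of fresh queries. I would then invoke the classically-interacting-machines technique to argue that the information the verifier holds about the hidden set remains bounded by the total communication length.
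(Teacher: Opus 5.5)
There is a genuine gap: your plan never constructs a cheating prover, and a $\QCIP$ lower bound requires one. Soundness quantifies over all provers, so to place a language outside $\QCIP$ you must exhibit, on NO instances, some $P^*$ that the verifier accepts with probability above $1/3$.

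Your second through fourth steps instead compress an accepting interaction into a $\poly(n)$-bit advice string plus a query-bounded distinguisher, and then refute that with incompressibility. This is a $\QCMA$-style argument, and it does not transfer to interaction:
\begin{itemize}
\item The honest prover is a reply table indexed by exponentially many verifier-message histories. A union bound over strategies therefore costs a doubly exponential factor, not $2^{\poly(n)}$.
\item The guessing fix fails structurally, so no faster-decaying hardness bound can repair it. Guessing the replies multiplies YES acceptance by $2^{-c}$. On NO instances, however, the guessing strategy is itself just some prover, so its acceptance is bounded only by $1/3$, and no gap remains.
\item The verifier's information about the oracle is not bounded by the communication, since it makes its own superposition queries.
\end{itemize}
Your oracle problem is also not pinned down. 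No NO distribution is specified. Measuring the witness in an oracle-defined basis gives no reason why an interactive protocol should fail, because a $\QCIP$ verifier has the same oracle access as the $\QMA$ verifier.

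The paper supplies these missing pieces. Following Fortnow--Sipser, it keeps the code-preimage YES instances but makes NO instances densely and independently marked. Each pair $(y,v)$ is marked with probability $\delta=2^{-\lceil\lambda/12\rceil}$, so for every target a cheating prover has many marked codewords to choose from; this already defeats the obvious AM protocol. $\QMA$ soundness still holds because the witness is fixed before the uniformly random target $u$ is drawn, which gives a random-access-code style mutual-information bound. The lower bound then proceeds in three steps:
\begin{itemize}
\item It replaces the quantum verifier by one making polynomially many classical queries, simultaneously for all provers. It learns each query through measured estimation trials, and bounds the database size via incompressibility of the string consisting of the transcript and the indices of trials that added entries.
\item It applies Goldwasser--Sipser and F\"urer et al.\ to obtain public coins, perfect completeness, and all queries after the interaction.
\item It builds an explicit prover that tracks candidate YES oracles consistent with its replies and tests inclusion-maximal eligible sets of forced pairs against the actual NO oracle. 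Incompressibility together with a second-moment containment lemma shows that this prover rarely aborts and that the verifier accepts with probability at least $1-(N+1)\delta$ on average over $\NO_\lambda$.
\end{itemize}
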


Since $\QCMA^\ora=\QCIP^\ora[1]$ relative to any oracle $\ora$, a classical oracle
separation between $\QMA$ and $\QCMA$ follows as a special case of our result.

We also note that, in non-relativizing setting, we have $\QCIP \supseteq \QMA$ since $\QCIP \supseteq \mathsf{IP} = \mathsf{PSPACE}$ and $\QMA \subseteq \mathsf{PSPACE}$.

\paragraph{Technical contribution.} A recent work of Cakan, Goyal, Shmueli \cite{QFire} showed, in a cryptographic context, how to analyze classical-message interactions between two quantum machines; using the notion of \emph{incompressibility} and by showing how to simulate the interactive protocol between the machines by another protocol where one machine simply makes classical oracle queries. However, their technique is specialized to the case where both machines are efficient in the original protocol (making polynomially many quantum queries to the oracles), and relies on modifying both machines. In this work, also based on \emph{incompressibility}, we develop new techniques to analyze an inefficient prover and an efficient quantum-verifier interacting over a classical channel, by approximating the protocol (against any prover) by only simulating the verifier (with classical oracle queries). We believe this may have independent applications in quantum complexity theory.

\subsection{Organization}
In \cref{sec:overview}, we give a technical overview of the paper.
In \cref{sec:prelims}, we give the preliminaries, followed by technical
tools in \cref{sec:codes,sec:protocol-transformations}.
In \cref{sec:oracle}, we define the oracle distributions and state
the distributional separation. We prove the QMA upper bound in
\cref{sec:qma} and the QCIP lower bound in \cref{sec:hardness}.

%% file: 111-THIS-PROJECT/acknowledgement.tex
The author is extremely grateful to Kai-Min Chung, Wei-Hsiang Hung and Tzu-Yi Yang for collaborating on \cite{CCHY} and for extensive discussions on \cite{BHV} and \cite{QFire}.

\paragraph{AI Acknowledgment}
The results in the paper were humanly generated directly by the author. ChatGPT-6 (Astra) was used to generate the initial drafts of Preliminaries and also the parts that recall the lemmas/results from previous works and their routine generalizations (such as the ones in appendix), and also used to draft the initial LaTeX code of some proofs from the author's handwriting. Afterwards, ChatGPT-6 (Astra) was also used to help with choice of some of the parameters, and verifying the correctness of the manuscript. The author takes full responsibilty for the full paper.

\paragraph{Funding acknowledgment.}
The author was supported in part by a Carnegie Mellon University CyLab Presidential
Fellowship and in part by the following grants of Vipul Goyal: NSF award 1916939, the
DARPA SIEVE program, a gift from Ripple, a U.S. Department of Energy NETL award, a J.P.
Morgan Faculty Fellowship, a PNC Center for Financial Services Innovation Award, and a
Carnegie Mellon University CyLab seed funding award.

%% file: sections/technical-overview.tex
\section{Technical Overview}
\label{sec:overview}

In this section, we give a technical overview of our work.

\paragraph{Overview Organization} In \cref{overview:defns}, we recall the model, the complexity classes we consider and our main theorem statement. Then in \cref{overview:construction}, we give an overview of the intuition for our separation oracles and in \cref{overview:oracles} we give an overview of our oracles. Finally in \cref{overview:qma} and in \cref{overview:proof}, we give an overview of our $\QMA$ upper bound and $\QCIP$ lower bound proofs, respectively.

The formal definitions, theorem statements and proofs can be found in
the relevant sections.

\paragraph{Notation.}
We write $x\leftarrow\mu$ for sampling from a distribution $\mu$,
and $x\leftarrow S$ for uniform sampling from a finite set $S$.
We write $\langle(P^*)^\ora,V^\ora\rangle$ for an interaction with oracle
$\ora$, and $\ACCEPT$ for the verifier's accepting decision. We refer the reader to the preliminaries for more detailed notation guide.

\input{sections/overview/model}

\input{sections/overview/oracle-design}

\input{sections/overview/oracles}

\input{sections/overview/qma-upper-bound}

\input{sections/overview/qcip-lower-bound}

%% file: sections/overview/model.tex
\subsection{Model and Our Main Theorem}
\label{overview:defns}

In this section, we recall the relevant complexity classes and the oracle setting we consider, and our main theorem statement. In particular, similar to some of the previous work \cite{Zha25,BHNZ,BHV}; our main theorem, instead of directly obtaining an oracle separation, will consider an \emph{oracle-input problem} with a distribution over the instances, which can be converted to an oracle separation through routine techniques (which we provide in \cref{app:diagonal} for completeness).

\paragraph{Proof systems.}
A proof system $(P,V)$ consists of an efficient quantum verifier $V$,
running in time $\poly(|x|)$, and a possibly unbounded prover $P$. When we consider the interaction of $V$ and some prover $P^*$ (not necessarily the honest one $P$), they receive a common input $x\in\bits^*$, and exchange messages, after
which $V$ outputs $\ACCEPT$ or $\REJECT$. We denote the verifier's (probabilistic)
decision by $\langle P^*,V\rangle(x)$.
\begin{itemize}
  \item For {QCIP} systems~\cite{BLW}, all messages are classical.
    The parties exchange polynomially many bits over an arbitrary
    polynomial number of rounds, and both may retain private quantum
    memory.\footnote{Since the prover are allowed to be unbounded and communication is classical; for both completeness and soundness it suffices to
    consider unbounded classical randomized provers with no private memory between rounds, that simply receive the
    message and full transcript so far and oracle access if applicable (see \cref{app:classical-provers}).}
    The case of one classical message from $P^*$ to $V$ is
    $\QCMA=\QCIP[1]$.
  \item For {QMA} systems~\cite{AK07,MW05}, $P^*$ sends a single
    $\poly(|x|)$-qubit message (\emph{witness}) to $V$, who then decides without further
    interaction. Thus, we can also write the proof system $(P,V)$ as $(\{\ket{\psi_x}\}_{x \in \zo^*}, V)$ and any prover $P^*$ as $\{\ket{\phi_x}\}_{x \in \zo^*}$.
\end{itemize}

For a promise problem $L = (L_{\YES}, L_{\NO})$ or a language $L \subseteq \zo^*$ (we write $L_{\YES} = L$ and $ L_{\NO} = \zo^* \setminus L_{\YES}$), we say that $(P,V)$ is a proof system for $L$ if it satisfies completeness and soundness:
\[
  \begin{aligned}
    x\in L_{\YES} \implies &\quad
      \Pr[\langle P,V\rangle(x)=\ACCEPT]\ge2/3,\\
     x\in L_{\NO}\implies &\quad \forall P^*~
      \Pr[\langle P^*,V\rangle(x)=\ACCEPT]\le1/3.
  \end{aligned}
\]

\paragraph{Oracle-input problems.}
In an \emph{oracle-input problem} (\cite{BBCMW01,Zha25}), the algorithms access the input as a quantumly-queriable oracle $\ora$ (a finite Boolean
table $\ora:\mathcal Z_\lambda\to\bits$), instead of receiving an input $x$ directly. We write $\lambda$ to denote the sequence parameter (i.e. the \emph{input length}) and in this case, resource bounds (such as the number of queries to this oracle or lengths of messages in protocols) are measured
in $\lambda$, rather than in the table's actual length.
For example, one usually takes $\mathcal Z_\lambda=\bits^{p(\lambda)}$
for a polynomial $p$, in which case the table has $2^{p(\lambda)}$ entries.

An oracle-input language specifies, for each $\lambda$, a set
\[
  L_\lambda\subseteq\{\ora:\mathcal Z_\lambda\to\bits\}
\]
of YES tables; all remaining tables are NO inputs. A promise problem
instead specifies disjoint sets $L_{\YES,\lambda}$ and
$L_{\NO,\lambda}$, with no requirement on tables outside their union.

The complexity classes in this model are defined as expected: For example, in QMA, the verifier receives a $\poly(\lambda)$-qubit
witness, which may depend on $\ora$: completeness requires a suitable
witness for each YES table, and soundness must hold for every witness
on each NO table. Similarly, in QCIP both parties have query access to
$\ora$ and interact on common input $1^\lambda$: completeness holds
for $P$ on each YES table, and soundness holds against every $P^*$ on
each NO table. We give the formal
definitions in \cref{def:oracle-input}.

\paragraph{Our main theorem.} In our main theorem (\Cref{thm:main}), which will be the focus of our technical overview, we construct two distributions $\YES_\lambda,\NO_\lambda$ over
oracle-inputs (with overlap in their supports) and
establish the following acceptance guarantees.
\begin{enumerate}
  \item \emph{QMA upper bound.}
    There is a polynomial-time quantum verifier $V$ and $a(\lambda) = \poly(\lambda)$ qubit witnesses $\{\ket{\psi_\ora}_{\ora}\}$ such that, for all sufficiently large $\lambda$,
    \[
      \begin{aligned}
        &\Pr_{\ora\leftarrow\YES_\lambda}
        \left[
          \Pr[V^\ora(1^\lambda,\ket{\psi_{\ora}})=\ACCEPT]\ge1-2^{-\lambda}\right]
          \ge1-2^{-\lambda},\\
        &\Pr_{\ora\leftarrow\NO_\lambda}
        \left[\forall\ket\psi:\
          \Pr[V^\ora(1^\lambda,\ket\psi)=\ACCEPT]\le2^{-\Omega(\lambda)}\right]
          \ge1-2^{-\Omega(\lambda)}.
      \end{aligned}
    \]

  \item \emph{QCIP lower bound.}
    For every QCIP protocol $(P',V')$, all sufficiently large $\lambda$,
    and every $0\le\varepsilon<1/3$, if
    \[
      \Pr_{\ora\leftarrow\YES_\lambda}
      \left[\Pr[\langle(P')^\ora,(V')^\ora\rangle(1^\lambda)=\ACCEPT]\ge2/3\right]
      \ge1-\varepsilon,
    \]
    then
    \[
      \Pr_{\ora\leftarrow\NO_\lambda}
      \left[\exists P^*:\
        \Pr[\langle(P^*)^\ora,(V')^\ora\rangle(1^\lambda)=\ACCEPT]>1/3\right]
      \geq 1/3-\varepsilon.
    \]
    Thus, if $(P',V')$ accepts with probability at least $2/3$ on
    a $1-\varepsilon$ fraction of oracles under $\YES_\lambda$, then,
    on at least a $1/3-\varepsilon$ fraction under $\NO_\lambda$,
    some $P^*$ makes $V'$ accept with probability greater than $1/3$.
\end{enumerate}

Using an elementary restriction argument and standard oracle diagonalization techniques, we can first obtain an oracle-input promise language in $\QMA$ but not in $\QCIP$, and then finally a standard oracle separation:
\begin{quote}
There exist a classical oracle $\ora:\bits^*\to\bits$ and a unary
language
$
  L^\ora\in\QMA^\ora\setminus\QCIP^\ora.
$
\end{quote}
For completeness, we give both transformations in \cref{app:diagonal}.

%% file: sections/overview/oracle-design.tex
\subsection{Oracle Intuition}
\label{overview:construction}

In this section, we give an overview of the intuition for our oracle distributions $\YES_\lambda, \NO_\lambda$.

Our high level starting point is the well-known result of Fortnow and
Sipser~\cite{FS88}, giving an oracle separation between $\mathsf{coNP}$ and $\mathsf{IP}$. In their separation, they consider the simple language
\[
  L^{\ora}=\{1^\lambda:\ora(z)=1\text{ for every }z\in\bits^\lambda\}.
\]
Observe that $L^{\ora}\in\mathsf{coNP}^{\ora}$: a single input $z$ such that 
$\ora(z) = 0$ certifies non-membership for $L$. However, to decide or even just to verify membership in $L$, a verifier must check that the every entry of
the exponentially large table $\ora$ is marked (which is not efficient), or a prover must somehow convince the verifier that this is the case. However, a cheating prover can always delibaretly choose values $z$ that are marked (i.e. $\ora(z) = 1$). Of course, formally proving that $L \not\in \mathsf{IP}$ takes some more effort than this. A little bit more formally; fix any prover strategy that succeeds on the all-one
table. If the verifier makes $T$ classical queries, then there is an entry
that it checks with probability at most $T/2^\lambda$. However, changing that entry to zero gives a NO table, yet following the same prover strategy \emph{decreases} (i.e. additively) acceptance probability by at most $T/2^\lambda$.

Unfortunately, it is easy to see that this language is also hard with a quantum witness (i.e. $L \not\in \QMA$), nevertheless it provides us with a useful and simple intuition for obtaining a language outside $\mathsf{IP}$: Our oracle should allow a cheating prover to have a large number of marked options to choose from at all times.

\paragraph{Incompressibility and equivocation.}
Now, to actually obtain a language that is in $\QMA$ while utilizing the above idea to preserve the $\mathsf{IP}$ lower bound (and eventually obtain a $\mathsf{QCIP}$ lower bound), we use the following idea that have also been utilized in many related works~\cite{QFire,CCHY,NN24,LLPY24,BDK24,Zha25,GGJL25,BHNZ,BHV}:
\begin{itemize}
  \item \emph{Classical Incompressibility and Quantum Equivocation.} We will use a search problem that satisfies an incompressibility property (\cite{QFire}), that is, no classical string can contain \emph{too many} solutions, whereas a quantum state can be \emph{equivocated} to solve any instance of the search problem. 
\end{itemize}
Then, as our oracles, we will have either an oracle that contains instance of this problem, or an oracle that essentially randomly includes each point, with a distribution that requires the verifier to ask for solutions on many instances. Then, through incompressibility, we will conclude that a classical message will not let the verifier obtain many solutions, thus making it either not complete, or not sound. Note that, as we discuss later on, classical interaction is significantly different from the incompressibility guarantee satisfied for a single classical string, thus, the $\mathsf{QCIP}$ lower bound proof will require significantly more work beyond the above intuition: Indeed, we handle this issue by extending the techniques of Cakan, Goyal and Shmueli \cite{QFire} for analyzing quantum machines that interact classically. Interestingly, this step of our proof, and similarly the analysis of \cite{QFire}, also relies on incompressibility.

%% file: sections/overview/oracles.tex
\subsection{The Oracles}
\label{overview:oracles}
In this section, we give an overview of our oracle distributions $\YES_\lambda, \NO_\lambda$.

For the search problem, we use the \emph{code-preimage problem} (or \emph{code-hashing problem}) of Yamakawa and
Zhandry~\cite{YZ24}, also utilized in the recent sequence of attemps \cite{LLPY24,BDK24} to separate QMA--QCMA with a classical oracle, and finally in the work of \cite{BHV} to who proved the separation using this search problem. In fact, we use the \emph{biased} version of the problem, introduced by \cite{GGJL25} and also used by \cite{BHV}.\footnote{The bias in~\cite[Sections 4.2--4.3]{GGJL25} allows stronger list recovery while preserving quantum decoding. Our structured marking below uses the same type of biased coordinate constraints.}

In more detail, let $C\subseteq\Sigma^q$ be a (public) linear code, let $H_i:\Sigma\to\bits$ (which we refer to as a hash function), and write $H=(H_1,\ldots,H_q)$ and $H(v)=(H_1(v_1),\ldots,H_q(v_q))$. In the Yamakawa--Zhandry code-preimage search problem, one is given an \emph{image} $y$, and the goal is to find a codeword $v\in C$ such that $H(v)=y$. As shown by~\cite{YZ24}, suitable error-correction properties of the dual code $C^\perp$ allow an efficient quantum algorithm to find such a solution for any fixed target $y\in\zo^q$ with overwhelming probability.\footnote{In the original Yamakawa--Zhandry setting, the algorithm has quantum oracle access to $H$, and this probability is over a uniformly random $H$ and the algorithm's measurements. We use the biased-hash decoding algorithm version proven in \cite{BHV}.}

Now we present our oracle distributions $\YES_\lambda$, $\NO_\lambda$ over oracles $\ora: \mathcal Z_\lambda \to \zo $, with parameters $p,\delta\in(0,1)$ which are specified later and $Z_\lambda=\bits^q\times C$.

\begin{protocol}
  \noindent\textbf{Oracle distributions $\YES_\lambda$ and $\NO_\lambda$.}
  \begin{itemize}
    \item \textbf{Code-preimage structure ($\YES_\lambda$):} Sample all entries of each $H_i$ independently from $\mathrm{Bernoulli}(p)$ and set
      \[
        \ora(y,v)=
        \begin{cases}
          1,&H(v)=y,\ y_{\lambda+1}=\cdots=y_q=0,\\
          0,&\text{otherwise}.
        \end{cases}
      \]
    \item \textbf{Independent marking ($\NO_\lambda$):}
      For every $(y,v)\in\mathcal Z_\lambda$, sample $\ora(y,v)$ independently
      from $\mathrm{Bernoulli}(\delta)$.
  \end{itemize}
\end{protocol}
For $(y,v)\in\mathcal Z_\lambda$, we call the pair a
  \emph{code-preimage pair} if $H(v)=y$, and \emph{marked} if $\ora(y,v)=1$. Note that the verifier has query access only to $\ora$, with no separate access to $H$.

\paragraph{Choosing the Parameters and the Witness}
To show that our language is in QMA, we use the same quantum witness as in~\cite{Liu23,BHV}, based on Liu's observation~\cite{Liu23} that a single quantum state (that depends on the hashes $H$) can be used to obtain code preimages without further access to $H$. We denote this QPT decoding algorithm by $\mathsf{CodePreimageDecode}$.
To show our QCIP lower bound, in the proof we choose our parameters so that, as explained before, a cheating prover will have sufficiently many choices marked by $\ora$ under $\NO_\lambda$ and will be able to fool any verifier. In particular, to satisfy both the QMA upper bound and the QCIP lower bound, we will choose the parameters $p,\delta$ as $p=\lambda^{-4}$ and
$\delta=2^{-\lceil\lambda/12\rceil}$ and we choose the univariate multiplicity code $C$ from \cite{BHV} over the alphabet $\Sigma=\mathbb{F}_q^\lambda$, where $q$ is prime and $\lambda^5<q\le 2\lambda^5$.
These choices remain fixed and implicit throughout the rest of the overview.

%% file: sections/overview/qma-upper-bound.tex
\subsection{The QMA Proof System}
\label{overview:qma}
In this section, we give an overview of the QMA upper bound part of our main theorem. 

\paragraph{The verifier and witnesses.}
For $\ora$ sampled under $\YES_\lambda$, we use the witness
\[
  \ket{\Psi_H}\propto
  \bigotimes_{i=1}^{q}\left[
    \left(\sum_{z:H_i(z)=0}\ket z\right)
    \otimes\left(\sum_{z:H_i(z)=1}\ket z\right)\right].
\]
Given $\ket\phi$, $V^\ora$ samples $u\leftarrow\bits^\lambda$
uniformly, sets $y=u\|0^{q-\lambda}$, and runs
\[
  v\leftarrow\mathsf{CodePreimageDecode}(\ket\phi,y).
\]
It rejects if $v=\bot$, otherwise, it queries $\ora(y,v)$ and accepts
if and only if the answer is $1$.

For our code $C$ and parameters $p,q$, it follows from~\cite{BHV} that with probability at least $1-2^{-\lambda}$ over the sampling under $\YES_\lambda$, for every $u\in\bits^\lambda$,
\[
  \Pr_{v\leftarrow\mathsf{CodePreimageDecode}(\ket{\Psi_H},u\|0^{q-\lambda})}
    [v\in C\text{ and }H(v)=u\|0^{q-\lambda}]
  \ge 1-2^{-\lambda}.
\]

Thus, the verifier accepts with probability at least $1 - 2^{-\lambda}$ on at least a $1 - 2^{-\lambda}$ fraction of oracles under $\YES_\lambda$.

\paragraph{Acceptance under the NO distribution.}
To bound acceptance for arbitrary witnesses, observe that $V$ samples $y$ after receiving the
$a$-qubit witness $\ket\phi$. Thus, unlike an interactive classical-message proof system, the prover essentially \emph{commits} to the quantum state it sends and cannot adaptively choose marked pairs to cheat.

In more detail, sample $\ora\leftarrow\NO_\lambda$.
The functions $\ora(u\|0^{q-\lambda},\cdot)$ are independent across
$u\in\bits^\lambda$. Hence, even though the witness depends on the
whole oracle $\ora$, its mutual information with the function
$\ora(u\|0^{q-\lambda},\cdot)$ for a uniformly chosen $u$ is at most
$a/2^\lambda$ on average.
However, if we consider a modified execution of the verifier $V$ where we
keep the original witness and decoded pair $(y,v)$ unchanged but replace
$\ora(y,\cdot)$ by an independent copy when evaluating $\ora(y,v)$,
it accepts with probability at most $\delta$.
By Pinsker's inequality, the actual execution's acceptance probability is
at most that of this modified execution plus $\sqrt{a/2^\lambda}$ on average.
Applying this to a maximizing witness
for each oracle gives
\[
  \E_{\ora\leftarrow\NO_\lambda}\max_{\ket\phi}
    \Pr[V^{\ora}(\ket\phi)=\ACCEPT]
    \le\delta+\sqrt{a/2^\lambda}.
\]
With $\gamma_{\NO}=(\delta+\sqrt{a/2^\lambda})^{1/2}$, Markov's
inequality shows that, on all but a $\gamma_{\NO}$ fraction of
oracles under $\NO_\lambda$, every witness is accepted with probability at most
$\gamma_{\NO}$ (\cref{qma:noprob}).

%% file: sections/overview/qcip-lower-bound.tex
\subsection{QCIP Lower Bound}
\label{overview:proof}
In this section, we give an overview of our QCIP lower bound. 

Suppose a QCIP protocol $(P_0,V_0)$ accepts
with probability at least $2/3$ on at least a $1-\varepsilon$ fraction
of oracles under $\YES_\lambda$. We show that, on at least a
$1/3-\varepsilon$ fraction of oracles under $\NO_\lambda$, some
cheating prover $P^*$ makes $V_0$ accept with probability greater than $1/3$
(for $0\le\varepsilon<1/3$ and all sufficiently large $\lambda$).

Our proof proceeds in three steps. For the rest of the overview, fix
$(P_0,V_0)$ satisfying the stated acceptance bound under $\YES_\lambda$,
and let $r(\lambda)$ be the number of rounds, $N(\lambda)$ be the number
of queries made by $V_0$ in each round, and $c(\lambda)$ be the total
transcript length.

\begin{enumerate}
  \item \emph{Classical simulation:} Using code-preimage incompressibility under $\YES_\lambda$ and independent-marking incompressibility under $\NO_\lambda$, we show that the quantum verifier $V_0$ can be simulated by a \emph{classical} verifier $V_1$ that makes polynomially many classical queries\footnote{Our simulator will take exponential time outside oracle calls, which is not an issue for our overall proof.} while preserving (up to an additive error of $1/100$) the acceptance probability against every prover $P^*$ on all but a negligible fraction of oracle inputs under each of the YES and NO distributions.
  \item \emph{Public coins and perfect completeness:} We next apply the
    transformations of Goldwasser--Sipser~\cite{GS} and
    F\"urer et al.~\cite{FGMSZ} to $(P_0,V_1)$, obtaining a public-coin protocol
    $(P_2,V_2)$ with deterministic $P_2$ such that $V_2$ makes all oracle queries after the interaction,\footnote{Although
    our case differs slightly from the original formulations, both
    transformations still apply; see \cref{app:combined-transformations}
    for details.} and satisfies the following guarantees:
    \begin{itemize}
      \item $(P_2,V_2)$ accepts with probability $1$ on at least a
        $1-\varepsilon-\negl(\lambda)$ fraction of oracles under
        $\YES_\lambda$.
      \item Except for a negligible fraction of oracles under
        $\NO_\lambda$, if some prover makes $V_2$ accept with probability
        greater than $1/3$, then some prover makes $V_0$ accept with
        probability greater than $1/3$ on the same oracle.
    \end{itemize}
  \item \emph{A cheating prover:} Let $N_2(\lambda)$ be the number
    of queries made by $V_2$. Again using incompressibility, we show that if $(P_2,V_2)$ accepts with
    probability $1$ on a set of probability at least $\delta^2$ under
    $\YES_\lambda$, then we can construct a cheating prover $\widetilde P$ (from $P_2$) such that
    \[
      \E_{\ora\leftarrow\NO_\lambda}
        \Pr[\langle\widetilde P^\ora,V_2^\ora\rangle=\ACCEPT]
        \ge1-(N_2+1)\delta.
    \]
\end{enumerate}

Combining these steps, we conclude that on at least a $1/3-\varepsilon$ fraction of
oracles under $\NO_\lambda$, some prover makes $V_0$ accept with
probability greater than $1/3$, as claimed.

\input{sections/overview/classical-simulation}

\input{sections/overview/cheating-prover}

%% file: sections/overview/classical-simulation.tex
\subsubsection{Classical Simulation}
\label{overview:simulation}
In this section, we given overview of our simulation lemma, which constructs a verifier $V_1$ that simulates $V_0$
using polynomially many classical queries. For either oracle
distribution, except for a negligible fraction of oracles $\ora$,
the simulation preserves the acceptance probability against every
prover $P^*$ up to an additive error of $1/100$:
\[
  \left|
    \Pr[\langle (P^*)^\ora,V_0^\ora\rangle=\ACCEPT]
    -\Pr[\langle (P^*)^\ora,V_1^\ora\rangle=\ACCEPT]
  \right|<1/100.
\]

Our simulator is inspired by a similar result of Cakan, Goyal, and
Shmueli~\cite{QFire}. Their method also relied on
incompressibility properties, however, (i) their result crucially relied on
both parties making polynomially many queries, whereas in our case $P^*$
will be arbitrary/inefficient and (ii) their result relied on also tailoring the \emph{sender} (corresponding to our prover) for the simulator, whereas our simulation holds for all provers $P^*$ simultaneously.

\paragraph{The simulating verifier, high level.} 
For clarity, in this overview we describe our simulator as a quantum algorithm with classical oracle queries; however, it can be converted to a (time-inefficient, query-efficient) fully classical algorithm in a straightforward way.

Our simulator essentially learns each query of $V_0$ one by one and creates an oracle-estimation database for these queries. To learn these queries, the simulator repeatedly creates query states of $V_0$ many times and measures the input register. Note that after the first $j$ queries are learned, the learning process for query $j+1$ uses the estimated databases to simulate the first $j$ queries, thus creating many copies of the $j+1$-th query state does not cost any oracle calls. Thanks to the incompressibility property, we know that classical messages will not contain too many solutions to our search problem, which allows us to bound the number of queries we need while creating the estimation databases.

We also rely on the fact that the messages $V_0$ receives are classical, which allows us to both invoke our incompressibility property (which only applies to classical strings), and also run $V_0$ many times (which would be impossible with quantum messages, even if we had some oracle estimation database). Finally, while $V_0$ might create some possibly unclonable internal state during interaction, we essentially show how to classicalize its description using our oracle-estimation databases.

\paragraph{The simulating verifier, more details.}  The simulator $V_1$ stores the transcript $\pi$ and databases $D_{i,j}$ of
pairs satisfying $\ora(y,v)=1$. Let $0 < \theta(\lambda) < 1$ be some inverse-polynomial parameter and let $t(\lambda), K_0(\lambda), t(\lambda)$ be some parameters satisfying $t(\lambda)=c(\lambda)+\lambda\leq O(\poly(\lambda))$,$\frac{2\lambda^8}{\theta(\lambda)}\leq K_0(\lambda)
\leq O(\poly(\lambda))$ (see \cref{sec:simulation} for the parameter choices).

\begin{protocol}
  \protocoltitle{$V_1^\ora(1^\lambda)$}
  Initialize $\pi=\varnothing$ and $D_{0,N}=\varnothing$.
  \protocolbody

  For each round $i$:
  \begin{enumerate}
    \item Receive the prover's message $m_i$ and set
      $D_{i,0}\gets D_{i-1,N}$.
    \item From $j=1$ to $j = N$; initialize $D_{i,j}\gets D_{i,j-1}$ and repeat the following trial until $K_0$ consecutive trials add nothing to $D_{i,j}$ and then move to next $j$:
      \begin{enumerate}[label*=\arabic*.,leftmargin=*]
        \item Prepare
          $\ket{\psi_i}\leftarrow\mathsf{PrepState}(\pi,(D_{s,k})_{s<i,\,k\in[N]})$.
          \item Simulate $V_0$ on $\ket{\psi_i}$ and $m_i$, stopping
          immediately before its $j$-th oracle query.
          During the simulation, the $k$-th ($k < j$) oracle query is answered using $D_{i,k}$ with the mapping $(y,v) \mapsto 1$ if $(y,v)\in D_{i,k}$ and $0$ otherwise. In particular, no queries to $\ora$ is made.
          Let $\ket{\psi_{i,j}}$ be the resulting state on query-input
          register $\mathsf X$, answer register $\mathsf A$, and
          workspace $\mathsf W$.
        \item Measure the query-input register $\mathsf X$ to obtain
          $z=(y,v)$. If $z\notin D_{i,j}$, query $b\gets\ora(z)$ and,
          if $b=1$, set $D_{i,j}\gets D_{i,j}\cup\{z\}$.
          \textbf{\textcolor{blue}{If $|D_{i,j}|=t$, output $\REJECT$ and terminate.}}
      \end{enumerate}
    \item Prepare $\ket{\psi_i}\leftarrow\mathsf{PrepState}(\pi,(D_{s,k})_{s<i,\,k\in[N]})$
      again and run $V_0$ from $\ket{\psi_i}$ on message $m_i$, using the saved databases $D_{i,j}$ (as above). 
      Measure its outgoing-message register
      $\mathsf M$ to obtain $m'_i$ and set
      $\pi\gets\pi\|(m_i,m'_i)$. Send $m'_i$, or output it as
      the decision in the final round.
  \end{enumerate}
\end{protocol}

The procedure $\mathsf{PrepState}(\pi,(D_{s,k})_{s<i,\,k\in[N]})$ prepares
$\ket{\psi_i}$, the simulated private state at the start of round $i$ conditioned on the transcript $\pi$: It essentially does a rejection-sampling of executions of $V_0$ until the same transcript $\pi$ is obtained, and crucially, it uses the databases $D_{s,k}$ to answer the queries of $V_0$ (as described above), so it does not make any real oracle calls.

\paragraph{Bounding the simulation error.}
We first analyze one query, $j$, in a fixed round $i$. We can then replace the
$N$ queries in that round by their simulation one at a time, and finally sum the errors
over the $r$ rounds.

Write the query-input state from step 2.2 (state for query $j$ in round $i$) as
\[
  \ket{\psi_{i,j}}=\sum_{z,o,w}\alpha_{z,o,w}
    \ket z_{\mathsf X}\ket o_{\mathsf A}\ket w_{\mathsf W},
  \qquad
  p_{i,j}(D_{i,j})=\sum_{\substack{z\notin D_{i,j}\\\ora(z)=1}}
    \sum_{o,w}|\alpha_{z,o,w}|^2,
\]
Note that the quantity $p_{i,j}(D_{i,j})$ is the total weight
on $1$-inputs (i.e. inputs such that $\ora(y,v)=1$) that are outside the database $D_{i,j}$.
Consider two cases:
\begin{enumerate}
  \item \emph{Case 1, $p_{i,j}(D_{i,j})>\theta$}: The total weight on unrecorded
    $1$-inputs is still high. Each trial has an independent probability greater than $\theta$ of adding to the database, thus, $K_0$ consecutive trials add nothing
    with probability at most $(1-\theta)^{K_0}$.
  \item \emph{Case 2, $p_{i,j}(D_{i,j})\le\theta$}: The total weight on unrecorded
    $1$-inputs is at most $\theta$. Therefore, answering zero outside $D_{i,j}$ versus using the actual oracle $\ora$ changes the post-query state by at most
    $2\sqrt\theta$ in distance.
\end{enumerate}
We first analyze the simulator without the cutoff rule written in blue font. First, note that reaching Case 2 means that the database-based simulation of a query is sufficiently close. Then, note that the inner loop does enter Case 2 eventually, except with probability at most
$(2^{\lambda^8+1}+1)(1-\theta)^{K_0} \leq 2^{-\lambda^8}$ \footnote{Since there are at most $2^{\lambda^8}$ possible pairs $(y,v)$.}, since the loop stops only after $K_0$
consecutive trials add nothing new.\footnote{The simulator cannot know
$p_{i,j}(D_{i,j})$ and which case we are in, so we use these consecutive trials without an
addition as its stopping rule.}

A hybrid argument then bounds the change in acceptance probability over
the all $r$ rounds (each of which with $N$ queries) by a total of
$
  rN\left(2\sqrt\theta+2^{-\lambda^8}\right),
$
which is at most $1/200$ for sufficiently large $\lambda$, by our parameter choices. Thus, the simulator without the cutoff is indeed a good approximator to $V_0$.

\paragraph{Adding the cutoff.}
Unfortunately, the uncapped simulator may make exponentially many queries.
We now analyze the actual simulator who stops when a database reaches size $t$, and thus makes at most 
$
  M(\lambda)=(rN+t)K_0\leq O(\poly(\lambda))
$
classical queries. We will show that, except for a
$2^{-\Omega(\lambda^2)}$ fraction of oracles under either distribution,
the probability of ever reaching the cutoff during the entire
interaction is at most $1/200$, for every prover $P^*$.

To prove the cutoff bound, we use two incompressibility properties
(\cref{thm:incompressibility,thm:independent-incompressibility}).
An \emph{encoder}, given the entire sampled oracle $\ora$, sends an
$s$-bit classical message $m$ to an oracle-free \emph{decoder}.
They may share randomness sampled independently of $\ora$, and the
encoder may choose $m$ after seeing it. The decoder tries to recover
$t$ distinct pairs $(y,v)$ satisfying $\ora(y,v)=1$.
\begin{itemize}
  \item Under $\YES_\lambda$, these pairs must solve the code-preimage
    problem. Code-preimage incompressibility bounds success by
    $2^s\delta^{4t}$ for $t\le2^\lambda$.
  \item Under $\NO_\lambda$, the oracle bits $\ora(y,v)$ for $v\in C$
    are independent.
    Independent-marking incompressibility bounds success by
    $2^s\delta^t$.
\end{itemize}

Now suppose an execution reaches the cutoff. Then consider the following encoder-decoder for the incompressibility property. We can encode its
$t$ database entries using only the transcript and the indices of
the trials that added them:
\[
  m=(\pi,I),\qquad
  s=c+t\lceil\log_2 M\rceil.
\]

To decode $m$, replay the simulator
using the messages in $\pi$ and the same estimation randomness\footnote{While we described our simulator as making quantum measurements, as mentioned before, this can be simulated classically.}.
For a tested pair outside the current database, answer one exactly
when the trial index belongs to $I$, and zero otherwise.
At the $t$-th addition, output the database. This recovers its
$t$ pairs satisfying $\ora(y,v)=1$ without any oracle queries.

For a fixed oracle $\ora$, let $g_\ora$ be the probability, over the
shared estimation randomness, that some $s$-bit message decodes to
$t$ pairs satisfying $\ora(y,v)=1$. Every execution reaching the cutoff
provides such a message, so $g_\ora$ bounds the cutoff probability
for every prover $P^*$. Applying incompressibility to this encoding
and then Markov's inequality gives
\[
  \E_{\ora\leftarrow\mathcal P}[g_\ora]\le2^s\delta^t,
  \qquad
  \Pr_{\ora\leftarrow\mathcal P}[g_\ora>1/200]
    \le200\cdot2^s\delta^t=2^{-\Omega(\lambda^2)},
\]
where $\mathcal P\in\{\YES_\lambda,\NO_\lambda\}$.
Consequently, outside this exceptional fraction of oracles, the cutoff
adds at most $1/200$ to the simulation error for every prover.
We provide the full proof in \cref{sec:simulation}.

%% file: sections/overview/cheating-prover.tex
\subsubsection{A Cheating Prover}
\label{overview:cheater}

In this section, we give an overview of the cheating prover
$\widetilde P$ we need for the third step of the proof.
We start with a public-coin protocol $(P_2, V_2)$ such that $P_2$ is deterministic and $V_2$ makes all of its $N_2$ queries after the interaction. Assuming 
\[
  \Pr_{\ora\leftarrow\YES_\lambda}
    [\Pr[\langle P_2^\ora,V_2^\ora\rangle=\ACCEPT]=1]\ge\delta^2,
\]
we construct $\widetilde P$ such that
\[
  \E_{\ora\leftarrow\NO_\lambda}
    \Pr[\langle\widetilde P^\ora,V_2^\ora\rangle=\ACCEPT]
    \ge1-(N_2+1)\delta.
\]
The main idea is to use incompressibility to limit how many code preimages
a classical history forces, while leaving sufficiently many alternatives of pairs marked in the actual oracle that the cheating prover can choose. For the rest of the overview, fix the actual oracle $\ora_{\NO}$ sampled from $\NO_\lambda$ that is used in the interaction of $\widetilde{P}$ and $V_2$.

\paragraph{What the prover keeps.}
 We consider candidate oracles under which $(P_2,V_2)$ accepts perfectly,
and let $\mu$ be the distribution which is $\YES_\lambda$ conditioned $(P_2,V_2)$ accepting perfectly.
The cheating prover keeps
\[
  \tau\quad\text{(the transcript so far)},\qquad
  E\subseteq\operatorname{supp}(\mu)\quad\text{(remaining candidate oracles)}.
\]
Initially, $\tau$ is empty and $E=\operatorname{supp}(\mu)$.
We will have the invariant that each candidate $\widehat\ora\in E$ is consistent with our replies, that is, running
$P_2^{\widehat\ora}$ on the random challenges in $\tau$ produces those replies (recall that the prover is deterministic).
The distribution $\mu$ is fixed independently of $\ora_{\NO}$ but we will modify $E$ as the interaction proceeds.

\paragraph{Choosing the next reply.}
On receiving challenge $r_j$, we try possible replies $m$ one by one until we find one that satisfies the conditions we describe below, and then we send it.
For each $m$, we first form
\[
  F_m=\{\widehat\ora\in E:P_2^{\widehat\ora}(\tau,r_j)=m\}.
\]
Thus, $F_m$ is the set of candidate remaining oracles under which $P_2$ would reply with $m$. In another way, these are the possibilities we can keep if we
choose this reply. 

For a set $T\subseteq\mathcal Z_\lambda$ of pairs $(y,v)$, define
\[
  F_m[T]=\{\widehat\ora\in F_m:
    \widehat\ora(y,v)=1\text{ for every }(y,v)\in T\}.
\]
That is, $F_m[T]$ consists of candidates that would send $m$ and mark
all pairs in $T$. Intuitively, keeping $F_m[T]$ is essentially a commitment for our cheating prover:
from then on, every candidate oracle we keep must mark these pairs.
Thus, we can make this commitment only if our actual oracle $\ora_{\NO}$ also
marks every pair in $T$, since the verifier might check those pairs and we will fail. 

To keep enough probability under $\mu$ when updating $E$, we use a polynomial
size bound $K(\lambda)$ and thresholds $\gamma_j(k)>0$ for round $j$
and $k$ pairs (see \cref{app:parameters,app:eligible}). We consider only
sets satisfying
\[
  |T|<K,\qquad \mu(F_m[T])\ge\gamma_j(|T|).
\]
Here the probability is over a candidate oracle sampled from the
fixed distribution $\mu$. We choose a largest such $T$, recomputing
this choice whenever $F_m$ changes. And if end up not being able to choose a suitable $T$, we move onto the next possible message.

\begin{protocol}
  \protocoltitle{$\widetilde P^{\ora_{\NO}}(r_j)$}
  \textbf{State:} $\tau,E$
  \protocolbody
  \begin{enumerate}
    \item For each reply $m$, in a fixed order:
      \begin{enumerate}[label*=\arabic*.]
        \item Initialize $F_m=\{\widehat\ora\in E:
          P_2^{\widehat\ora}(\tau,r_j)=m\}$.
        \item While there exists $T\subseteq\mathcal Z_\lambda$ with
          $|T|<K$ and $\mu(F_m[T])\ge\gamma_j(|T|)$:
          \begin{enumerate}[label*=\arabic*.]
            \item Choose such a $T$ of maximum cardinality.
            \item Query $\ora_{\NO}$ on all pairs in $T$ and retain only
              \[
                b\gets\bigwedge_{(y,v)\in T}\ora_{\NO}(y,v).
              \]
            \item If $b=1$, set
              \[
                E\gets F_m[T],\qquad \tau\gets\tau\|(r_j,m),
              \]
              send $m$, and return.
            \item If $b=0$, set $F_m\gets F_m\setminus F_m[T]$
              and repeat the inner loop.
          \end{enumerate}
        \item Discard $F_m$ and try the next $m$ with $(\tau,E)$ unchanged.
      \end{enumerate}
    \item If every reply fails, output $\bot$ and terminate.
  \end{enumerate}
\end{protocol}

Now we explain the intuition behind the rules in the cheating prover.

\paragraph{Why we remove candidates after a failed test.}
Observe that if $b=0$, every candidate in $F_m[T]$ marks a pair in $T$ that
$\ora_{\NO}$ leaves unmarked. We remove these candidates and try
again with the same reply $m$.
To see why, consider for example $V_2$ that sends $y$ and a public hash $h$, and expects a
bucket label $m$, and later asks for $v$ with $h(v)=m$ and
$\ora(y,v)=1$. One unmarked preimage does not necessarily rule out other
preimages in that bucket. Also note that failed tests change only $F_m$ and
each new reply starts afresh from $E$.

\paragraph{Why we choose a largest set $T$.}
Observe that if $V_2$ asks for two preimages and checks both, testing only one pair
could leave another required answer unchecked. Choosing a largest
$T$ prevents this: after a successful test, every pair marked by all
candidates in $E=F_m[T]$ belongs to $T$.

More precisely, using the incompressibility property and by our threshold, we get $|T|\le K-2$.
Maximality then gives the following bound for every nonempty set
$U$ of additional pairs outside $T$:
\[
  \Pr_{\widehat\ora\leftarrow\mu}
    [\forall(y,v)\in U,\ \widehat\ora(y,v)=1
      \mid\widehat\ora\in E]
    \le\delta^{2|U|}
  \qquad(|T\cup U|<K).
\]
Here we sample a candidate according to $\mu$, conditioned on its
belonging to the updated $E$. Taking $U=\{(y,v)\}$ shows that no
pair outside $T$ is marked by every candidate. The same bound holds
initially with $T=\varnothing$. 

\paragraph{Why the prover finds a reply.}
We now show that $\widetilde P$ finds a reply in every round, except
with probability at most $\delta$ over $\ora_{\NO}\leftarrow\NO_\lambda$
and the public coins.
To see how, fix the public random tape and a possible state $(\tau,E)$
before round $j$. Let $S$ be the pairs marked by every candidate in $E$.

We run the search with every test declared to fail, obtaining a fixed
sequence $T_1,T_2,\ldots$ without querying $\ora_{\NO}$. Observe that:
\begin{itemize}
  \item No candidate is removed twice, and our thresholds ensure that
    the removed candidates have total $\mu$-probability greater than
    $\mu(E)/2$.
  \item By the preceding bound for additional pairs, for every nonempty
    $U$ outside $S$, candidates removed by tests with $U\subseteq T_i$
    account for at most a $2\delta^{2|U|}$ fraction of this total probability.
\end{itemize}
In particular, the tests cannot all require the same additional pair
$(y,v)$. This is how the preceding incompressibility argument ensures
that we have different choices of pairs to test.

Recall that under $\NO_\lambda$, each pair is independently marked
with probability $\delta$. The two facts above let us apply
\cref{app:containment}: some tested $T_i\setminus S$ is entirely
marked, except with probability $2^{-2^{\Omega(\lambda)}}$.
If $(\tau,E)$ is reached, the pairs in $S$ have already been tested
and are marked, so this lets us send a reply.
We use this estimate without conditioning on reaching $(\tau,E)$.
A union bound over the $2^{\poly(\lambda)}$ possible states and
averaging over the public coins then give
\[
  \Pr_{\ora_{\NO}\leftarrow\NO_\lambda,\ \text{public coins}}
    [\widetilde P\text{ outputs }\bot\text{ in some round}]\le\delta
\]
(\cref{app:search-success}). The search includes all replies $m$,
including the bucket labels in our earlier example.

\paragraph{Why the verifier accepts.}
It remains to bound $V_2$'s rejection probability when every search
succeeds. Let $S$ be the pairs marked by every candidate in the final $E$.
We first run $V_2$'s final computation by answering one on $S$ and
zero elsewhere. By the bound for additional pairs and $N_2\delta^2<1$,
some candidate in $E$ agrees with all these answers. Perfect acceptance
for that candidate therefore makes this computation output $\ACCEPT$.

We now compare these answers with those of $\ora_{\NO}$, conditioned
on the public coins and the retained test bits $b$.
All pairs in $S$ are marked in $\ora_{\NO}$.
Recall that a failed test records only that not all pairs in $T$ are
marked. Conditioning on these failures cannot increase the marking
probability of a pair outside $S$ above $\delta$.
A union bound over the $N_2$ queries of the computation just described
therefore bounds the probability that any answer changes by $N_2\delta$
(\cref{lem:final-database-check}). Combining this with the probability
that some search fails, we obtain
\[
  \E_{\ora\leftarrow\NO_\lambda}
    \Pr[\langle\widetilde P^\ora,V_2^\ora\rangle=\REJECT]
    \le (N_2+1)\delta,
\]
as required.

%% file: sections/preliminaries.tex
\section{Preliminaries}
\label{sec:prelims}

In this section, we introduce our notation, oracle model, and proof
systems. We first define proof systems relative to a fixed oracle,
then consider problems in which the oracle table itself is the input.

\subsection{Notation and Conventions}
\label{sec:notation}

We write $[m]=\{1,\ldots,m\}$, $2^S$ for the power set of $S$, and
$\binom St$ for the collection of its $t$-element subsets.
We use $x\|y$ for concatenation and $\bot$ for failure. When the input
is an oracle table, $\lambda\ge1$ is the length of the unary input;
dependence on $\lambda$ is suppressed when clear from context.
We write $\poly(\cdot)$ for a polynomial bound. A nonnegative function
is negligible if it is eventually smaller than the inverse of every
positive polynomial. Logarithms are natural unless their base is
indicated.

\paragraph{Distributions and sampling.}
We write $x\leftarrow S$ for uniform sampling from a finite set $S$
and $x\leftarrow\mu$ for sampling from a distribution $\mu$, with
$\mu(z)=\Pr_{x\leftarrow\mu}[x=z]$ and
$\mu(E)=\Pr_{x\leftarrow\mu}[x\in E]$. For an event $E$ with
$\mu(E)>0$, $\mu\mid E$ denotes the conditional distribution, which
assigns probability $\mu(z)/\mu(E)$ to each $z\in E$ and zero otherwise.
We denote the support of $\mu$ by $\operatorname{supp}(\mu)=\{z:\mu(z)>0\}$.
When the meaning is clear, we also use a random variable to denote its
distribution. The distribution
$\mathrm{Bernoulli}(p)$ assigns probability $p$ to one and $1-p$ to zero.
To compare distributions $\mu,\nu$ on the same finite set, we use
their total-variation distance,
\[
  \TV(\mu,\nu)=\frac12\sum_z|\mu(z)-\nu(z)|.
\]
Total-variation distance cannot increase under a randomized algorithm;
for any $f$ taking values in $[0,1]$,
\[
  \bigl|\E_{\mu}f-\E_{\nu}f\bigr|\le\TV(\mu,\nu).
\]

\paragraph{Registers and states.}
We follow standard register and circuit conventions~\cite{MW05,JJUW11}.
A register is a collection of qubits, denoted by a sans-serif letter
such as $\mathsf X$. Juxtaposition denotes joint registers and tensor
products. A state is a positive semidefinite operator $\rho$ of trace
one. We identify a pure state $\ket\psi$ with $\ket\psi\!\bra\psi$.
We use register labels as subscripts or superscripts when needed and
write $\|\cdot\|_2$ for the Euclidean norm.
We compare quantum states $\rho,\sigma$ using trace distance,
$\tfrac12\|\rho-\sigma\|_1$, where
$\|A\|_1=\operatorname{Tr}\sqrt{A^\dagger A}$.
Trace distance cannot increase under a quantum algorithm and bounds
the difference in acceptance probabilities. For pure states it is at
most $\|\ket\psi-\ket\phi\|_2$.

\paragraph{Algorithms and interaction.}
We use the standard uniform quantum circuit model, allowing classical
control and computational-basis measurements~\cite{MW05,JJUW11}.
An algorithm may receive both classical and quantum inputs.
Verifier descriptions are fixed independently of the oracle.
We write $x\leftarrow A(a)$ for a classical output and
$\mathsf R\leftarrow A(a)$ when $A$ prepares its output in register
$\mathsf R$.

Interactive algorithms exchange messages and may retain private
memory between rounds. We write $V[i]$ for the verifier's computation
in round $i$, which receives its private state and the incoming
message and returns a new private state and an outgoing message.
The verifier starts in $\ket0$, uncorrelated with the prover.
Their classical messages form the transcript. We use fixed message lengths and a
fixed number of rounds, padding when necessary.

We purify each party's local computation while measuring outgoing
classical messages~\cite[Section~4.1]{AKN98}. Thus, for a fixed input,
oracle, and transcript of nonzero probability, its conditional private
state is pure (\cref{lem:classical-prover}). We suppress these
dependencies in ket notation when clear from context.

We write $\langle P^*,V\rangle(x)$ for the verifier's decision on common
input $x$. Acceptance and rejection are denoted by $\ACCEPT$ and
$\REJECT$ and encoded by one and zero. Unless otherwise specified,
acceptance probabilities are over the parties' randomness and
measurements, with the input and oracle fixed.

We reserve $P$ and its unstarred variants for designated provers, and use
$P^*$ and starred variants for arbitrary, possibly cheating provers.
Completeness is required for the designated prover, and soundness against
every prover.
We write $\widetilde P$ for the cheating prover constructed in our lower bound.
A protocol transformation may change both parties, taking $(P,V)$ to
$(P',V')$; we use $(P_i,V_i)$ for successive protocols.

\paragraph{Information measures.}
We measure mutual information $I(X:Y)$ and conditional mutual information
$I(X:Y\mid Z)$ in bits, representing classical random variables by
diagonal quantum states~\cite[Chapter 5]{Watrous}.

\subsection{Oracle Access}

We use the standard quantum oracle model~\cite[Section~III.4]{Watrous09}.
A classical oracle $\ora:\bits^*\to\bits$ is a Boolean function,
which a quantum algorithm accesses through the gate
\[
  U_\ora\ket z_{\mathsf X}\ket o_{\mathsf A}\ket w_{\mathsf W}
    =\ket z_{\mathsf X}\ket{o\oplus\ora(z)}_{\mathsf A}\ket w_{\mathsf W},
\]
where $\mathsf X$ is the query-input register, $\mathsf A$ is the
answer register, and $\mathsf W$ is the remaining workspace;
$z$ is the query string and $o\in\bits$ labels the answer register.
The gate $U_\ora$ acts linearly on superpositions, and each application counts
as one oracle query. A classical query specifies an address $z$ and
receives the bit $\ora(z)$. Thus polynomial running time implies a
polynomial query bound, whereas a query bound alone does not restrict
computation between queries.

We write $A^\ora$ for an algorithm with access to $\ora$ and specify
whether it may query classically or in superposition.
For a set $F$ of addresses, its membership oracle is
\[
  \ora_F(z)=
  \begin{cases}
    1,&z\in F,\\
    0,&z\notin F.
  \end{cases}
\]

\subsection{Proof Systems Relative to a Fixed Oracle}

We first fix an oracle $\ora:\bits^*\to\bits$ and define proof
systems for languages $L\subseteq\bits^*$, then give their promise
versions. We write $n=|x|$ for the input length. Throughout these
definitions, verifiers are uniform, run in polynomial time, and may
query $\ora$ in superposition. Provers are
unrestricted strategies with oracle access. In an interactive proof,
both parties receive the common input $x$ and retain private memory
between messages.

\begin{definition}[$\QMA^\ora$ and $\QCMA^\ora$~\cite{AK07,MW05}]
  \label{def:qma-qcma}
  A language $L$ is in $\QMA^\ora$ if there are a polynomial $a$ and
  a polynomial-time quantum verifier $V$ such that
  \[
    \begin{aligned}
      \forall x \in L:&\quad
      \exists\ket\psi\quad\Pr[V^\ora(x,\ket\psi)=\ACCEPT]\ge 2/3,\\
      \forall x \notin L:&\quad
      \forall\ket\psi\quad\Pr[V^\ora(x,\ket\psi)=\ACCEPT]\le 1/3,
    \end{aligned}
  \]
  where $\ket\psi$ ranges over all $a(n)$-qubit pure states.
  The witness may depend on $x$ and $\ora$.
  The class $\QCMA^\ora$ has the same definition with a classical witness
  $w \in \bits^{a(n)}$ in place of $\ket\psi$; its verifier remains quantum.
\end{definition}

Observe that pure witnesses suffice: for a fixed verifier, input,
and oracle, a pure state attains the maximum acceptance probability
by linearity and finite dimensionality.

\begin{definition}[$\QCIP^\ora$~\cite{BLW}]
  \label{def:qcip}
  A $\QCIP$ verifier is a polynomial-time quantum verifier that exchanges
  classical messages with one computationally unbounded classical prover.
  The verifier may retain private quantum memory between messages.
  The number of rounds and total communication are polynomial in $n$.
  A language $L$ is in $\QCIP^\ora$ if there is a protocol $(P,V)$
  with such a verifier and prover satisfying
  \[
    \begin{aligned}
      \forall x \in L:&\quad
      \Pr[\langle P^\ora,V^\ora\rangle(x)=\ACCEPT]\ge 2/3,\\
      \forall P^*\ \forall x \notin L:&\quad
      \Pr[\langle (P^*)^\ora,V^\ora\rangle(x)=\ACCEPT]\le 1/3.
    \end{aligned}
  \]
  We write $\QCIP[k]^\ora$ for the restriction to $k$ exchanged classical
  messages, excluding the verifier's final decision.
  In particular, $\QCIP[1]^\ora=\QCMA^\ora$.
\end{definition}

The following observation shows that allowing the prover private
quantum memory does not change the class. In particular, for both
completeness and soundness, it suffices to consider classical randomized
provers given the full transcript at each turn.

\begin{lemma}[Classical prover strategies]
  \label{lem:classical-prover}
  Let $P^*$ be a prover with arbitrary quantum memory, exchanging a
  bounded number of bounded-length classical messages with an initially
  uncorrelated quantum verifier $V$.
  Fix a common input $x$ and a classical oracle $\ora$.
  \begin{enumerate}
    \item The prover $P^*$ has a classical randomized replacement that
      receives $x$ and the full transcript $\pi$, has access to $\ora$,
      and retains no private state between replies. The same replacement
      preserves the transcript distribution and conditional verifier
      states against every verifier.
    \item Let $\mathsf V,\mathsf P$ be the original parties' private
      registers. Conditioned on a transcript $\pi$ of nonzero probability,
      their joint state satisfies
      \[
        \rho_\pi^{\mathsf V\mathsf P}
          =\rho_\pi^{\mathsf V}\otimes\rho_\pi^{\mathsf P}.
      \]
      Each marginal depends only on that party's strategy, $x$, $\ora$,
      and $\pi$.
    \item For fixed $V,x,\ora$, the maximum acceptance probability
      is attained by a deterministic reply table.
  \end{enumerate}
\end{lemma}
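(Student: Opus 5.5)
We will prove the three parts in order. The main tool is that all communication is classical: in the purified model, every measurement of an outgoing message acts on a single party's registers.

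For part 2 we induct on the number of messages exchanged. Initially the joint state is $\ket0_{\mathsf V}\otimes\ket{\phi_0}_{\mathsf P}$, which is a product state. Suppose that, conditioned on a partial transcript $\pi$ of nonzero probability, the state is $\rho_\pi^{\mathsf V}\otimes\rho_\pi^{\mathsf P}$. Consider a turn in which, say, the verifier speaks. It applies an isometry $W_\pi$ (depending on $x$, $\ora$ and $\pi$) to its own registers together with the message register $\mathsf M$, and then $\mathsf M$ is measured in the computational basis with outcome $m$. The unnormalized post-measurement state is
\[
  \bigl(\bra m_{\mathsf M}W_\pi\,\rho_\pi^{\mathsf V}\,W_\pi^\dagger\ket m_{\mathsf M}\bigr)\otimes\rho_\pi^{\mathsf P}.
\]
The first factor depends only on the verifier's strategy, $x$, $\ora$ and $\pi\|m$, and its trace is the probability of outcome $m$. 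Normalizing gives the product form for $\pi\|m$. The prover's turns are handled symmetrically, and oracle queries are unitaries local to the querying party. Hence each marginal depends only on that party's strategy, $x$, $\ora$ and the transcript.

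For part 1 we use part 2. Conditioned on $\pi$, the prover's register is in state $\rho_\pi^{\mathsf P}$, which is determined by $P^*$, $x$, $\ora$ and $\pi$ alone and is independent of the verifier. The next-message distribution is therefore
\[
  q(m\mid\pi)=\operatorname{Tr}\bigl[\bra m W^{P}_\pi\rho_\pi^{\mathsf P}W^{P\dagger}_\pi\ket m\bigr].
\]
The classical replacement, on input $(x,\pi)$, samples $m\sim q(\cdot\mid\pi)$. It has query access to $\ora$ and is unbounded, so it can compute $q$ by brute-force simulation, or simply treat $q$ as a table. It keeps no state. By induction over turns, the joint law of the transcript and the verifier's conditional state is unchanged against every verifier. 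The key points are that the probability of each next message factorizes as the trace of the speaking party's factor, and that the other party's marginal is unaffected.

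For part 3 we fix $V$, $x$ and $\ora$. By part 1 it suffices to optimize over classical randomized reply rules $q(\cdot\mid\pi)$. Because the number of rounds and the message lengths are bounded, the set of transcripts is finite. We then argue by backward induction. The acceptance probability conditioned on any history is affine in the reply distribution at each prover node, with the continuation fixed. So replacing $q(\cdot\mid\pi)$ by a point mass on a maximizing reply never decreases acceptance. Proceeding from the last prover node to the first yields a deterministic table attaining the maximum, and the maximum exists because the table space is finite.

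We expect the main obstacle to be the bookkeeping in part 2. It must be stated with pure conditional states, as in the purification convention, and transcripts of zero probability must be excluded. Also, the verifier's later isometries depend on $\pi$ only through classical control, which keeps the factorization intact. Once that is in place, parts 1 and 3 follow routinely.
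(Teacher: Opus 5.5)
Your proposal is correct. For parts 1 and 2 it is essentially the paper's argument. The paper also inducts over messages, showing that measuring the speaker's message register keeps the conditional joint state a product whose factors depend only on each party's own strategy. It then defines the memoryless replacement by sampling from the prover's conditional reply distribution $q(\cdot\mid\pi)$. One bookkeeping point the paper makes explicit: define $q(\cdot\mid\pi)$ arbitrarily when the prover's own factor along $\pi$ vanishes. Such $\pi$ have probability zero against every verifier, so the choice is irrelevant. With this, the replacement is well defined against every verifier simultaneously.

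Part 3 is where your route differs. You argue by backward induction, replacing the reply distribution at each prover node, from the last node to the first, by a point mass on a maximizer of the continuation value. The paper instead presamples the memoryless strategy. It draws $f(\pi)\leftarrow q_\pi$ independently for every transcript $\pi$ preceding a reply, and notes that this leaves the interaction's distribution unchanged, since each node is visited at most once. It then concludes that the acceptance probability equals $\E_{f\leftarrow\nu}a(f)\le\max_f a(f)$ over the finite set of reply tables.

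The paper's averaging step is a one-line convexity argument that never tracks continuation values. Your dynamic-programming argument needs one extra observation: the probability of reaching a node does not depend on the reply distribution at that node. This holds because the replacement is memoryless and, by part 2, the verifier's conditional state depends only on the transcript. In exchange, your route gives the familiar characterization of an optimal table through continuation values, not just the existence of a table at least as good as the average.
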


We give the proof in \cref{app:classical-provers}. Thus, we may describe
provers by their replies to classical transcripts. As in the standard
definition of interactive proofs,
$P$ is one strategy receiving the common input $x$
\cite[Definition A.1 and footnote 5]{Goldreich18IP}.
Requiring $V$ to be a classical probabilistic polynomial-time verifier
with classical oracle queries gives $\mathsf{IP}^\ora$.
The unrelativized classes are obtained by taking $\ora\equiv0$.

\paragraph{Promise problems.}
We use the standard promise-problem formulation~\cite{Goldreich06Promise,Watrous09}:
$L=(L_{\YES},L_{\NO})$ is a pair of disjoint subsets of $\bits^*$.
The preceding definitions extend to promise problems by requiring
completeness on $L_{\YES}$ and soundness on $L_{\NO}$.
No acceptance condition is imposed outside $L_{\YES}\cup L_{\NO}$.
The language case is recovered by taking
\[
  L_{\NO}=\bits^*\setminus L_{\YES}.
\]

\subsection{Oracle Inputs}

In this subsection, we define problems whose input is a finite table
accessed through oracle queries. Running time, number of queries, and
message lengths are measured as functions of a parameter $\lambda$,
rather than the table's length, following the query-model conventions
of~\cite[Section~1]{BBCMW01} and~\cite[Definition~4]{Zha25}.

\begin{definition}[Oracle inputs]
  \label{def:oracle-input}
  An algorithm receives $1^\lambda$ and query access to a finite table
  $\ora:\mathcal Z_\lambda\to\bits$.
  Query length, running time, and proof or message lengths are measured
  as functions of $\lambda$; query length is polynomial in $\lambda$.

  An oracle-input decision problem specifies a set
  \[
    L_\lambda\subseteq\{\ora:\mathcal Z_\lambda\to\bits\}
    \qquad\text{for each }\lambda.
  \]
  The task is to decide whether the supplied oracle $\ora$ belongs to $L_\lambda$.
  We apply the proof-system definitions above to these inputs, with
  common input $1^\lambda$ and access to $\ora$. Thus, all oracle
  quantifiers below range over tables $\mathcal Z_\lambda\to\bits$.

  \textbf{QMA.}
  A proof system has a polynomial-time quantum verifier $V$ and a
  polynomial witness bound $a$. The verifier may query $\ora$ in
  superposition and must satisfy, for every $\lambda$,
  \[
    \begin{aligned}
      \forall\ora\in L_\lambda:&\quad
      \exists\ket\psi\quad
      \Pr[V^\ora(1^\lambda,\ket\psi)=\ACCEPT]\ge 2/3,\\
      \forall\ora\notin L_\lambda:&\quad
      \forall\ket\psi\quad
      \Pr[V^\ora(1^\lambda,\ket\psi)=\ACCEPT]\le 1/3,
    \end{aligned}
  \]
  where $\ket\psi$ has $a(\lambda)$ qubits.
  Restricting the witness to a classical string gives $\QCMA$.

  \textbf{QCIP.}
  A proof system $(P,V)$ has a polynomial-time quantum verifier $V$
  with superposition access to $\ora$ and an unrestricted classical
  prover $P$ with oracle access such that, for every $\lambda$,
  \[
    \begin{aligned}
      \forall\ora\in L_\lambda:&\quad
      \Pr[\langle P^\ora,V^\ora\rangle(1^\lambda)=\ACCEPT]\ge 2/3,\\
      \forall P^*\ \forall\ora\notin L_\lambda:&\quad
      \Pr[\langle (P^*)^\ora,V^\ora\rangle(1^\lambda)=\ACCEPT]\le 1/3.
    \end{aligned}
  \]
  The parties exchange classical messages, and the verifier may retain
  private quantum memory. Its running time bounds the number of rounds
  and total communication. The prover may read the entire finite table $\ora$.

  \textbf{IP.}
  Requiring a classical probabilistic polynomial-time verifier with
  classical queries gives an $\mathsf{IP}$ proof system for oracle inputs.
\end{definition}

\paragraph{Promise problems.}
The promise version specifies disjoint sets of oracle tables
$L_{\YES,\lambda}$ and $L_{\NO,\lambda}$.
The preceding definition extends to this setting by requiring completeness
on $L_{\YES,\lambda}$ and soundness on $L_{\NO,\lambda}$, with no condition
outside their union.
The decision version is recovered by taking
\[
  L_{\YES,\lambda}=L_\lambda,
  \qquad
  L_{\NO,\lambda}
  =\{\ora:\mathcal Z_\lambda\to\bits\}\setminus L_\lambda.
\]

\paragraph{Distributional guarantees.}
We will also study distributions over oracle inputs. For a
distribution $\mathcal P$, we distinguish two probabilities:
\begin{itemize}
  \item the outer probability $\Pr_{\ora\leftarrow\mathcal P}[\cdot]$
    is the probability that an oracle sampled from $\mathcal P$
    satisfies the stated acceptance condition;
  \item the inner acceptance probability is over an execution with
    $\ora$ fixed.
\end{itemize}
The supports of the distributions need not be disjoint. In
\cref{app:diagonal}, we select disjoint promise sets to obtain
\cref{thm:oracle-input}, then construct the fixed oracle of
\cref{cor:oracle}.

For oracle inputs, we usually omit the common input
$1^\lambda$ from $\langle P^{\ora},V^\ora\rangle(1^\lambda)$.

%% file: sections/incompressibility.tex
\section{Codes and Incompressibility}
\label{sec:codes}
In this section, we provide the coding theory tools we use, and then we provide two incompressibility theorems, one for the \emph{code-preimage problem} \cite{YZ24} and one for \emph{randomly marked sets}.

\subsection{Codes and Code-Preimage Problem}
Let $C\subseteq\Sigma^q$ be a code and let
$H=(H_1,\ldots,H_q)$, where $H_i:\Sigma\to\bits$. We write
$H(v)=(H_i(v_i))_{i\in[q]}$.
For a target $y\in\bits^q$, the code-preimage problem asks for
$v\in C$ satisfying $H(v)=y$. We now present the following result about this problem, that follows from \cite{BHV}.

\begin{lemma}
  \label{lem:code}
  There exist an explicit code family $C\subseteq\Sigma^q$ with an injective $\mathbb F_q$-linear encoding computable
  in polynomial time and a uniform
  quantum polynomial-time algorithm $\mathsf{CodePreimageDecode}$ with
  the following properties for all sufficiently large $\lambda$.
  Here $\Sigma=\mathbb F_q^\lambda$, where $q$ is prime and
  $\lambda^5<q\le2\lambda^5$.
  \begin{enumerate}
    \item \textbf{Expansion.}
      For every $W \subseteq C$ with $|W|\le 2^\lambda$,
      \begin{equation}
        \label{eq:expansion}
        \abs{\{(i,v_i):i \in [q],\ v \in W\}} \ge \frac q2|W|.
      \end{equation}
    \item \textbf{Quantum decoding.}
      Sample $H_1,\ldots,H_q:\Sigma\to\bits$ by choosing all entries
      independently from $\mathrm{Bernoulli}(\lambda^{-4})$.
      Then, there is a pure state $\ket{\Psi_H}$ such that with probability at least $1-2^{-\lambda}$ over $H$, for every
      $u\in\bits^\lambda$, the output
      $v\leftarrow\mathsf{CodePreimageDecode}(\ket{\Psi_H},u\|0^{q-\lambda})$
      satisfies
      \begin{equation}
        \label{eq:bhvquantum}
        \Pr[v \in C\ \text{and}\ H(v) = u\|0^{q-\lambda}]
        \ge 1-2^{-\lambda}.
      \end{equation}
  \end{enumerate}
\end{lemma}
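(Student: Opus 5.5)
This lemma is essentially a repackaging of results from \cite{BHV}, so the plan is to instantiate their code and decoder and check that each stated property follows from a corresponding statement there. The family is the univariate multiplicity code used in \cite{BHV}. Fix a prime $q$ with $\lambda^5<q\le2\lambda^5$ (Bertrand's postulate gives one) and set $\Sigma=\mathbb F_q^\lambda$. A message is a polynomial $f\in\mathbb F_q[X]$ of degree below some $k$. It is encoded by evaluating, at each of the $q$ points $\alpha\in\mathbb F_q$, the vector of its first $\lambda$ (Hasse) derivatives at $\alpha$. This map is $\mathbb F_q$-linear and computable in polynomial time. It is injective as long as $k<\lambda q$, because a nonzero polynomial cannot vanish to multiplicity $\lambda$ at all $q$ points. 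The degree $k$ is taken to be exactly the one \cite{BHV} uses.

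\textbf{Expansion.} The count $|\{(i,v_i)\}|\ge\frac q2|W|$ says that the symbols of the codewords in $W$ are nearly all distinct. I would derive it from the pairwise agreement bound together with inclusion--exclusion. Two distinct codewords agree in at most $(k-1)/\lambda$ coordinates: if $f-g$ vanishes to multiplicity $\lambda$ at $a$ points, then $\lambda a\le\deg(f-g)$. Write $\ell$ for this agreement bound. Each $v\in W$ contributes $q$ pairs $(i,v_i)$. The number of pairs lost to collisions is at most $\sum_{\{v,w\}}\mathrm{agr}(v,w)\le\binom{|W|}{2}\ell$.

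This naive count is too lossy, since $|W|$ can be as large as $2^\lambda$. The fix is to argue per codeword. Order the elements of $W$ and call a coordinate $i$ of $v$ new if $v_i$ differs from $w_i$ for every earlier $w$. That simple argument still only yields a bound in terms of $|W|\ell$. So in practice I would cite the list-recovery or expansion statement of \cite{BHV} or \cite{GGJL25}, which is the form in which these authors state the property, and possibly combine it with a first-moment bound. This is the step where matching parameters matters most. I expect it to be the main obstacle if one tries to prove it from scratch; I will rely on the cited statement.

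\textbf{Quantum decoding.} I would directly invoke the biased-hash Yamakawa--Zhandry decoding theorem of \cite{BHV}, with Liu's state \cite{Liu23} as $\ket{\Psi_H}$. That state is the tensor product, over coordinates, of uniform superpositions over $H_i^{-1}(0)$ and $H_i^{-1}(1)$. Given a target $y$, the algorithm selects the appropriate factor at each coordinate. It then runs the dual-code error-correction (QFT) step, whose success rests on the decodability of $C^\perp$ under the Bernoulli$(\lambda^{-4})$ noise pattern. \cite{BHV} gives success probability $1-2^{-\Omega(\lambda)}$ per target for most $H$.

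To obtain the stated form, two adjustments may be needed. First, $2^{-\lambda}$ may have to be replaced by a smaller function via repetition. Because the state is one fixed copy, this is done by rescaling the internal parameter, for instance running the construction with a larger security parameter $\lambda'=c\lambda$ or decoding with extra slack. Second, the bound must hold for all $u\in\bits^\lambda$ simultaneously. Apply Markov's inequality to the per-$u$ failure probability over $H$, then take a union bound over the $2^\lambda$ values of $u$. This requires a per-$u$ failure of $2^{-3\lambda}$, which again comes from rescaling the parameters in \cite{BHV}.
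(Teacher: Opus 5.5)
Your overall structure matches the paper's. Both properties are imported from \cite{BHV}: expansion from their expander statement, and decoding from their biased-hash Yamakawa--Zhandry theorem run on Liu's state. The simultaneous guarantee over targets then comes from a union bound over the $2^\lambda$ padded targets $u\|0^{q-\lambda}$.

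\textbf{The gap is in how you obtain the exponents in the decoding part.} Write $e(H,u)$ for the decoder's failure probability on target $u\|0^{q-\lambda}$. You go through Markov's inequality on $\E_H e(H,u)$, which forces you to require $2^{-3\lambda}$ and then to obtain it by rescaling to $\lambda'=c\lambda$. That move is not available. The statement ties $\lambda^5<q\le2\lambda^5$, $\Sigma=\mathbb F_q^\lambda$ and the bias $\lambda^{-4}$ to the same $\lambda$ that indexes $u\in\bits^\lambda$ and the $2^{-\lambda}$ bounds. A rescaled code therefore proves a different lemma, and these exact parameters are used later, e.g.\ in the witness length $a(\lambda)$ and in the estimate $(1-\lambda^{-4})^{qt/2}$ of \cref{thm:incompressibility}.

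Without rescaling, the Markov route fails. The expected failure available at these parameters is only $O(2^{-2\lambda})$, which gives $\Pr_H[e(H,u)>2^{-\lambda}]=O(2^{-\lambda})$ per target. A union bound over $2^\lambda$ targets turns this into nothing.

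The paper instead uses the high-probability form of the bound (Theorem 5.22 and the proof of Corollary 5.23 in \cite{BHV}). For each fixed target, except with probability $2^{-2\lambda}$ over $H$, we have $e(H,u)\le16\cdot2^{-2\lambda}+2\cdot2^{-4\lambda}<2^{-\lambda}$. This error already includes the decoder preparing the code state $\ket C$ itself to trace distance $2^{-4\lambda}$, a step your description leaves implicit. The union bound over $u$ then gives exactly the claimed $1-2^{-\lambda}$ with no change of parameters.

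\textbf{On expansion,} you are right that pairwise agreement and greedy counting cannot handle $|W|$ up to $2^\lambda$. However, your suggestion to combine the citation with a first-moment bound is misdirected. The property must hold for every $W$ with $|W|\le2^\lambda$, so no averaging argument can produce it. The needed input is the deterministic statement that $C$ is a $(2^{\lambda+1},A)$-expander with $A>q/2$ in the bipartite graph joining each codeword $v$ to its coordinate--symbol pairs $(i,v_i)$ (\cite{BHV}, Theorem 4.16 with Corollaries 4.17--4.18). Since the neighbourhood of $W$ is exactly the set in \eqref{eq:expansion}, the bound follows immediately.
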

\begin{proof}
  \emph{Expansion.}
  We use the proof of Corollary 4.17 in~\cite{BHV}, together with
  Theorem 4.16 and Corollary 4.18 there: $C$ is a
  $(2^{\lambda+1},A)$-expander with $A>q/2$ in the graph that joins
  each codeword $v$ to its coordinate-symbol pairs $(i,v_i)$.
  The neighbors of $W$ are exactly the coordinate-symbol pairs in
  \eqref{eq:expansion}, so their number is at least $(q/2)|W|$.

  \emph{Quantum decoding.}
  We use the following state from \cite{Liu23} and \cite{BHV}:
  \[
    \ket{\Psi_H}\propto
      \bigotimes_{i=1}^q\left[
        \left(\sum_{z:H_i(z)=0}\ket z\right)\otimes
        \left(\sum_{z:H_i(z)=1}\ket z\right)\right],
    \qquad
    \ket C=\frac1{\sqrt{|C|}}\sum_{v\in C}\ket v.
  \]
  Since the code $C$ is public and has an efficient injective linear
  encoding, the decoder can prepare $\ket C$ in polynomial time,
  with trace-distance error at most $2^{-4\lambda}$.

  We then run the Yamakawa--Zhandry decoder \cite{YZ24} on
  $(\ket{\Psi_H},\ket C,u\|0^{q-\lambda})$, ignoring the padding.
  Write $e(H,u)$ for the probability that this algorithm does not
  output a codeword $v$ satisfying $H(v)=u\|0^{q-\lambda}$.
  For each fixed $u$, Theorem 5.22 and the proof of Corollary 5.23
  of~\cite{BHV} give
  \[
    \Pr_H[e(H,u)>16\cdot2^{-2\lambda}+2\cdot2^{-4\lambda}]
      \le2^{-2\lambda}.
  \]
  Here $e(H,u)$ includes the error in preparing $\ket C$.
  Since $16\cdot2^{-2\lambda}+2\cdot2^{-4\lambda}<2^{-\lambda}$
  for sufficiently large $\lambda$, a union bound over the $2^\lambda$
  targets gives \eqref{eq:bhvquantum}, as in~\cite[Corollary 5.24]{BHV}.
\end{proof}

\subsection{Incompressibility}
Similar to the incompressibility theorems in \cite{QFire}, and implicit in \cite{BHV}, we provide an incompressibility theorem for the code-preimage problem and for randomly marked sets.
\begin{theorem}[Code-preimage incompressibility]
  \label{thm:incompressibility}
  Let $C\subseteq\Sigma^q$ be as in \cref{lem:code}, and set
  $\delta=2^{-\lceil\lambda/12\rceil}$.
  Fix integers $s\ge0$ and $1\le t\le2^\lambda$.
  Let $\mathsf{Enc}$ and $\mathsf{Dec}$ be arbitrary classical algorithms,
  fixed before sampling $H$, with $\mathsf{Enc}$ outputting an
  $s$-bit message. They need not be efficient. Their only inputs
  are those specified below, and all randomness comes from the
  shared random tape $R$.

  \begin{protocol}
    \protocoltitle{$\mathsf{Expt}_{\mathrm{CP}}(\mathsf{Enc},\mathsf{Dec})$}
    \begin{enumerate}
      \item Sample $H_1,\ldots,H_q:\Sigma\to\bits$ with independent
        $\mathrm{Bernoulli}(\lambda^{-4})$ entries, and sample an
        independent shared random tape $R$.
      \item Run $m\gets\mathsf{Enc}(H,R)$.
      \item Run $T\gets\mathsf{Dec}(m,R)$, without oracle access.
      \item Output $1$ if $T$ consists of $t$ distinct pairs
        $(y,v)\in\bits^q\times C$ satisfying $H(v)=y$ for every
        $(y,v)\in T$. Otherwise, output $0$.
    \end{enumerate}
  \end{protocol}

  For all sufficiently large $\lambda$,
  \begin{equation}
    \label{eq:incompressibility-message}
    \Pr[\mathsf{Expt}_{\mathrm{CP}}(\mathsf{Enc},\mathsf{Dec})=1]
      \le2^s\delta^{4t}.
  \end{equation}
  Moreover, for every fixed $\mathsf{Dec}$ and $\varepsilon>0$,
  except with probability at most $2^s\delta^{4t}/\varepsilon$ over
  $H$, every encoder succeeds with probability at most $\varepsilon$
  over $R$.
\end{theorem}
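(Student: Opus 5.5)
We will reduce the compression bound to a bound for a single fixed candidate answer set. The tool for that bound is the expansion property \eqref{eq:expansion} of \cref{lem:code}.

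\textbf{Step 1: fix the randomness and union over messages.} Fix the shared tape $R$; it is independent of $H$. Since $\mathsf{Dec}$ is deterministic given $(m,R)$, each of the $2^s$ messages $m$ determines a fixed candidate set $T_{m,R}=\mathsf{Dec}(m,R)$, chosen independently of $H$. The experiment succeeds only if some $T_{m,R}$ is a valid set of $t$ distinct code-preimage pairs. A union bound over $m$ therefore gives
\[
  \Pr_H[\mathsf{Expt}_{\mathrm{CP}}=1\mid R]\le\sum_{m}\Pr_H[T_{m,R}\text{ valid}].
\]
It then suffices to show that every fixed set $T$ of $t$ distinct pairs $(y,v)$ with $v\in C$ is valid with probability at most $\delta^{4t}$ over $H$. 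Averaging over $R$ then yields \eqref{eq:incompressibility-message}.

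\textbf{Step 2: a fixed set is valid with probability at most $\delta^{4t}$.} Let $W\subseteq C$ be the set of codewords appearing in $T$. Validity forces $y=H(v)$ for each pair, so distinct pairs have distinct $v$. Hence $|W|=t\le 2^\lambda$. For validity, every coordinate-symbol pair $(i,\sigma)$ in the neighborhood $\{(i,v_i):v\in W\}$ must satisfy $H_i(\sigma)=y_i$ for the prescribed bit. If two codewords share a symbol $(i,\sigma)$ with conflicting prescribed bits, the probability is $0$. Otherwise these form at least $(q/2)t$ distinct independent constraints by \eqref{eq:expansion}. Each constraint holds with probability at most $\max(p,1-p)=1-\lambda^{-4}$. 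This gives the bound
\[
  (1-\lambda^{-4})^{qt/2}\le\exp(-qt/(2\lambda^4))\le\exp(-\lambda t/2),
\]
using $q>\lambda^5$. Since $\delta^{4t}=2^{-4t\lceil\lambda/12\rceil}\ge 2^{-t(\lambda/3+4)}$, the bound $e^{-\lambda t/2}=2^{-0.72\lambda t}$ is below $\delta^{4t}$ for large $\lambda$.

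\textbf{Step 3: the "moreover" clause.} Fix $\mathsf{Dec}$. For each $H$, let
\[
  g(H)=\Pr_R\bigl[\exists m:\ \mathsf{Dec}(m,R)\text{ valid for }H\bigr].
\]
This upper bounds the success probability of every encoder on $H$, because any encoder's success requires such an $m$ to exist. Steps 1 and 2 show $\E_H[g(H)]\le 2^s\delta^{4t}$. Markov's inequality then gives $\Pr_H[g(H)>\varepsilon]\le 2^s\delta^{4t}/\varepsilon$.

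\textbf{Main obstacle.} The main obstacle is the bookkeeping in Step 2. We must check that the pairs $(i,v_i)$ give genuinely independent events; they do, since they are distinct entries of the $H_i$. We must also check that repeated symbols shared across codewords are handled, either as contradictions or as a single constraint counted once, which is exactly what the expansion count measures. Beyond that, the constant comparison between $(1-\lambda^{-4})^{qt/2}$ and $\delta^{4t}$ needs $q/\lambda^4$ to dominate $\lambda$, which holds with room to spare.
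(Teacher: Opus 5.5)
Your proposal is correct and follows essentially the same route as the paper. Both arguments bound a fixed candidate set by $(1-\lambda^{-4})^{qt/2}\le\delta^{4t}$ using the expansion property, with probability zero for conflicting targets or conflicting symbol constraints, then take a union bound over the $2^s$ messages for each fixed tape $R$ and apply Markov's inequality for the uniform clause. The only cosmetic difference is that you apply Markov directly to $g(H)=\Pr_R[\exists m:\ \mathsf{Dec}(m,R)\text{ valid}]$, whereas the paper phrases the same quantity as the success probability of the encoder that outputs the first message that decodes successfully.
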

\begin{proof}
  Write $\mathcal X=\bits^q\times C$ and
  $A_H=\{(y,v)\in\mathcal X:H(v)=y\}$.
  First fix $T\in\binom{\mathcal X}t$. If two pairs in $T$ give
  different targets for the same codeword, then
  $\Pr_H[T\subseteq A_H]=0$. Otherwise, $T$ specifies $t$ distinct
  codewords, so \eqref{eq:expansion} gives
  \[
    J(T):=\{(i,v_i):i\in[q],\ (y,v)\in T\},
    \qquad |J(T)|\ge\frac q2t.
  \]
  Each $(y,v)\in T$ requires $H_i(v_i)=y_i$ for every $i\in[q]$.
  If these requirements conflict at any entry, their joint probability
  is zero. Otherwise, the $|J(T)|$ entries are independent, and each
  prescribed value occurs with probability at most $1-\lambda^{-4}$.
  Thus, since $q>\lambda^5$, for all sufficiently large $\lambda$,
  \begin{equation}
    \label{eq:incompressibility-fixed-set}
    \Pr_H[T\subseteq A_H]
      \le(1-\lambda^{-4})^{qt/2}
      \le2^{-\lambda t/(2\ln2)}
      \le\delta^{4t}.
  \end{equation}

  For fixed $R$ and $m$, the decoder's output is fixed.
  Success requires that some $s$-bit message decode to $t$ pairs in $A_H$.
  Applying \eqref{eq:incompressibility-fixed-set} and taking a union
  bound over messages,
  \[
    \begin{aligned}
      \Pr[\mathsf{Expt}_{\mathrm{CP}}(\mathsf{Enc},\mathsf{Dec})=1]
        &\le\sum_{m\in\bits^s}\E_R
          \Pr_H[\mathsf{Dec}(m,R)\in\tbinom{A_H}t]\\
        &\le2^s\delta^{4t}.
    \end{aligned}
  \]
  To obtain one exceptional set for all encoders, take
  $\mathsf{Enc}^*$ to output the first message $m$ satisfying
  $\mathsf{Dec}(m,R)\in\binom{A_H}t$, or $0^s$ if none exists.
  Its success probability
  for each fixed $H$ is at least that of every other encoder.
  Markov's inequality bounds the probability over $H$ that this
  conditional success probability exceeds $\varepsilon$ by
  $2^s\delta^{4t}/\varepsilon$.
\end{proof}

\begin{theorem}[Independent-marking incompressibility]
  \label{thm:independent-incompressibility}
  Let $\mathcal X$ be a finite set and $0<\delta<1$.
  Fix integers $s\ge0$ and $1\le t\le|\mathcal X|$.
  Let $\mathsf{Enc}$ and $\mathsf{Dec}$ be arbitrary classical algorithms,
  fixed before sampling $\ora$, with $\mathsf{Enc}$ outputting an
  $s$-bit message. They need not be efficient. Their only inputs
  are those specified below, and all randomness comes from the
  shared random tape $R$.

  \begin{protocol}
    \protocoltitle{$\mathsf{Expt}_{\mathrm{Mark}}(\mathsf{Enc},\mathsf{Dec})$}
    \begin{enumerate}
      \item Sample $\ora:\mathcal X\to\bits$ with independent
        $\mathrm{Bernoulli}(\delta)$ entries, and sample an
        independent shared random tape $R$.
      \item Run $m\gets\mathsf{Enc}(\ora,R)$.
      \item Run $T\gets\mathsf{Dec}(m,R)$, without oracle access.
      \item Output $1$ if $T$ consists of $t$ distinct elements of
        $\mathcal X$ satisfying $\ora(z)=1$ for every $z\in T$.
        Otherwise, output $0$.
    \end{enumerate}
  \end{protocol}

  Then
  \begin{equation}
    \label{eq:independent-incompressibility}
    \Pr[\mathsf{Expt}_{\mathrm{Mark}}(\mathsf{Enc},\mathsf{Dec})=1]
      \le2^s\delta^t.
  \end{equation}
  Moreover, for every fixed $\mathsf{Dec}$ and $\varepsilon>0$,
  except with probability at most $2^s\delta^t/\varepsilon$ over
  $\ora$, every encoder succeeds with probability at most
  $\varepsilon$ over $R$.
\end{theorem}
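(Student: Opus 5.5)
The argument follows the proof of \cref{thm:incompressibility} step for step, but it is simpler: no expansion property is needed, because the entries of $\ora$ are themselves independent. Write $A_\ora=\{z\in\mathcal X:\ora(z)=1\}$ for the set of marked elements. The plan has three steps: bound the probability that a fixed $t$-set is fully marked, take a union bound over messages, and then obtain the ``moreover'' clause from an optimal encoder and Markov's inequality.

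\emph{Fixed sets.} Fix any $T\in\binom{\mathcal X}t$. Its $t$ entries $\ora(z)$, $z\in T$, are independent $\mathrm{Bernoulli}(\delta)$ variables, so
\[
  \Pr_\ora[T\subseteq A_\ora]=\delta^t.
\]
If the decoder outputs anything that is not a set of $t$ distinct elements, the experiment outputs $0$ automatically.

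\emph{Union bound over messages.} The tape $R$ is independent of $\ora$, and $\mathsf{Dec}$ never sees $\ora$. Hence for each fixed pair $(m,R)$ the output $\mathsf{Dec}(m,R)$ is a fixed object, independent of $\ora$. Success of the experiment implies that some $m\in\bits^s$ satisfies $\mathsf{Dec}(m,R)\in\binom{A_\ora}t$. Therefore
\[
  \Pr[\mathsf{Expt}_{\mathrm{Mark}}=1]
  \le\sum_{m\in\bits^s}\E_R\Pr_\ora\bigl[\mathsf{Dec}(m,R)\in\tbinom{A_\ora}t\bigr]
  \le2^s\delta^t,
\]
which is \eqref{eq:independent-incompressibility}. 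The only point needing care here is that we condition on $R$ before applying the fixed-set bound; this is legitimate because $R$ is sampled independently of $\ora$.

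\emph{The ``moreover'' clause.} Fix $\mathsf{Dec}$ and define $\mathsf{Enc}^*(\ora,R)$ to output the first $m$ with $\mathsf{Dec}(m,R)\in\binom{A_\ora}t$, or $0^s$ if no such $m$ exists. For every fixed $\ora$ and every $R$, $\mathsf{Enc}^*$ succeeds whenever any encoder does. So its success probability over $R$, call it $g(\ora)$, dominates that of every encoder. By the first part, $\E_\ora[g(\ora)]\le2^s\delta^t$. Markov's inequality then gives $\Pr_\ora[g(\ora)>\varepsilon]\le2^s\delta^t/\varepsilon$, and outside this exceptional set every encoder succeeds with probability at most $\varepsilon$ over $R$.

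I do not expect a real obstacle. The only subtle point is that the bound must cover adversarial encoders that see all of $\ora$; this is handled by taking the union bound over messages rather than analyzing the encoder's behavior.
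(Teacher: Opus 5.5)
Your proposal is correct and follows the paper's proof essentially verbatim: the fixed-set bound $\Pr_\ora[T\subseteq A_\ora]=\delta^t$, a union bound over the $2^s$ messages for fixed $R$, and the ``first successful message'' encoder $\mathsf{Enc}^*$ combined with Markov's inequality for the uniform exceptional set.
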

\begin{proof}
  Write $A_\ora=\{z\in\mathcal X:\ora(z)=1\}$.
  For fixed $R,m$, a decoder output $T\in\binom{\mathcal X}t$ satisfies
  \[
    \Pr_\ora[T\subseteq A_\ora]=\prod_{z\in T}\Pr[\ora(z)=1]=\delta^t.
  \]
  A union bound over the $2^s$ messages gives
  \eqref{eq:independent-incompressibility}. As in the preceding proof,
  take the encoder that outputs the first message decoding to $t$
  distinct marked elements, or $0^s$ if none exists. Applying Markov's inequality
  to its success probability conditional on $\ora$ proves the bound
  $2^s\delta^t/\varepsilon$, simultaneously for all encoders.
\end{proof}

%% file: sections/interactive-proof-tools.tex
\section{Public Coins and Perfect Completeness}
\label{sec:protocol-transformations}

In this section, we observe that the public-coin transformation of
Goldwasser and Sipser~\cite{GS} and the perfect-completeness
transformation of F\"urer et al.~\cite{FGMSZ} apply to verifiers with
polynomially many queries and unrestricted local computation.
We state their combined guarantee after specifying the query model.

\begin{definition}[Verifiers in the query model]
  \label{def:query-model}
  On common input $1^\lambda$, a verifier in the query model exchanges
  classical messages with an unbounded classical prover. Both parties
  have access to an oracle input $\ora$. The prover has unrestricted
  oracle access; the number of rounds, total communication, and
  number and length of the verifier's queries are polynomial in $\lambda$.
  The verifier's local computation and workspace are unrestricted.
  We also allow both parties classical advice of arbitrary length,
  fixed independently of the oracle input.
  \begin{enumerate}
    \item A \emph{classical} verifier makes classical queries and uses
      polynomially many uniform random bits.
    \item A \emph{quantum} verifier may make quantum queries and
      uses the circuit model of \cref{sec:prelims}, without a
      polynomial bound on circuit size.
  \end{enumerate}
  We call a classical verifier \emph{public-coin} if its messages
  before the final decision are fresh uniform random strings and its
  final computation is deterministic given the transcript and oracle.
  We fix round counts and message lengths by padding.
\end{definition}

The two transformations give the following lemma. We verify their
query bounds in \cref{app:combined-transformations}.

\begin{lemma}[Public coins and perfect completeness]
  \label{lem:public-coin-perfect}
  Let $(P,V)$ be a protocol in the classical query model of
  \cref{def:query-model}. There is a public-coin protocol $(P',V')$
  in the same model, where $P'$ is deterministic and $V'$ depends
  only on $V$ and the polynomial bounds in \cref{def:query-model}.
  The verifier $V'$ makes all oracle queries after the last prover reply.
  For every oracle input $\ora$:
  \begin{enumerate}
    \item \textbf{Designated prover.}
      If $P^\ora$ makes $V^\ora$ accept with probability at least
      $2/3-1/100$, then $(P')^\ora$ makes $(V')^\ora$ accept with
      probability $1$.
    \item \textbf{Arbitrary provers.}
      If every prover $P^*$ makes $V^\ora$ accept with probability at most
      $1/3+1/100$, then every prover $P^*$ makes $(V')^\ora$ accept with
      probability at most $1/3$.
  \end{enumerate}
\end{lemma}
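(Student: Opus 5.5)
The plan is to compose three standard steps: amplify the gap, apply the Goldwasser--Sipser public-coin transformation~\cite{GS}, and then apply the F\"urer et al.\ perfect-completeness transformation~\cite{FGMSZ}. Along the way we check that each step only needs polynomially many oracle queries from the verifier, and that each step treats the oracle as a fixed input. Since $\ora$ is fixed in both guarantees and the verifier's local computation is unrestricted, we can regard $V^\ora$ as a randomized verifier with no oracle at all. The transformed verifiers will only ever run $V$ on transcripts, and they never need to inspect the code of $V$. The one exception is the final decision, which requires evaluating $V$'s query-dependent acceptance predicate; we implement this by simulating $V$ and making its queries, postponed to the end. The prover has unbounded power and full access to $\ora$, so every prover-side computation in the transformations (finding sets, hash preimages, and so on) is allowed.

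\emph{Step 1 (amplification).} First run $k=O(\log\lambda)$, or even $O(\lambda)$, sequential repetitions of $(P,V)$ and accept by majority. This turns the gap $2/3-1/100$ versus $1/3+1/100$ into completeness $1-2^{-\Omega(k)}$ and soundness $2^{-\Omega(k)}$. Sequential repetition is sound against arbitrary provers, because we use the classical-prover reduction of \cref{lem:classical-prover}. The round count, communication, and number of queries each grow by a factor of $k$, so they stay polynomial.

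\emph{Step 2 (public coins).} Apply Goldwasser--Sipser to this amplified protocol. In each round, the prover claims the approximate size of the set of random tapes consistent with the transcript extended by each next message. These claims are checked through pairwise-independent hashing lower-bound protocols, and the final step asks the prover to exhibit a random tape $\rho$ of $V$ consistent with the transcript and accepting. The key observation is that consistency of $\rho$ with the transcript and acceptance both depend on $\ora$ only through the queries $V$ makes on tape $\rho$, which are polynomially many. Hence $V'$ verifies the final claim by simulating $V$ on $\rho$ after the interaction, and all queries land at the end. The completeness and soundness losses of GS are at most a constant-factor or polynomial-factor blowup of the soundness error; the amplification in Step 1 absorbs this.

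\emph{Step 3 (perfect completeness).} Apply the F\"urer et al.\ shift argument. The prover first sends shifts $s_1,\dots,s_\ell$ of the public coin string. The verifier then accepts if any of the shifted executions $r\oplus s_i$ accepts, which again requires only polynomially many final-stage simulations of $V$'s predicate and is therefore query-efficient. Completeness becomes $1$ by the usual union-bound/probabilistic argument over shifts, since the failure probability is below $1/\ell$ after amplification. Soundness grows by at most a factor of $\ell$, which keeps it below $1/3$. Finally, we make $P'$ deterministic by fixing an optimal reply table via \cref{lem:classical-prover}(3); a deterministic prover achieving acceptance probability $1$ exists because the maximum is attained.

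I expect the main obstacle to be Step 2. Goldwasser--Sipser is usually stated for verifiers whose set of accepting coins is defined without oracle dependence, and its round-by-round set-size estimates must be carried out relative to the fixed $\ora$. I would handle this by writing down the sets $S_\tau=\{\rho : V^\ora(\rho)\text{ consistent with }\tau\}$ explicitly and checking that the GS analysis uses only their cardinalities. Membership in these sets is then checked only once, at the end, by query-bounded simulation. The remaining concern is the dependence of $V'$ only on $V$ and the polynomial bounds: the hash families, the number of repetitions, and the number of shifts are all set from those bounds, and none depends on $\ora$.
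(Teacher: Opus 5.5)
Your route is the paper's: amplify, apply Goldwasser--Sipser, then apply the F\"urer et al.\ shift construction. As in the paper, the oracle enters only through a simulated run of the original verifier on a claimed tape and transcript, deferred to after the last reply, and $P'$ is made deterministic by fixing an optimal reply table. The problem is in how you budget the errors for Step 3.

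The shift step is sound only if the public-coin protocol fed into it has soundness error $\epsilon<1/(3\ell)$. Here $\ell$ is the number of public coins of \emph{that} protocol, i.e.\ of the Goldwasser--Sipser output. You justify this by the amplification in Step 1, but that amplification acts on the private-coin protocol before GS, and two things go wrong.

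\emph{First, the small error is not inherited.} GS does not carry the input's soundness error to its output up to a polynomial factor, as you assert. Its output error is governed by its hashing and set-size approximation parameters, and those losses accrue every round. Your sequential repetition also multiplies the number of rounds by $k$. Used as a black box with its own internal round-preserving amplification, as the paper uses it, GS only guarantees fixed bounds such as $2/3$ versus $1/3$.

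\emph{Second, the bound is circular.} $\ell$ is determined only after GS and itself grows polynomially with your $k$. So ``the failure probability is below $1/\ell$ after amplification'' does not follow from Step 1 as stated.

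The fix is the one the paper uses in \cref{lem:perfect-completeness}. After GS, amplify the \emph{public-coin} protocol by sequential repetition with majority decision; it stays public-coin and its queries can still be postponed. Then $\epsilon$ decays exponentially in the number of repetitions while $\ell$ grows only linearly, so polynomially many repetitions give $\ell\epsilon<1/3$. With this, the paper needs only a constant number of repetitions before GS, to turn $2/3-1/100$ versus $1/3+1/100$ into $2/3$ versus $1/3$.
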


%% file: sections/oracle-problem.tex
\section{The Distributional Separation}
\label{sec:oracle}

In this section, we define the oracle distributions $\YES_\lambda$
and $\NO_\lambda$ and state the distributional separation. We prove
the QMA upper bound in \cref{sec:qma} and the QCIP lower bound in
\cref{sec:hardness}, then obtain the promise and fixed-oracle
separations in \cref{app:diagonal}.

\begin{definition}[YES and NO oracle distributions]
  \label{def:oracle-distributions}
  Let $C\subseteq\Sigma^q$ be the code from \cref{lem:code}. Both
  distributions are on Boolean oracles
  \begin{equation}
    \label{eq:oracle}
    \ora:\mathcal Z_\lambda\to\bits,\qquad
    \mathcal Z_\lambda=\bits^q\times C.
  \end{equation}
  Set
  \begin{equation}
    \label{eq:delta}
    p=\lambda^{-4},\qquad
    \delta=2^{-\lceil\lambda/12\rceil}.
  \end{equation}
  Both distributions sample every entry of
  $H_1,\ldots,H_q:\Sigma\to\bits$ independently from
  $\mathrm{Bernoulli}(p)$. Write
  $H(v)=(H_1(v_1),\ldots,H_q(v_q))$ and define or sample $\ora$ as follows:
  \begin{enumerate}
    \item \textbf{Code-preimage structure ($\YES_\lambda$).}
      Define
      \[
        \ora(y,v)=
        \begin{cases}
          1,&H(v)=y,\ y_{\lambda+1}=\cdots=y_q=0,\\
          0,&\text{otherwise}.
        \end{cases}
      \]
    \item \textbf{Independent marking ($\NO_\lambda$).}
      For every $(y,v)\in\mathcal Z_\lambda$, sample $\ora(y,v)$ independently
      from $\mathrm{Bernoulli}(\delta)$, independently of $H$.
      The sampled hash functions $H$ do not affect $\ora$ in this case.
  \end{enumerate}
  A pair $(y,v)\in\mathcal Z_\lambda$ is a \emph{code-preimage pair}
  if $H(v)=y$, and is \emph{marked} if $\ora(y,v)=1$.
  The verifier has query access to $\ora$, but not to $H$.
\end{definition}

In the following theorem, the outer probability is over the oracle
$\ora$, and the inner probability is over an execution with $\ora$
fixed (\cref{sec:prelims}).

\begin{theorem}[Distributional separation]
  \label{thm:main}
  Let $\YES_\lambda,\NO_\lambda$ be the distributions of
  \cref{def:oracle-distributions}.

  \textbf{QMA upper bound.}
  There is a uniform polynomial-time quantum verifier $V$
  (constructed in \cref{sec:qma-verifier}) that makes at most one
  classical oracle query.
  All witnesses below are $a(\lambda)$-qubit pure states, where
  $a(\lambda)=2q\lceil\log_2|\Sigma|\rceil=O(\lambda^6\log\lambda)$.
  For all sufficiently large $\lambda$,
  \begin{align*}
    \Pr_{\ora \leftarrow \YES_\lambda}
    \left[\exists\ket\psi:\
      \Pr[V^\ora(1^\lambda,\ket\psi) = \ACCEPT] \ge 1-2^{-\lambda}\right]
    &\ge 1-\gamma_{\YES},\\
    \Pr_{\ora \leftarrow \NO_\lambda}
    \left[\forall\ket\psi:\
      \Pr[V^\ora(1^\lambda,\ket\psi) = \ACCEPT] \le \gamma_{\NO}\right]
    &\ge 1-\gamma_{\NO},
  \end{align*}
  where $\gamma_{\YES}=2^{-\lambda}$ and
  \[
    \gamma_{\NO}=\left(\delta+
      \sqrt{\frac{a(\lambda)}{2^\lambda}}\right)^{1/2}
      =2^{-\Omega(\lambda)}.
  \]

  \textbf{QCIP lower bound.}
  For every QCIP protocol $(P',V')$, all sufficiently large $\lambda$, and
  every $0\le\varepsilon<1/3$, the following holds.
  If
  \begin{equation}
    \label{eq:main-completeness}
    \Pr_{\ora\leftarrow\YES_\lambda}
      \left[\Pr[\langle(P')^\ora,(V')^\ora\rangle=\ACCEPT]\ge2/3\right]
      \ge1-\varepsilon,
  \end{equation}
  then
  \begin{equation}
    \label{eq:main-soundness}
    \Pr_{\ora\leftarrow\NO_\lambda}
      \left[\forall P^*:\
        \Pr[\langle(P^*)^\ora,(V')^\ora\rangle=\ACCEPT]\le1/3\right]
      \le2/3+\varepsilon.
  \end{equation}
  The threshold for $\lambda$ depends only on $V'$'s number of rounds
  and queries and total communication.
  All provers receive the common input $1^\lambda$ and have unrestricted
  oracle access.
\end{theorem}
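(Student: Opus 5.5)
The theorem has two independent halves, and I would prove them separately, following the outline of \cref{overview:qma} and \cref{overview:proof}.

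\textbf{QMA upper bound.} The verifier $V$ samples $u\leftarrow\bits^\lambda$ and sets $y=u\|0^{q-\lambda}$. It runs $\mathsf{CodePreimageDecode}(\ket\psi,y)$, makes the single classical query $\ora(y,v)$, and accepts iff the answer is $1$. The witness length $a(\lambda)=2q\lceil\log_2|\Sigma|\rceil$ is the size of $\ket{\Psi_H}$.

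\emph{Completeness.} This is immediate from the quantum decoding part of \cref{lem:code}. On the $1-2^{-\lambda}$ fraction of good $H$, every $u$ yields a valid code-preimage pair with probability $\ge1-2^{-\lambda}$. Every such pair is marked under $\YES_\lambda$, because its padding coordinates are zero.

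\emph{Soundness.} I view the maximizing witness $\ket{\phi_\ora}$ as an $a$-qubit quantum register correlated with the classical random variable $\ora$. The slices $\ora_u:=\ora(u\|0^{q-\lambda},\cdot)$ are mutually independent across $u$. By the chain rule together with independence, $\sum_u I(\ora_u:\phi)\le I(\ora:\phi)\le 2a$, so $\E_u I(\ora_u:\phi)\le 2a/2^\lambda$. The constant here may need adjusting to match $\sqrt{a/2^\lambda}$.

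Next I compare with a modified experiment in which $\ora_u$ is replaced by an independent copy. There the decoded $v$ is independent of the fresh slice, so acceptance is at most $\delta$. The real joint state of $(\ora_u,\phi)$ differs from the product state by at most $\sqrt{I/2}$ in trace distance, by quantum Pinsker. Averaging over $u$ with Jensen gives
\[
\E_\ora\max_\psi\Pr[\ACCEPT]\le\delta+\sqrt{a/2^\lambda}.
\]
Markov's inequality with threshold $\gamma_{\NO}$ then finishes this half.

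\textbf{QCIP lower bound.} I chain three reductions. Fix $(P',V')$ with $r$ rounds, $N$ queries per round, and communication $c$.

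Step 1 (classical simulation). Construct the query-model classical verifier $V_1$ of \cref{overview:simulation}. Without the cutoff, a per-query hybrid argument shows it is $rN(2\sqrt\theta+2^{-\lambda^8})$-close to $V'$ against every prover. The cutoff is then charged using \cref{thm:incompressibility,thm:independent-incompressibility} via the encoding $(\pi,I)$ of length $s=c+t\lceil\log_2M\rceil$. Taking $t=c+\lambda$ with $\delta=2^{-\lceil\lambda/12\rceil}$ makes $2^s\delta^t=2^{-\Omega(\lambda^2)}$. The ``moreover'' clauses then give an exceptional set of oracles, of negligible measure under each distribution, outside which $|\Pr_{V'}-\Pr_{V_1}|<1/100$ for all $P^*$.

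Step 2. Apply \cref{lem:public-coin-perfect} to $(P',V_1)$ to obtain $(P_2,V_2)$. This protocol is perfectly complete on a $1-\varepsilon-\negl$ fraction of the $\YES_\lambda$ mass. Moreover, whenever some prover fools $V_2$ above $1/3$, some prover fools $V'$ above $1/3$ (outside the negligible set).

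Step 3. Since $1-\varepsilon-\negl\ge\delta^2$, build the cheating prover $\widetilde P$ of \cref{overview:cheater}. The goal is to show
\[
\E_{\NO}\Pr[\langle\widetilde P,V_2\rangle=\ACCEPT]\ge1-(N_2+1)\delta.
\]
Markov's inequality then gives that $\widetilde P$ succeeds with probability $>1/3$, and hence some $P^*$ fools $V'$, on all but an $o(1)$ fraction of $\NO_\lambda$. This yields \eqref{eq:main-soundness}; indeed it is stronger, since only the $\varepsilon$ loss survives via the hypothesis.

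\textbf{Main obstacle.} I expect Step 3 to be the hardest part, specifically showing that the prover's search finds a fully marked test set in every round. Maximality of $T$, combined with incompressibility under $\mu$, gives the conditional bound $\delta^{2|U|}$ on additional forced pairs. From this I must show that the failed tests $T_1,T_2,\ldots$ remove more than half of $\mu(E)$ while no small set $U$ is shared by too many of them. These two facts let the random-set containment lemma of \cref{app:containment} apply, giving failure probability $2^{-2^{\Omega(\lambda)}}$. That bound must then survive a union bound over the $2^{\poly(\lambda)}$ possible states $(\tau,E)$, which is why I use the estimate unconditionally.

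A second delicate point is the final check. Conditioning on failed tests cannot raise the marking probability of an unforced pair above $\delta$, because the events ``not all of $T$ marked'' are decreasing. This can be argued via the FKG/Harris inequality.
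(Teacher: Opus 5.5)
Your proposal is correct and takes essentially the same route as the paper. The upper bound uses the decoder-based verifier, with a chain-rule/Holevo plus Pinsker argument under $\NO_\lambda$. The lower bound is the same three-step chain: classical-query simulation with the $(\pi,I)$ incompressibility encoding, then Goldwasser--Sipser followed by F\"urer et al., then the maximal-eligible-set cheating prover analysed via the containment lemma and a union bound over prover states.

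Your small deviations are harmless. Since $\ora$ is classical, $I(\ora:\phi)\le a$ already holds, and your looser $2a$ is exactly offset by your $\sqrt{I/2}$ Pinsker constant. Harris's inequality is a valid, if heavier, substitute for the paper's direct observation that the failed-test event $\mathcal F$ is decreasing, so $p_1\le p_0$.
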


%% file: sections/qma-upper-bound.tex
\section{The QMA Upper Bound}
\label{sec:qma}

In this section, we give the QMA verifier $V$ and prove the acceptance
bounds in \cref{thm:main}. Under $\YES_\lambda$, we use the decoding
guarantee of \cref{lem:code}. Under $\NO_\lambda$, we bound the
information that an arbitrary witness contains about the oracle
values at a uniformly chosen target with an all-zero suffix.

\subsection{The Verifier}
\label{sec:qma-verifier}

The verifier $V$ samples $u\leftarrow\bits^\lambda$, sets
$y=u\|0^{q-\lambda}$, and uses $\mathsf{CodePreimageDecode}$ to
obtain a candidate preimage $v$ of $y$. If $v\in C$, it evaluates
$\ora(y,v)$; otherwise, it outputs $\REJECT$.
Recall that the code $C$ is public, so the decoder prepares the code
state itself, as in \cref{lem:code}.

\begin{protocol}
  \protocoltitle{$V^\ora(1^\lambda,\mathsf W)$}
  \textbf{Input:} $1^\lambda$ and an $a(\lambda)$-qubit witness register
  $\mathsf W$.\par
  \textbf{Oracle:} $\ora$.\par
  \textbf{Output:} $\ACCEPT$ or $\REJECT$.
  \protocolbody
  \begin{enumerate}
    \item Sample $u\leftarrow\bits^\lambda$ uniformly and set
      $y=u\|0^{q-\lambda}$.
    \item Run $v\leftarrow\mathsf{CodePreimageDecode}(\mathsf W,y)$.
      If $v=\bot$ or $v\notin C$, output $\REJECT$ and terminate.
    \item Query $\ora(y,v)$ classically. Output $\ACCEPT$ if the answer is one
      and $\REJECT$ otherwise, then terminate.
  \end{enumerate}
\end{protocol}

\subsection{Acceptance under the YES Distribution}

Under $\YES_\lambda$, all entries of $H$ are independent
$\mathrm{Bernoulli}(p)$ bits, where $p=\lambda^{-4}$. By the
definition of $\YES_\lambda$, for every $u\in\bits^\lambda$ and $v\in C$,
\begin{equation}
  \label{qma:marked-preimage}
  \ora(u\|0^{q-\lambda},v)=1
  \quad\Longleftrightarrow\quad
  H(v)=u\|0^{q-\lambda}.
\end{equation}
Thus, whenever the decoder finds a code preimage of
$u\|0^{q-\lambda}$, the verifier accepts. We now apply the decoding
guarantee of \cref{lem:code}.

\begin{lemma}[Acceptance under the YES distribution]
  \label{qma:yesprob}
  The verifier $V$ uses an $a(\lambda)$-qubit witness, runs in polynomial
  time, and makes at most one classical oracle query. For all sufficiently
  large $\lambda$, the YES distribution of \cref{def:oracle-distributions}
  satisfies
  \begin{equation}
    \label{eq:gammay}
    \Pr_{\ora \leftarrow \YES_\lambda}\left[
    \max_{\ket\psi}\Pr[V^\ora(1^\lambda,\ket\psi) = \ACCEPT] < 1-2^{-\lambda}\right]
    \le \gamma_{\YES} := 2^{-\lambda}.
  \end{equation}
\end{lemma}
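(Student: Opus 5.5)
The plan is to read the statement off from \cref{lem:code} almost directly, using the equivalence \eqref{qma:marked-preimage}. First, the resource claims. The witness $\ket{\Psi_H}$ lives on $2q$ registers, each holding an element of $\Sigma$. Each register therefore uses $\lceil\log_2|\Sigma|\rceil$ qubits, which gives $a(\lambda)=2q\lceil\log_2|\Sigma|\rceil=O(\lambda^6\log\lambda)$ since $q\le2\lambda^5$ and $|\Sigma|=q^\lambda$. The verifier's running time is polynomial for three reasons:
\begin{itemize}
  \item sampling $u$ is cheap;
  \item $\mathsf{CodePreimageDecode}$ is quantum polynomial time by \cref{lem:code};
  \item testing $v\in C$ is efficient, because the encoding is injective and $\mathbb F_q$-linear, so membership reduces to solving a linear system (equivalently, a parity check over $\mathbb F_q$).
\end{itemize}
The only oracle access is the single classical query in step 3, and no query is made on the reject branches.

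Next, the acceptance bound. Sample $\ora\leftarrow\YES_\lambda$ via $H$. Let $G$ be the event over $H$ from the quantum decoding part of \cref{lem:code}: for every $u\in\bits^\lambda$, decoding from $\ket{\Psi_H}$ on target $u\|0^{q-\lambda}$ outputs some $v\in C$ with $H(v)=u\|0^{q-\lambda}$, with probability at least $1-2^{-\lambda}$. We have $\Pr_H[G]\ge1-2^{-\lambda}$. Fix $H\in G$ and take the witness $\ket\psi=\ket{\Psi_H}$. Condition on any sampled $u$. With probability at least $1-2^{-\lambda}$ the decoder's output $v$ lies in $C$ and satisfies $H(v)=y$, where $y=u\|0^{q-\lambda}$. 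On that event the verifier passes step 2, and by \eqref{qma:marked-preimage} the query returns $\ora(y,v)=1$, so it accepts.

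Averaging over the uniform $u$, the acceptance probability is at least $1-2^{-\lambda}$. So $\max_{\ket\psi}\Pr[V^\ora(1^\lambda,\ket\psi)=\ACCEPT]\ge1-2^{-\lambda}$ whenever $H\in G$. The event in \eqref{eq:gammay} is therefore contained in the complement of $G$, which has probability at most $2^{-\lambda}$.

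I do not expect a real obstacle, since \cref{lem:code} already does the substantive work, including the union bound over targets. Two points need a brief check.
\begin{itemize}
  \item The witness must live exactly on $a(\lambda)$ qubits. If $|\Sigma|$ is not a power of two, the state simply sits in the subspace spanned by valid encodings.
  \item The verifier must check $v\in C$ before querying, since $\ora$ is defined only on $\bits^q\times C$. This check never causes rejection on the good event, because there the decoder already outputs a codeword.
\end{itemize}
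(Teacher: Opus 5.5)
Your proposal is correct and matches the paper's proof: restrict to the good event over $H$ from the quantum-decoding part of \cref{lem:code} and give the witness $\ket{\Psi_H}$. Then combine \eqref{eq:bhvquantum} with \eqref{qma:marked-preimage} for each $u$, average over $u$, and read off the resource bounds from \cref{lem:code} together with linear algebra for testing $v\in C$ (your side remarks on the qubit count and on checking $v\in C$ before querying are details the paper leaves implicit).
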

\begin{proof}
  With probability at least $1-2^{-\lambda}$ over the sampling of $H$,
  the hashes $H$ satisfy the guarantee of \cref{lem:code}.
  For each such sample, give $V$ the witness $\ket{\Psi_H}$.
  Averaging \eqref{eq:bhvquantum} over $u$ and using
  \eqref{qma:marked-preimage}, we obtain
  \[
    \Pr[V^{\ora}(1^\lambda,\ket{\Psi_H})=\ACCEPT]
      \ge1-2^{-\lambda}.
  \]
  This proves \eqref{eq:gammay}.
  By \cref{lem:code}, the witness has $a(\lambda)$ qubits and the decoder
  runs in polynomial time. Membership in the public linear code $C$
  is decidable in polynomial time by linear algebra. The verifier makes
  at most one classical query, to evaluate $\ora(y,v)$.
\end{proof}

\subsection{Acceptance under the NO Distribution}

We next bound the acceptance probability for arbitrary witnesses.
Observe that $V$ samples $u$ after receiving the witness, and under
$\NO_\lambda$, the functions
$\ora(u\|0^{q-\lambda},\cdot)$ are independent across
$u\in\bits^\lambda$. Thus, although the witness may depend on the
entire oracle, it contains limited information about these values
for a uniformly chosen $u$. We quantify this using an argument
closely related to Nayak's random access code bound~\cite{Nayak99}.

For each $u$, we then replace $\ora(u\|0^{q-\lambda},\cdot)$ by
an independent function with the same distribution, keeping the
witness and decoded word $v$ unchanged. After this replacement,
the probability that the final query returns one is at most $\delta$.

\begin{lemma}[Acceptance under the NO distribution]
  \label{qma:noprob}
  For all sufficiently large $\lambda$, sample
  $\ora\leftarrow\NO_\lambda$, where
  $\delta = 2^{-\lceil\lambda/12\rceil}$. Set
  \[
    \gamma_{\NO}=\left(\delta+
      \sqrt{\frac{a(\lambda)}{2^\lambda}}\right)^{1/2}
      =2^{-\Omega(\lambda)}.
  \]
  Then
  \begin{equation}
    \label{eq:gamman}
    \Pr_{\ora\leftarrow\NO_\lambda}\left[
    \max_{\ket\psi}\Pr[V^{\ora}(1^\lambda,\ket\psi) = \ACCEPT] > \gamma_{\NO}\right]
    \le \gamma_{\NO}.
  \end{equation}
\end{lemma}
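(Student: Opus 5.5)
We first bound the expected maximal acceptance probability,
\[
  \E_{\ora\leftarrow\NO_\lambda}\max_{\ket\psi}\Pr[V^\ora(1^\lambda,\ket\psi)=\ACCEPT]\le\delta+\sqrt{a(\lambda)/2^\lambda},
\]
and then apply Markov's inequality. Write $\gamma_{\NO}^2$ for the right-hand side. Markov then gives that the maximum exceeds $\gamma_{\NO}$ with probability at most $\gamma_{\NO}^2/\gamma_{\NO}=\gamma_{\NO}$, which is \eqref{eq:gamman}. Finally, $a(\lambda)=O(\lambda^6\log\lambda)$ makes $\gamma_{\NO}=2^{-\Omega(\lambda)}$.

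To bound the expectation, for each oracle $\ora$ fix a maximizing pure witness $\ket{\psi_\ora}$; a maximum exists by compactness. Write $F_u=\ora(u\|0^{q-\lambda},\cdot)$ for $u\in\bits^\lambda$. Under $\NO_\lambda$ the $F_u$ are independent of each other and of the remaining entries $R_u$ of the oracle. Form the classical–quantum state $\rho=\E_\ora\ket{\ora}\!\bra{\ora}\otimes\ket{\psi_\ora}\!\bra{\psi_\ora}$. By the chain rule and independence, $\sum_u I(F_u:\mathsf W)\le\sum_u I(F_u:\mathsf W\mid F_{<u})=I(F_{\bits^\lambda}:\mathsf W)\le a(\lambda)$. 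Here the first inequality uses that $F_u$ is independent of $F_{<u}$, so conditioning can only increase the mutual information, and the last holds because $\mathsf W$ has $a$ qubits. Averaging over $u$ gives $\E_u I(F_u:\mathsf W)\le a/2^\lambda$.

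Now fix $u$. The acceptance event is $F_u(v)=1$, where $v$ is the output of $\mathsf{CodePreimageDecode}$ applied to $\mathsf W$ with target $u\|0^{q-\lambda}$. This is a measurement on the joint state of $F_u$ and $\mathsf W$. Compare this joint state with the product state $\rho^{F_u}\otimes\rho^{\mathsf W}$, in which $F_u$ is replaced by an independent copy. In the product state, $v$ is independent of $F_u$, so acceptance occurs with probability exactly $\delta$. By quantum Pinsker, the trace distance between the two states is at most $\sqrt{(\ln 2/2)\,I(F_u:\mathsf W)}\le\sqrt{I(F_u:\mathsf W)}$. Hence the acceptance probability for this $u$ is at most $\delta+\sqrt{I(F_u:\mathsf W)}$. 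Averaging over $u$ and applying Jensen's inequality to the square root gives the bound $\delta+\sqrt{a/2^\lambda}$.

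The main care point is that the witness depends on the oracle, including on $R_u$, so the fresh-copy replacement has to be phrased at the level of the joint state of $(F_u,\mathsf W)$. Pinsker is applied only to this two-party marginal, which is legitimate because the verifier's decision depends only on $F_u$ and $\mathsf W$ once $u$ is fixed. The chain-rule step also needs the independence of the $F_u$, which holds because the targets $u\|0^{q-\lambda}$ index disjoint blocks of the oracle.
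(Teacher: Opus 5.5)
Your proof is correct and follows the paper's argument essentially step by step: the same chain rule over the independent blocks $F_u$, the same replacement of $F_u$ by an independent copy, Pinsker's inequality, Cauchy--Schwarz/Jensen over $u$, and Markov's inequality. The one minor difference is that you apply quantum Pinsker directly to the joint state of $(F_u,\mathsf W)$, whereas the paper first passes to the classical output $v_u$ by data processing; also, in the product state acceptance occurs with probability at most $\delta$, not exactly $\delta$, since the verifier rejects when $v=\bot$ or $v\notin C$.
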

\begin{proof}
  For each oracle $\ora$, choose
  a maximizing pure witness $\ket{\psi_\ora}$ in an $a(\lambda)$-qubit
  register $\mathsf W$. We consider the classical--quantum state
  \[
    \rho_{\ora\mathsf W}
      =\E_{\ora\leftarrow\NO_\lambda}\left[
        \ket\ora\!\bra\ora\otimes
        \ket{\psi_\ora}\!\bra{\psi_\ora}\right].
  \]

  \emph{Information about a random target.}
  We first bound the information in $\mathsf W$ about the oracle
  values at a uniformly chosen target with an all-zero suffix. Write
  \[
    \ora_u=(\ora(u\|0^{q-\lambda},v))_{v\in C},
    \qquad
    \ora_{\mathrm{pad}}=(\ora_u)_{u\in\bits^\lambda},
  \]
  and let $\ora_{<u}=(\ora_{u'})_{u'<u}$ in lexicographic order.
  Note that the random variables $\ora_u$ are independent. Hence,
  data processing and the chain rule give
  \begin{equation}
    \label{qma:information}
    \begin{aligned}
      \sum_u I(\ora_u:\mathsf W)
      &\le \sum_u I(\ora_u:\mathsf W\ora_{<u})\\
      &=\sum_u I(\ora_u:\mathsf W\mid\ora_{<u})
        =I(\ora_{\mathrm{pad}}:\mathsf W)\le a(\lambda),
    \end{aligned}
  \end{equation}
  where the last inequality holds because $\ora_{\mathrm{pad}}$
  is classical and $\mathsf W$ has $a(\lambda)$ qubits. This remains
  true even though the witness depends on oracle values at other targets.

  \emph{Bounding the acceptance probability.}
  We now fix $u\in\bits^\lambda$ and compare the actual execution with the
  following modified execution:
  \begin{enumerate}
    \item Sample $\ora\leftarrow\NO_\lambda$, prepare $\ket{\psi_\ora}$, and run
      $v_u\leftarrow\mathsf{CodePreimageDecode}(\ket{\psi_\ora},u\|0^{q-\lambda})$.
    \item Independently sample $\ora'_u$ with the same distribution as
      $\ora_u$. Output $\ACCEPT$ if $v_u\in C$ and $\ora'_u(v_u)=1$,
      and $\REJECT$ otherwise.
  \end{enumerate}
  This changes only the answer to the final query. Since the decoder
  acts on $\mathsf W$ without making oracle queries, data processing
  and Pinsker's inequality give
  \[
    \TV\bigl((\ora_u,v_u),(\ora'_u,v_u)\bigr)
    \le\sqrt{I(\ora_u:v_u)}\le\sqrt{I(\ora_u:\mathsf W)}.
  \]
  Since $\ora'_u$ is independent of $v_u$ and each entry equals one
  with probability $\delta$,
  \[
    \Pr[v_u\in C,\ \ora'_u(v_u)=1]\le\delta.
  \]
  We now average over $u$. The total-variation bound above,
  Cauchy--Schwarz, and \eqref{qma:information} give
  \begin{equation}
    \label{qma:randomaccessbound}
    \begin{aligned}
      \E_{\ora\leftarrow\NO_\lambda}\max_{\ket\psi}\Pr[V^{\ora}(1^\lambda,\ket\psi) = \ACCEPT]
      & = \frac1{2^\lambda}\sum_u\Pr[v_u\in C,\ \ora_u(v_u)=1]\\
      & \le \delta+\frac1{2^\lambda}\sum_u\sqrt{I(\ora_u:\mathsf W)}\\
      & \le \delta+\sqrt{\frac{a(\lambda)}{2^\lambda}}.
    \end{aligned}
  \end{equation}
  The right-hand side is $\gamma_{\NO}^2$. Thus Markov's inequality
  bounds the probability that the maximum acceptance probability
  exceeds $\gamma_{\NO}$ by $\gamma_{\NO}$, proving \eqref{eq:gamman}.
\end{proof}

%% file: sections/qcip-lower-bound.tex
\section{The QCIP Lower Bound}
\label{sec:hardness}
In this section, we prove the QCIP lower bound in \cref{thm:main}.
We first transform a QCIP protocol $(P_0,V_0)$ in two steps:
\[
  (P_0,V_0)\longrightarrow(P_1,V_1)\longrightarrow(P_2,V_2).
\]
The verifier $V_1$ makes only classical queries, with $P_1=P_0$.
The second step gives public coins and perfect completeness, with all
of $V_2$'s queries after the interaction. We then construct a cheating
prover for $V_2$ and use the guarantees of these transformations to
deduce \eqref{eq:main-soundness} for $V_0$.

Since simulation may require superpolynomial local computation,
we use the query model of \cref{def:query-model}. We state the three
steps below and combine them to prove \cref{thm:main}.
We prove the simulation and cheating-prover lemmas in
\cref{sec:simulation,sec:application}, respectively. For public coins
and perfect completeness, we use \cref{lem:public-coin-perfect}.

\paragraph{Step 1: classical-query simulation.}
The following lemma gives a classical-query simulator $V_1$ for $V_0$.
Except on an event of probability $\eta_\lambda$ under either oracle
distribution, its acceptance probability differs from that of $V_0$
by less than $1/100$, simultaneously for every prover.

\begin{lemma}[Classical-query simulation for the oracle distributions]
  \label{est:application}
  Let $\YES_\lambda,\NO_\lambda$ be the distributions of
  \cref{def:oracle-distributions}, and let $V$ be a quantum verifier
  in the query model of \cref{def:query-model}.
  There exist a verifier $V'$ in the same model,
  making only classical queries with the same rounds and message lengths,
  and a sequence $\eta_\lambda=2^{-\Omega(\lambda^2)}$ such that, for all
  sufficiently large $\lambda$ and $\mathcal P\in\{\YES_\lambda,\NO_\lambda\}$,
  \begin{equation}
    \label{est:distributional}
    \Pr_{\ora\leftarrow\mathcal P}
    \left[\forall P^*:\
      \bigl|\Pr[\langle(P^*)^{\ora},V^\ora\rangle=\ACCEPT]
        -\Pr[\langle(P^*)^{\ora},(V')^\ora\rangle=\ACCEPT]\bigr|<1/100\right]
    \ge1-\eta_\lambda.
  \end{equation}
\end{lemma}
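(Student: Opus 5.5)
We will construct $V'$ exactly as the simulator $V_1$ of \cref{overview:simulation} and prove \eqref{est:distributional} in two stages: first for the uncapped simulator, then for the effect of the cutoff. Let $r,N,c$ be the number of rounds, the per-round query bound, and the transcript length of $V$. Fix an inverse-polynomial $\theta$ with $rN\cdot 2\sqrt\theta<1/400$. Set $K_0=\lceil 2\lambda^8/\theta\rceil$ and $t=c+\lambda$, so the query count $M=(rN+t)K_0$ is polynomial. We also make $V'$ fully classical. It keeps its measurement randomness on an explicit random tape $R$. Between oracle queries it computes exactly, by unbounded local computation, the measurement distributions of the simulated $V$ (with database answers) and samples from them using $R$. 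By \cref{lem:classical-prover}, it suffices to consider classical provers given the transcript. The prover's messages are classical, so $\mathsf{PrepState}$ can be realized: it computes the conditional pure state of $V$ given $\pi$, with all queries answered by databases.

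\emph{Uncapped simulator.} Fix $\ora$ and a prover $P^*$. We run a hybrid over the $rN$ queries of $V$ in chronological order. In hybrid $h$, the first $h$ queries are answered by the learned databases $D_{i,j}$ and the rest by $\ora$. Fix one query $(i,j)$ and condition on the history. When the inner loop stops, either $p_{i,j}(D_{i,j})\le\theta$ or we are in Case 1. In Case 1, the loop stops with probability at most $(1-\theta)^{K_0}\le e^{-2\lambda^8}$ per stopping opportunity. Taking a union bound over the at most $|\mathcal Z_\lambda|+1\le 2^{\lambda^8}$ growth stages of the database bounds this failure probability by $2^{-\lambda^8}$. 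In Case 2, replacing $U_\ora$ by $U_{\ora_{D}}$ moves the state by at most $2\sqrt\theta$ in Euclidean norm. This holds because the two unitaries differ only on the components with $z\notin D$ and $\ora(z)=1$, whose weight is at most $\theta$; elements of $D$ are marked in both. We must also check that the learned states are the true conditional states. This follows because $\mathsf{PrepState}$ uses the same saved databases as the main run, so the estimation run and the real run are the same hybrid. Each step changes the acceptance probability by at most $2\sqrt\theta+2^{-\lambda^8}$, so the total error is below $1/200$ for every $P^*$ and every $\ora$.

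\emph{The cutoff.} Let $B$ be the event that some database reaches size $t$. On $B$, the first $t$ additions, all of them marked pairs, are determined by $(\pi,I)$, where $\pi$ is the transcript up to that point and $I\subseteq[M]$ lists the trial indices that produced additions. Given these and $R$, a decoder replays $V'$ without oracle access: a tested pair outside the database is answered $1$ exactly at indices in $I$. This uses the fact that prover messages enter only through $\pi$. The message length is $s=c+t\lceil\log_2 M\rceil$. For $\mathcal P=\NO_\lambda$ we apply \cref{thm:independent-incompressibility} on $\mathcal X=\mathcal Z_\lambda$. For $\mathcal P=\YES_\lambda$ we apply \cref{thm:incompressibility}, since marked pairs are code-preimage pairs and $t\le 2^\lambda$. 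In both cases we use the ``moreover'' clause with $\varepsilon=1/200$ and this fixed decoder. Outside an oracle set of probability $200\cdot 2^s\delta^t$, every encoder succeeds with probability at most $1/200$ over $R$. Every prover induces such an encoder, namely the map $(\ora,R)\mapsto(\pi,I)$ realized by the execution. Hence $\Pr[B]\le1/200$ for all $P^*$ simultaneously, and capping changes acceptance by at most $\Pr[B]$.

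Combining the two stages gives total error below $1/100$. The main obstacle is the parameter bound $2^s\delta^t=2^{-\Omega(\lambda^2)}$. Since $\delta^t=2^{-\lceil\lambda/12\rceil t}$ and $s=c+O(t\log\lambda)$, this requires choosing $t$ so that $\lambda t/12$ dominates $c+t\log M$. The choice $t=c+\lambda$ from the overview does not quite satisfy this; we should take $t=\Theta(c+\lambda)$ with a large constant. Then $t\lambda/12-s\ge\Omega(\lambda t)\ge\Omega(\lambda^2)$, because $\log M=O(\log\lambda)$ and $M$ depends on $t$ only polynomially. A second delicate point is that the union over prover strategies is absorbed by the single optimal encoder in the ``moreover'' clause. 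This works only because $R$ is independent of $\ora$ and the prover's influence on the execution is entirely through $\pi$.
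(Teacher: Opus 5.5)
Your proposal is correct and follows essentially the paper's argument: the uncapped hybrid analysis, the $(\pi,I)$-replay bound on the cutoff, then code-preimage incompressibility under $\YES_\lambda$ and independent-marking incompressibility under $\NO_\lambda$ with the optimal-encoder/Markov step. The paper packages the first two parts as the general \cref{est:simulation} with the quantity $g(F)$, and additionally rounds each sampled distribution to $k_{\mathrm{rand}}$ bits at an extra $\epsilon/3$ cost, since exact sampling from your tape $R$ does not give a classical verifier with polynomially many random bits (\cref{lem:public-coin-perfect} needs this later).

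Your concern about $t=c+\lambda$ is unfounded: since $c\le t$, we have $s\le t(1+\lceil\log_2 M\rceil)=O(t\log\lambda)$. Hence $\log_2(2^s\delta^t)\le t\,(O(\log\lambda)-\lambda/12)\le-\lambda t/24=-\Omega(\lambda^2)$ already for $t=c+\lambda$, exactly as in the paper.
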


\paragraph{Step 2: public coins and perfect completeness.}
The simulator admits a randomized classical implementation
(\cref{est:simulation}). We apply \cref{lem:public-coin-perfect}
to $(P_1,V_1)$ to obtain a public-coin protocol $(P_2,V_2)$,
with deterministic $P_2$ and all verifier queries after the interaction.

\paragraph{Step 3: a cheating prover.}
The cheating prover constructed in \cref{sec:application} gives the
following acceptance guarantee under $\NO_\lambda$.

\begin{lemma}[Public-coin lower bound]
  \label{app:public-coin-lower-bound}
  Let $\YES_\lambda,\NO_\lambda$ be the distributions of
  \cref{def:oracle-distributions}, and let $(P,V)$ be a public-coin
  protocol in the classical query model of \cref{def:query-model}.
  Suppose $V$ makes $N(\lambda)$ oracle
  queries, all after the last prover reply, and set
  $\delta=2^{-\lceil\lambda/12\rceil}$.
  For all sufficiently large $\lambda$, if
  \begin{equation}
    \label{app:public-coin-perfect}
    \Pr_{\ora\leftarrow\YES_\lambda}
      [\Pr[\langle P^\ora,V^\ora\rangle=\ACCEPT]=1]
      \ge\delta^2,
  \end{equation}
  then there is a deterministic prover $\widetilde P$ such that
  \begin{equation}
    \label{app:public-coin-failure}
    \E_{\ora\leftarrow\NO_\lambda}
      \Pr[\langle\widetilde P^\ora,V^\ora\rangle=\ACCEPT]
      \ge1-(N+1)\delta.
  \end{equation}
  All provers have unrestricted oracle access. The threshold for $\lambda$
  depends only on $V$'s number of rounds and queries and total communication.
\end{lemma}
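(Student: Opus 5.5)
We construct $\widetilde P$ exactly as in the overview (\cref{overview:cheater}). Let $\mu$ be $\YES_\lambda$ conditioned on the event $G=\{\ora:\Pr[\langle P^\ora,V^\ora\rangle=\ACCEPT]=1\}$, so $\mu(G)\ge\delta^2$ by \eqref{app:public-coin-perfect}. Without loss of generality $P$ is deterministic: fixing the best random tape per oracle preserves perfect acceptance. The prover $\widetilde P$ maintains a transcript $\tau$ and a candidate set $E\subseteq\operatorname{supp}(\mu)$ consistent with $\tau$. On each challenge $r_j$ it loops over replies $m$, forming $F_m$ and repeatedly testing a maximum-cardinality $T$ with $|T|<K$ and $\mu(F_m[T])\ge\gamma_j(|T|)$ against $\ora_{\NO}$. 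We choose $K$ polynomial and thresholds $\gamma_j(k)$ decreasing geometrically in both $j$ and $k$ (roughly $\gamma_j(k)=\mu(E)\cdot\delta^{2k}\cdot 2^{-O(j)}$, normalized so that $\gamma_0(0)\le 1$). The proof then splits into three claims, which we establish in order.

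\textbf{Claim A (structure after a successful test).} We show, by induction on rounds, that $\mu(E)\ge\gamma_j(|T|)$ and that for every nonempty $U$ disjoint from the committed set $S$ with $|S\cup U|<K$, we have $\mu(\forall z\in U:\widehat\ora(z)=1\mid E)\le\delta^{2|U|}$. The conditional bound follows from maximality of $T$ together with the threshold ratios. To keep $|S|\le K-2$, we invoke \cref{thm:incompressibility}. A set $S$ marked by all candidates is encoded by the transcript $\tau$ (of $c$ bits) plus the search indices, and its elements are code-preimage pairs under every candidate. Hence $\mu(E)\le 2^{s}\delta^{4|S|}/\mu(G)$, which contradicts $\mu(E)\ge\gamma_j(|S|)$ once $|S|$ reaches $K-1$ for a suitable polynomial $K\gg c$. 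This is where the $\delta^4$ versus $\delta^2$ gap is spent.

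\textbf{Claim B (search succeeds).} Fix the public coins and a reachable state $(\tau,E)$, and run the search with every test declared to fail. This yields a deterministic sequence $T_1,T_2,\dots$ that does not depend on $\ora_{\NO}$. Since the $F_m$ partition $E$, some $m$ has $\mu(F_m)\ge 2^{-|m|}\mu(E)$. The inner loop for that $m$ removes disjoint pieces until no eligible $T$ remains, at which point the residual mass is below the threshold. The thresholds are chosen so that the removed mass exceeds half of $\mu(F_m)$. By Claim A, tests containing a fixed $U\not\subseteq S$ carry at most a $2\delta^{2|U|}$ fraction of the removed mass. Feeding these two facts into the random-set containment lemma (\cref{app:containment}) shows that, with probability $1-2^{-2^{\Omega(\lambda)}}$, some $T_i\setminus S$ is entirely marked by the independent $\mathrm{Bernoulli}(\delta)$ oracle. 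We then union-bound over the at most $2^{\poly(\lambda)}$ states and average over the coins, which gives failure probability at most $\delta$.

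\textbf{Claim C (final check).} Let $S$ be the final committed set, and define answers $\ora_S$ that equal $1$ on $S$ and $0$ elsewhere. By Claim A with $|U|=1$ and $N\delta^2<1$, some candidate $\widehat\ora\in E$ agrees with $\ora_S$ on all $N$ queries of $V$'s final deterministic computation. Perfect acceptance under $\widehat\ora$ then implies that this computation accepts under $\ora_S$. It remains to show that $\ora_{\NO}$ agrees with $\ora_S$ on these queries, except with probability $N\delta$. Points in $S$ have been verified as marked. For each queried point outside $S$, conditioning on the retained information (successful tests, and failed tests recording only that some element of $T$ is unmarked) is a decreasing event. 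By Harris/FKG, this conditioning cannot push the marking probability above $\delta$. A union bound over the $N$ queries, combined with Claim B, gives \eqref{app:public-coin-failure}.

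The main obstacle is Claim B. We need to calibrate $\gamma_j(k)$ so that removed mass is both large and spread out across many distinct pairs. We also need the union bound over reachable states, which avoids conditioning on the history, to survive the doubly exponential concentration. We will set the thresholds first and then verify the hypotheses of \cref{app:containment} against them.
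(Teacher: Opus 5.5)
\textbf{Genuine gap in Claim B.} Your construction of $\widetilde P$, the maximal-eligible-set and residual-bound structure of Claim A, the union bound over reachable states, and the comparison with $\ora_S$ in Claim C all match the paper's proof. Claim B, however, does not go through as written.

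You single out one reply $m$ with $\mu(F_m)\ge 2^{-|m|}\mu(E)$ and argue that its inner loop removes more than $\mu(F_m)/2$. Claim A only controls containment under $\mu\mid E$. The removed pieces whose tests contain a fixed $U$ are disjoint subsets of $E[U]$, so their total mass is at most $\delta^{2|U|}\mu(E)$. Measured against a removed mass of order $2^{-|m|-1}\mu(E)$, this is a fraction $2^{|m|+1}\delta^{2|U|}$, not $2\delta^{2|U|}$. Since $|m|$ is polynomial in $\lambda$, this is useless for \cref{app:containment}.

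No bound relative to $\mu\mid F_m$ can rescue the single-reply view either. Suppose each reply is the preimage the verifier will later check. Then all candidates in $F_m$ mark a common pair $z_m$. By maximality, every eligible set tested for $F_m$ contains $z_m$: if $z_m\notin T$, then $F_m[T\cup\{z_m\}]=F_m[T]$ and $T\cup\{z_m\}$ would also be eligible. So the search for that reply succeeds only if $z_m\in\mathsf{MARKED}$, which happens with probability at most $\delta$. The prover succeeds only because the forced pairs $z_m$ differ across replies. The analysis must therefore spread over the tests of all replies together.

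\textbf{The missing idea and how the paper handles it.} The paper uses the absolute threshold $\mu(F_m[T])\ge 2^{-d_j}\delta^{2|T|}$ with $d_j=\sum_{i\le j}(\ell_i+1)$, which gives the invariant $\mu(E)\ge 2^{-d_{j-1}}\delta^{2|S|}$. Since $F_m[S]=F_m$, when the search for reply $m$ ends its leftover mass satisfies $\mu(F_m)<2^{-d_j}\delta^{2|S|}\le 2^{-\ell_j-1}\mu(E)$. Summing over all $2^{\ell_j}$ replies, the tests of the entire search remove more than $\mu(E)/2$. Weighting all of them, across all replies, by removed mass then yields $\Pr_\pi[U\subseteq T]\le 2\delta^{2|U|}$ directly from the residual bound under $\mu\mid E$.

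This calibration has two consequences for your thresholds:
\begin{itemize}
\item The per-round loss must be $2^{-(\ell_j+1)}$, which depends on the reply length; a factor $2^{-O(j)}$ is not enough.
\item If the $\mu(E)$ in your $\gamma_j$ means the current candidate mass, the threshold breaks Claim A. Because $T\supseteq S$, the factor $\delta^{2|S|}$ then compounds across rounds, and the comparison with $\mu(E)\le \delta^{4|S|}/\alpha$, where $\alpha=\Pr_{\ora\leftarrow\YES_\lambda}[\ora\in G]$, no longer caps $|S|$.
\end{itemize}

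\textbf{Smaller points.} Claim A needs no encoding. Once the state is fixed, $S$ is a fixed set, so \cref{thm:incompressibility} with $s=0$ gives $\mu(E)\le\delta^{4|S|}/\alpha$ directly. In Claim C, successful tests are increasing events, not decreasing ones. Your Harris/FKG step is valid only after you note that every successful test set lies inside the final $S$. Then conditioning on successes is exactly $S\subseteq\mathsf{MARKED}$, and only the failed tests constrain coordinates outside $S$.
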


In particular, Markov's inequality applied to the rejection probability
under $\NO_\lambda$ gives
\begin{equation}
  \label{app:public-coin-outer}
  \Pr_{\ora\leftarrow\NO_\lambda}
    [\Pr[\langle\widetilde P^\ora,V^\ora\rangle=\ACCEPT]>1/3]
    \ge1-\tfrac32(N+1)\delta.
\end{equation}

\begin{proof}[Proof of \cref{thm:main}]
  The QMA upper bound follows from \cref{qma:yesprob,qma:noprob}.
  To prove the QCIP lower bound, fix $(P_0,V_0)$ satisfying
  \eqref{eq:main-completeness}, take $\lambda$ sufficiently large,
  and apply the first two steps above.

  \emph{Acceptance under $\YES_\lambda$.}
  Fix an oracle $\ora$ where
  $\Pr[\langle P_0^\ora,V_0^\ora\rangle=\ACCEPT]\ge2/3$
  and the simulation bound holds. By
  \cref{est:application,lem:public-coin-perfect},
  \[
    \begin{aligned}
      \Pr[\langle P_1^\ora,V_1^\ora\rangle=\ACCEPT]&\ge2/3-1/100,\\
      \Pr[\langle P_2^\ora,V_2^\ora\rangle=\ACCEPT]&=1.
    \end{aligned}
  \]
  A union bound under $\YES_\lambda$ gives
  \[
    \Pr_{\ora\leftarrow\YES_\lambda}
      [\Pr[\langle P_2^\ora,V_2^\ora\rangle=\ACCEPT]=1]
      \ge1-\varepsilon-\eta_\lambda.
  \]
  Since $\varepsilon<1/3$, this probability is at least $\delta^2$
  for all sufficiently large $\lambda$.
  Let $N_2$ bound the number of queries made by $V_2$.
  Applying \cref{app:public-coin-lower-bound} and
  \eqref{app:public-coin-outer}, we obtain a prover $\widetilde P$ with
  \[
    \Pr_{\ora\leftarrow\NO_\lambda}
      [\Pr[\langle\widetilde P^\ora,V_2^\ora\rangle=\ACCEPT]>1/3]
    \ge1-\tfrac32(N_2+1)\delta.
  \]
  \emph{Acceptance under $\NO_\lambda$.}
  We now use the two transformations to obtain the required bound for
  $V_0$. Recall that the simulation guarantee holds simultaneously
  for all provers, except with probability $\eta_\lambda$ under
  $\NO_\lambda$. On each oracle $\ora$ where it holds,
  \cref{est:application,lem:public-coin-perfect} imply
  \[
    \begin{aligned}
      &\forall P^*,\ \Pr[\langle(P^*)^\ora,V_0^\ora\rangle=\ACCEPT]\le1/3\\
      &\ \Longrightarrow\
      \forall P^*,\ \Pr[\langle(P^*)^\ora,V_1^\ora\rangle=\ACCEPT]
        \le1/3+1/100\\
      &\ \Longrightarrow\
      \forall P^*,\ \Pr[\langle(P^*)^\ora,V_2^\ora\rangle=\ACCEPT]\le1/3.
    \end{aligned}
  \]
  Taking the contrapositive, some prover makes $V_0$ accept with
  probability greater than $1/3$ on a set of oracles of probability at
  least $1-\tfrac32(N_2+1)\delta-\eta_\lambda$ under $\NO_\lambda$.
  Consequently,
  \[
    \begin{aligned}
      &\Pr_{\ora\leftarrow\NO_\lambda}
        [\forall P^*:\ \Pr[\langle(P^*)^\ora,V_0^\ora\rangle=\ACCEPT]\le1/3]\\
      &\qquad\le\tfrac32(N_2+1)\delta+\eta_\lambda\le\frac23+\varepsilon,
    \end{aligned}
  \]
  proving \eqref{eq:main-soundness}.
\end{proof}

\begin{remark}[Unrestricted computation and classical advice]
  \label{rem:query-model-hardness}
  The QCIP lower bound in \cref{thm:main} also holds when the quantum
  verifier has unrestricted local computation and both parties receive
  classical advice of arbitrary length, fixed independently of the
  sampled oracle input (\cref{def:query-model}). For fixed bounds on
  rounds, communication, and queries, the required lower bound on
  $\lambda$ is independent of the advice string.
  We use this in \cref{app:auxiliary}: the already fixed oracle values
  serve as advice when choosing the next block.
\end{remark}

\input{sections/qcip/simulation}

\input{sections/qcip/public-coin-lower-bound}

%% file: sections/qcip/simulation.tex
\subsection{From QCIP to Classical Queries}
\label{sec:simulation}
In this section, we prove that a quantum verifier exchanging classical
messages can be simulated with classical oracle queries. We first give
a general theorem whose error bound is expressed through an
incompressibility experiment. We then apply the two bounds from
\cref{sec:codes} to prove \cref{est:application}.

\subsubsection{Simulation with Classical Queries and Incompressibility}
\label{sec:general-simulation}
We consider an oracle $\ora_F$ for membership in an unknown set $F$.
Following the interactive estimation method of~\cite{QFire}, the
simulator prepares the state before each verifier query and measures
its input register to obtain an address $z$. If $z$ is not already in
the database, it queries $\ora_F(z)$ classically and adds $z$ if the answer
is one. We call this an \emph{estimation trial}.

To bound the number of trials, we will show that the database can be
recovered from a short classical string: the transcript and the indices
of trials that added an address. This allows us to apply incompressibility
even though the prover has unrestricted oracle access.

\begin{theorem}[Simulation with classical queries and incompressibility]
  \label{est:simulation}
  Let $\mathcal X$ be a finite oracle domain. For $F\subseteq\mathcal X$,
  let $\ora_F$ be its membership oracle.
  Let $V$ be a quantum verifier exchanging classical messages, with
  the circuit conventions of \cref{def:query-model}. Fix a common input,
  and let $r\ge1$ be its number of rounds, $N\ge1$ its number of queries
  per round, and $c$ its transcript length, including the final decision.
  We take these counts to be fixed by padding. Its queries have length
  at most $\ell$.\footnote{For the classical implementation, we use a
  standard universal gate set with algebraic entries, such as Hadamard,
  $T$, and CNOT~\cite{BMPRV00}.}

  For every integer $t\ge1$ and $0<\epsilon<1$, there are a verifier $V'$
  making only classical oracle queries and a randomized decoder $\mathsf{Dec}$
  that makes no oracle queries, both depending only on $V,t,\epsilon$.
  The decoder takes a message $m=(\pi,I)\in\bits^s$, where
  $\pi\in\bits^c$, $I\in\binom{[M]}t$, and
  \[
    s=c+t\lceil\log_2M\rceil,
    \qquad M=\poly(r,N,t,\ell,1/\epsilon).
  \]
  We write $\mathsf{Dec}(\pi,I)$ for this decoder, representing $I$ by
  its $t$ indices in increasing order.
  Write $\omega$ for the decoder's random tape, sampled
  independently of $F$. Define
  \begin{equation}
    g(F):=\Pr_\omega\!\left[
      \exists\pi\in\bits^c,\ I\in\binom{[M]}t:\ \mathsf{Dec}(\pi,I;\omega)\in\binom Ft
    \right].\label{est:gdefinition}
  \end{equation}
  Here the encoder knows both $F$ and the shared randomness $\omega$,
  so it may choose its message $(\pi,I)$ after seeing both. The same
  $\omega$ is used to decode every candidate message.
  For every $F\subseteq\mathcal X$ and every prover $P^*$,
  \begin{equation}
    \begin{aligned}
      &\bigl|\Pr[\langle (P^*)^{\ora_F},V^{\ora_F}\rangle=\ACCEPT]
        -\Pr[\langle (P^*)^{\ora_F},(V')^{\ora_F}\rangle=\ACCEPT]\bigr|\\
      &\hspace{5cm}\le\epsilon+g(F).
    \end{aligned}\label{est:uniform}
  \end{equation}
  The verifier $V'$ has the same rounds and message lengths as $V$,
  and makes at most $M$ classical queries. Moreover, $V'$ admits a
  randomized classical implementation using
  \[
    O\!\left((M+r)\cdot\left(\ell+c+\log\frac{M+r}{\epsilon}\right)\right)
  \]
  random bits. Local computation and workspace are unrestricted.
\end{theorem}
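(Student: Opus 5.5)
We construct $V'$ as the database simulator sketched in \cref{overview:simulation}, with parameters chosen from $r,N,t,\ell,\epsilon$. Set $\theta=(\epsilon/(8rN))^2$ and let $K_0=\lceil(\ell+\log(4rN/\epsilon))/\theta\rceil$, so that $(2^{\ell+1}+1)(1-\theta)^{K_0}\le\epsilon/(4rN)$. The verifier $V'$ keeps the transcript $\pi$ and databases $D_{i,j}$. For each query $j$ of round $i$, it repeats estimation trials until $K_0$ consecutive trials add nothing. A trial consists of the following steps.
\begin{enumerate}
  \item Prepare the conditional private state with $\mathsf{PrepState}$, i.e.\ rejection sampling of $V$ on the databases until the transcript matches $\pi$.
  \item Run $V[i]$ on $m_i$ up to query $j$, answering earlier queries $k<j$ by $\ora_{D_{i,k}}$.
  \item Measure $\mathsf X$ to obtain $z$, and query $\ora_F(z)$ only if $z\notin D_{i,j}$.
\end{enumerate}
The process rejects once any database reaches size $t$. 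Afterwards, $V'$ runs $V[i]$ with the saved databases and measures the outgoing message. The total number of trials, and hence of classical queries, is at most $M=(rN+t)K_0$, because each query index gets one final run of $K_0$ empty trials and each of the at most $t$ additions per database can reset a run (summing additions over databases gives the bound up to a polynomial factor). The rejection sampling in $\mathsf{PrepState}$ is replaced by exact classical computation of the conditional pure state, since local computation is unrestricted. The classical implementation samples measurement outcomes from exactly computed amplitudes, rounded to precision $\epsilon/(M+r)$. This uses $O(\ell+c+\log((M+r)/\epsilon))$ random bits per measurement, which gives the stated randomness count; the rounding contributes error at most $\epsilon/4$.

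\emph{Error of the uncapped simulator.} Consider first $V'$ without the cutoff. By \cref{lem:classical-prover}, fix an arbitrary classical prover; conditioned on the transcript, the verifier's state is pure and depends only on $V$, $F$ and $\pi$. We run a hybrid over the $rN$ queries, replacing the true oracle by the database oracle one query at a time. For a fixed query, let $p_{i,j}(D)$ be the squared weight of $\ket{\psi_{i,j}}$ on $z\in F\setminus D$. If the loop exits with $p_{i,j}\le\theta$, then replacing $U_{\ora_F}$ by $U_{\ora_D}$ moves the post-query state by at most $2\sqrt\theta$ in Euclidean norm. The loop exits while $p_{i,j}>\theta$ with probability at most $(2^{\ell+1}+1)(1-\theta)^{K_0}$, by a union bound over database states reached, since each addition strictly grows $D$ inside a domain of size at most $2^{\ell}$.

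The delicate point is that later rounds use \emph{estimated} rather than true states, and the prover's next message depends on the transcript. I will handle this by comparing the joint distributions of (transcript, decision) in the two hybrids. Trace distance is contractive under the prover's classical reply map and under $\mathsf{PrepState}$, because $\mathsf{PrepState}$ only reproduces the hybrid's own conditional state, so the per-query errors add up. The total is at most $rN(2\sqrt\theta+\epsilon/(4rN))\le\epsilon/2$.

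\emph{Effect of the cutoff, via the decoder.} The capped and uncapped simulators agree unless the cap is hit, so it remains to bound the probability of hitting the cap by $g(F)$, uniformly over provers. The decoder $\mathsf{Dec}(\pi,I;\omega)$ reruns $V'$ with random tape $\omega$ and incoming messages read from $\pi$, and makes no oracle queries. Whenever a trial outputs some $z\notin D$, it answers $1$ if and only if the trial's global index lies in $I$. It outputs the first database that reaches size $t$.

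Fix $\omega$ such that a real execution of $V'$ on $F$ with some prover hits the cap. Take $\pi$ to be the transcript prefix of that execution and $I$ to be the indices of the additive trials. Then the replay reproduces the same trial sequence, because the answers to queries outside $D$ coincide, and so it outputs $t$ elements of $F$. Hence the event of hitting the cap is contained in the event defining $g(F)$, for every prover. It follows that the cutoff changes the acceptance probability by at most $g(F)$, and together with the previous bounds this gives \eqref{est:uniform}.

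The main obstacle is making the hybrid argument rigorous across rounds. The conditional pure state is defined through a rejection-sampling procedure that itself uses the estimated databases, and adaptive prover messages couple the rounds. I expect to handle this by fixing, in each hybrid, the entire prover-and-transcript process, and bounding trace distance of the joint classical–quantum state after each replaced query.
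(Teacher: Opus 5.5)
Your simulator and overall plan coincide with the paper's. The shared ingredients are the database estimation with the ``$K_0$ consecutive empty trials'' rule, the union bound over at most $2^{\ell+1}+1$ estimation intervals, the $2\sqrt\theta$ bound from the missing weight $p_{i,j}$, a hybrid over the $rN$ queries, the transcript-plus-addition-indices decoder, and rounding for the classical implementation.

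\textbf{The gap.} It is the step you flag yourself, and the mechanism you sketch for it does not work as stated. $\mathsf{PrepState}$ is not a channel applied to the perturbed register. It discards the current state and rebuilds one from classical data using database answers. So in hybrids that re-prepare the private state each round, the error at query $s$ does not propagate by contraction. Also, the hybrid with no replacements equals $V$ only if re-preparation itself answers unreplaced positions with $U_F$. In that case consecutive hybrids differ inside every later re-preparation too. Likewise, ``$\mathsf{PrepState}$ only reproduces the hybrid's own conditional state'' is false once some earlier query in that hybrid used $U_F$.

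\textbf{How the paper closes it.} It uses a specific hybrid design and one conditioning identity. In $\mathsf{Hyb}_s$ the verifier keeps its actual register $\mathsf R$ across rounds. It applies $U_{D_{i,j}}$ to $\mathsf R$ for the first $s$ queries and $U_F$ afterwards, while estimation trials always run on fresh copies from $\mathsf{Prep}(T_{i-1})$. \Cref{est:refined-history} shows the following: as long as all earlier queries on $\mathsf R$ used the saved databases, conditioning on the messages and on all estimation outcomes so far leaves $\mathsf R$ exactly in $\ket{\psi_{i,j}}$. This holds because the fresh estimation registers are independent of $\mathsf R$ given the history, and by \cref{lem:classical-prover} each prover reply depends only on the transcript. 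This one identity gives everything you need:
\begin{itemize}
\item the estimation at query $s$ samples the real register's query distribution, so the one-query bound applies;
\item after query $s$, both hybrids apply the same channel to the pair (classical history, $\mathsf R$), since later estimations only touch fresh registers and databases no longer applied to $\mathsf R$;
\item $\mathsf{Hyb}_0\equiv V$ and $\mathsf{Hyb}_{rN}\equiv\widehat V_\infty$.
\end{itemize}

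\textbf{Smaller fixes.}
\begin{itemize}
\item The databases are cumulative ($D_{i,0}=D_{i-1,N}$), so the whole run has at most $t$ additions and $M=(rN+t)K_0$ with no extra factor.
\item With natural logarithms, your $K_0$ only guarantees $(2^{\ell+1}+1)(1-\theta)^{K_0}\le\epsilon/(4rN)$ when $(2^{\ell+1}+1)e^{-\ell}\le1$, i.e.\ for $\ell\ge3$. Use $2(\ell+1)$ as in \eqref{est:parameters}.
\item Rounding must be per outcome, to precision about $\epsilon/((M+r)L)$ with $L=\max\{2^{\ell+1},2^c\}$. It must be applied to the capped simulator, which has at most $M+r$ samples, whereas the uncapped one may have exponentially many.
\item If $\mathsf{Dec}$ replays the rounded $V'$, you need to transfer the cutoff bound to the exact capped simulator through the $\epsilon/4$ total-variation bound; the total is still $\epsilon+g(F)$. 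Alternatively, let $\mathsf{Dec}$ replay the exact simulator, as the paper does.
\end{itemize}
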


We prove the theorem in the following order.
\begin{enumerate}
  \item We construct a simulator $\widehat V_\infty$ without a database
    cutoff and bound the change in acceptance probability by $\epsilon/3$.
  \item We stop adding addresses once the database reaches size $t$,
    obtaining $\widehat V$ with at most $M$ queries. We show that
    $\mathsf{Dec}$ recovers the database whenever this occurs, so the
    probability of reaching the cutoff is at most $g(F)$.
  \item Since the simulators agree until the cutoff, the acceptance
    probabilities of $V$ and $\widehat V$ differ by at most
    $\epsilon/3+g(F)$. Finally, we obtain $V'$ by classical sampling
    with bounded randomness.
\end{enumerate}

\subsubsection{The Simulator without a Database Cutoff}
We first explain how to prepare the verifier's private state from its
classical interaction. By \cref{lem:classical-prover}, we may use the
same classical randomized replacement for $P^*$ against $V$, the
simulator, and all the hybrids below. This replacement receives the
full transcript and retains no private state between replies.

\paragraph{The verifier in each round.}
Let $\mathsf R$ be the verifier's private register, initially in $\ket0$.
In round $i$, it receives $m_i$ and runs $V[i]^\ora(\mathsf R,m_i)$.
This algorithm updates $\mathsf R$, makes $N$ queries, and writes its
outgoing message in $\mathsf M$. The verifier measures $\mathsf M$ to
obtain $m'_i$ and retains $\mathsf R$ for the next round; $m'_r$ is its
final decision. If $V$ speaks first, $m_1$ is empty. We use the
purification convention of \cref{sec:prelims}, so all private working
registers are included in $\mathsf R$.
If $V$ rejects early, it sends fixed messages in the remaining rounds
and outputs $\REJECT$ at the end.

The simulator will prepare the state before each query by repeating
these computations. To do so, it must use the same database at each
earlier query, so it saves the databases together with the transcript.

\begin{definition}[Simulation history]
  \label{est:state}
  The simulator keeps a database $D_{i,j}$ for query $j$ of round $i$,
  initialized as a copy of $D_{i,j-1}$ before estimating that query.
  It updates only this database during estimation, then keeps it fixed
  for the query and all later state preparations. We set
  $D_{0,N}=\varnothing$ and $D_{i,0}=D_{i-1,N}$.

  After round $i$, the transcript and the \emph{simulation history} are,
  respectively,
  \begin{equation}
    \label{est:conditionalstate}
    \begin{aligned}
      \pi_i&=(m_1,m'_1,\ldots,m_i,m'_i),\\
      T_i&=\bigl((m_s,m'_s,(D_{s,k})_{k=1}^N)\bigr)_{s=1}^i.
    \end{aligned}
  \end{equation}
  Fix $T_i$ and run rounds $1,\ldots,i$ from the initial state,
  supplying each recorded message $m_s$ and answering query $j$ in
  round $s$ with $\ora_{D_{s,j}}$. We denote the round-$s$ circuit
  with these answers by $V[s]^{(D_{s,k})_{k=1}^N}$. Write
  $\widehat m_1,\ldots,\widehat m_i$ for the measured messages of this
  execution, and put
  \[
    p(T_i):=\Pr[\widehat m_1=m'_1,\ldots,\widehat m_i=m'_i],
  \]
  where only the verifier's measurements are random.

  We call $T_i$ \emph{possible} if $p(T_i)>0$, and set
  $\pi_0=T_0=\varnothing$ and $p(T_0)=1$.
\end{definition}

We now define $\mathsf{Prep}$, which repeats this execution until its
outgoing messages agree with the transcript.

\begin{protocol}
  \protocoltitle{$\mathsf{Prep}(T_i)$}
  \textbf{Input:} A possible simulation history
  $T_i=\bigl((m_s,m'_s,(D_{s,k})_{k=1}^N)\bigr)_{s=1}^i$.
  \par
  \textbf{Output:} The private register $\mathsf R$ conditioned on the
  verifier messages specified in $T_i$.
  \protocolbody
  \begin{enumerate}
    \item
      Initialize a fresh register $\mathsf R$ to $\ket{0}$.
    \item
      For $s=1,\ldots,i$:
      \begin{enumerate}[label*=\arabic*.,leftmargin=*]
        \item
          Run $V[s]^{(D_{s,k})_{k=1}^N}$ on
          $(\mathsf R,m_s)$, writing its message in $\mathsf M$
          and retaining its private state in $\mathsf R$.
        \item
          Measure $\mathsf M$ in the computational basis to obtain
          $\widehat m_s$.
        \item
          If $\widehat m_s\ne m'_s$, discard
          $\mathsf M\mathsf R$ and restart at step 1.
          Otherwise discard $\mathsf M$ and continue.
      \end{enumerate}
    \item
      Return $\mathsf R$.
  \end{enumerate}
\end{protocol}

Observe that a successful attempt prepares the verifier's state
conditioned on the messages in $T_i$. Hence, if $p(T_i)>0$,
$\mathsf{Prep}(T_i)$ terminates with probability one and makes no oracle
queries. We also define its output to be $\ket0$ when $p(T_i)=0$;
this convention will be used for histories that cannot occur in the
simulated interaction.\footnote{The verifier's circuit and databases
determine $p(T_i)$, so the classical implementation below can test
whether it is zero.}

\paragraph{The state before a query.}
Fix the saved history $T_{i-1}$, incoming message $m_i$, and
databases $D_{i,1},\ldots,D_{i,j-1}$. We run
$\mathsf{Prep}(T_{i-1})$ to prepare a fresh register $\mathsf R$, then
run $V[i](\mathsf R,m_i)$, stopping immediately before query $j$ and using
$\ora_{D_{i,k}}$ at each query $k<j$.
We denote the resulting state by $\ket{\psi_{i,j}}$ and write
$\mathsf R=\mathsf X\mathsf A\mathsf W$, where
$\mathsf X$ is the query-input register, $\mathsf A$ is the answer register,
and $\mathsf W$ contains the remaining private qubits.
Writing $z$ for a query address, $o\in\bits$ for the answer bit, and
$w$ for a workspace basis label, we have
\begin{equation}
  \label{est:querystate}
  \ket{\psi_{i,j}}
  =\sum_{z,o,w}\alpha_{i,j,z,o,w}\ket{z}_{\mathsf X}
    \ket{o}_{\mathsf A}\ket{w}_{\mathsf W},
  \qquad \sum_{z,o,w}|\alpha_{i,j,z,o,w}|^2=1.
\end{equation}
Measuring $\mathsf X$ gives $z$ with probability
$\sum_{o,w}|\alpha_{i,j,z,o,w}|^2$. Since preparation uses only earlier
databases, additions to $D_{i,j}$ do not change this distribution.

\paragraph{The simulator.}
We can now give the simulator. It stores $T_i$ and calls $\mathsf{Prep}$
whenever it needs the verifier's private state.
Let $N_{\mathrm{tot}}=rN$ and set
\begin{equation}
  \label{est:parameters}
  \theta=\left(\frac{\epsilon}{12N_{\mathrm{tot}}}\right)^2,
  \qquad
  K_0=\left\lceil\theta^{-1}
    \left(2(\ell+1)+\ln\frac{6N_{\mathrm{tot}}}\epsilon\right)\right\rceil.
\end{equation}
Here $|\mathcal X|\le2^{\ell+1}$, and
$2\ell/\theta\le K_0=\poly(r,N,\ell,1/\epsilon)$.
For each query, estimation ends after $K_0$ consecutive trials without
an addition. We place no restriction on the database size yet.

\begin{protocol}
  \protocoltitle{$\widehat V_\infty^{\ora_F}$}
  \textbf{Oracle:} $\ora_F$.
  \protocolbody
  Initialize $D_{0,N}=\varnothing$ and history $T_0=\varnothing$.
  In round $i$, receive $m_i$, set $D_{i,0}\gets D_{i-1,N}$, and proceed
  as follows:
  \begin{enumerate}
    \item
      For $j=1,\ldots,N$:
      \begin{enumerate}[label*=\arabic*.,leftmargin=*]
        \item
          Copy $D_{i,j}\gets D_{i,j-1}$ and set $k=0$.
        \item
          While $k<K_0$:
          \begin{enumerate}[label*=\arabic*.,leftmargin=*]
            \item
              Run $\mathsf{Prep}(T_{i-1})$ to prepare a fresh register
              $\mathsf R$.
              Run $V[i]$ on $(\mathsf R,m_i)$, stopping before query $j$
              and answering each earlier query $u<j$ with $\ora_{D_{i,u}}$.
              Measure $\mathsf X$ in the resulting state
              $\ket{\psi_{i,j}}$ to obtain $z$, then discard $\mathsf R$.
            \item
              If $z\in D_{i,j}$, set $k\gets k+1$.
              Otherwise query $b\gets\ora_F(z)$ and set
              \[
                (D_{i,j},k)\gets
                \begin{cases}
                  (D_{i,j}\cup\{z\},0),&b=1,\\
                  (D_{i,j},k+1),&b=0.
                \end{cases}
              \]
          \end{enumerate}
        \item
          Keep $D_{i,j}$ fixed from this point onward.
      \end{enumerate}
    \item
      Run $\mathsf{Prep}(T_{i-1})$ to prepare a fresh register
      $\mathsf R$, then run
      $V[i]^{(D_{i,k})_{k=1}^N}(\mathsf R,m_i)$, writing the outgoing
      message in $\mathsf M$.
      Measure $\mathsf M$ to obtain $m'_i$, then discard $\mathsf R\mathsf M$.
    \item
      Append $(m_i,m'_i,(D_{i,k})_{k=1}^N)$ to $T_{i-1}$ to obtain
      $T_i$, and send $m'_i$;
      in the last round, output the decision $m'_r$.
  \end{enumerate}
\end{protocol}

Observe that $D_{i,j}\subseteq F$ throughout the execution and each
address is added at most once. There are therefore at most
$|\mathcal X|\le2^{\ell+1}$ additions in total. Divide estimation into
intervals starting at $k=0$ and ending at an addition or at $K_0$
consecutive trials without one. Each contains at most $K_0$ trials,
with one final interval per query, so
\[
  \#\text{classical queries}\le\#\text{trials}
    \le(N_{\mathrm{tot}}+2^{\ell+1})\cdot K_0.
\]
Thus $\widehat V_\infty$ finishes with probability one, but may make
exponentially many queries in $\ell$.

\subsubsection{Accuracy without a Database Cutoff}
We will show that $\widehat V_\infty$ changes the acceptance probability
by at most $\epsilon/3$. We first bound the query weight on addresses
in $F$ that have not been added to the database, then use this bound
to compare consecutive hybrids.

\begin{lemma}[Estimating one query]
  \label{est:onequery}
  Fix $F\subseteq\mathcal X$ and all messages, measured query addresses,
  and oracle answers preceding the estimation of query $j$ in round $i$
  of $\widehat V_\infty$. Initialize
  $D_{i,j}$ as a copy of $D_{i,j-1}\subseteq F$.
  Let $\ket{\psi_{i,j}}$ be as in \eqref{est:querystate}, let $\theta$
  be as in \eqref{est:parameters}, and define
  \begin{equation}
    \label{est:missingweight}
    p_{i,j}(D_{i,j})
    :=\sum_{z\in F\setminus D_{i,j}}\sum_{o,w}|\alpha_{i,j,z,o,w}|^2.
  \end{equation}
  Let $\mathcal B_{i,j}$ be the event that estimation finishes with
  $p_{i,j}(D_{i,j})>\theta$. Then
  \[
    \Pr[\mathcal B_{i,j}]\le\frac{\epsilon}{6N_{\mathrm{tot}}}.
  \]
  The probability is over the fresh samples used to estimate this query.
  Let $U_F,U_{D_{i,j}}$ denote the query gates of $\ora_F,\ora_{D_{i,j}}$.
  Outside this event, the final database $D_{i,j}$ satisfies
  \[
    \bigl\|(U_F-U_{D_{i,j}})\ket{\psi_{i,j}}\bigr\|_2^2\le4\theta.
  \]
\end{lemma}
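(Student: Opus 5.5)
We split the argument into a probabilistic bound on the event $\mathcal B_{i,j}$ and a deterministic norm estimate, and we handle the first by a union bound over the stages of the estimation loop. The key structural fact is that the distribution of the measured address $z$ in each trial is fixed once the earlier databases $D_{i,1},\ldots,D_{i,j-1}$ and the history $T_{i-1}$ are fixed. This holds because $\mathsf{Prep}(T_{i-1})$ and the answers to queries $u<j$ do not depend on $D_{i,j}$. Each trial therefore draws a fresh independent sample $z$ from the distribution $\Pr[z]=\sum_{o,w}|\alpha_{i,j,z,o,w}|^2$, and $p_{i,j}(D_{i,j})$ is exactly the probability that this sample lies in $F\setminus D_{i,j}$. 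A sample in $F\setminus D_{i,j}$ always triggers an addition, since the query returns $b=1$. Conversely, a trial without an addition means $z\in D_{i,j}$ or $z\notin F$.

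\emph{The event $\mathcal B_{i,j}$.} Divide the estimation into stages, one per value of the database; a stage ends either at an addition or after $K_0$ consecutive non-adding trials. Conditioned on the database at the start of a stage, the trials in that stage are i.i.d. If $p_{i,j}(D)>\theta$ for the current database $D$, the stage ends without an addition only if all $K_0$ trials miss $F\setminus D$. This has probability at most $(1-\theta)^{K_0}\le e^{-\theta K_0}$. Every addition strictly grows $D_{i,j}\subseteq F\subseteq\mathcal X$, so there are at most $|\mathcal X|+1\le 2^{\ell+1}+1$ stages. A union bound then gives
\[
\Pr[\mathcal B_{i,j}]\le (2^{\ell+1}+1)\,e^{-\theta K_0}\le (2^{\ell+1}+1)\,e^{-2(\ell+1)}\cdot\frac{\epsilon}{6N_{\mathrm{tot}}}\le\frac{\epsilon}{6N_{\mathrm{tot}}},
\]
using the choice of $K_0$ in \eqref{est:parameters}. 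Some care is needed here because the union is over adaptively reached stages. We handle this by indexing stages by the number of additions made so far and applying the conditional bound at each stage.

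\emph{The norm bound.} The gates $U_F$ and $U_{D_{i,j}}$ agree on every basis state $\ket z\ket o\ket w$ with $z\notin F\setminus D_{i,j}$, because $D_{i,j}\subseteq F$ gives $\ora_F(z)=\ora_{D_{i,j}}(z)$ there. On $z\in F\setminus D_{i,j}$, the two gates send $\ket z\ket o\ket w$ to the orthogonal states $\ket z\ket{o\oplus1}\ket w$ and $\ket z\ket o\ket w$. Hence the difference $(U_F-U_{D_{i,j}})\ket{\psi_{i,j}}$ is supported on these $z$ and equals
\[
\sum_{z\in F\setminus D_{i,j}}\sum_{o,w}\alpha_{i,j,z,o,w}\,\ket z\bigl(\ket{o\oplus1}-\ket o\bigr)\ket w.
\]
Its squared norm is at most $2\cdot 2\,p_{i,j}(D_{i,j})$. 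One factor of $2$ comes from $\|\ket{o\oplus1}-\ket o\|^2=2$; the other comes from cross terms between $o$ and $o\oplus1$, bounded by $\|a-b\|^2\le 2\|a\|^2+2\|b\|^2$. This totals at most $4\theta$ outside $\mathcal B_{i,j}$.

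The main obstacle is making the independence claim rigorous. We must argue that the per-trial distribution of $z$ is stationary within the estimation of query $j$, which requires that $\mathsf{Prep}$ always terminates and samples the correct conditional state for a possible history. This follows from the earlier observation about $\mathsf{Prep}$ together with the fact that the earlier databases are frozen. Once that is in place, the geometric tail bound and the union bound over the at most $2^{\ell+1}+1$ stages are routine.
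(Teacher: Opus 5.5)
Your proposal is correct and follows the paper's proof essentially step for step. It uses the same division of the estimation into intervals, each ending at an addition or after $K_0$ consecutive non-adding trials, the same per-interval bound $(1-\theta)^{K_0}\le e^{-\theta K_0}$ combined with the choice of $K_0$, and the same union bound over at most $2^{\ell+1}+1$ intervals. The only difference is cosmetic: you verify $\|(U_F-U_{D_{i,j}})\ket{\psi_{i,j}}\|_2^2\le 4p_{i,j}(D_{i,j})$ by direct computation, whereas the paper cites the BBBV query bound.
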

\begin{proof}
  Recall that $\ket{\psi_{i,j}}$ does not change during estimation
  of query $j$.
  Condition on the trial outcomes preceding one estimation interval.
  Until the next addition, each trial independently adds an address
  with probability $p_{i,j}(D_{i,j})$. We consider two cases.

  \emph{Case 1: $p_{i,j}(D_{i,j})>\theta$.}
  The interval ends without an addition only if all $K_0$ trials
  fail to add an address. Thus
  \begin{equation}
    \label{est:misses}
    \Pr[\text{interval ends without an addition}]
    =(1-p_{i,j}(D_{i,j}))^{K_0}
    \le 2^{-K_0\theta/\ln2}
    \le\frac{\epsilon}{6N_{\mathrm{tot}}(2^{\ell+1}+1)}.
  \end{equation}
  \emph{Case 2: $p_{i,j}(D_{i,j})\le\theta$.}
  Further additions can only decrease $p_{i,j}(D_{i,j})$, so the
  final database also satisfies this inequality. The BBBV query
  bound~\cite[Section~3]{bbbv97} then gives
  \[
    \bigl\|(U_F-U_{D_{i,j}})\ket{\psi_{i,j}}\bigr\|_2^2
    \le4\sum_{z\in F\setminus D_{i,j}}\sum_{o,w}|\alpha_{i,j,z,o,w}|^2
    =4p_{i,j}(D_{i,j})\le4\theta.
  \]

  \emph{Combining the cases.}
  The event $\mathcal B_{i,j}$ requires an interval in Case 1 to end
  without an addition. There are at most $2^{\ell+1}+1$ intervals,
  so a union bound gives
  \[
    \Pr[\mathcal B_{i,j}]
      \le(2^{\ell+1}+1)\cdot
        \frac{\epsilon}{6N_{\mathrm{tot}}(2^{\ell+1}+1)}
      =\frac{\epsilon}{6N_{\mathrm{tot}}}.
  \]
  Outside $\mathcal B_{i,j}$, Case 2 gives
  $\|(U_F-U_{D_{i,j}})\ket{\psi_{i,j}}\|_2^2\le4\theta$.
\end{proof}

\paragraph{Hybrids.}
Fix $F\subseteq\mathcal X$ and a prover $P^*$. We replace the
verifier's queries one at a time, while $P^*$ retains access to $\ora_F$.
For $0\le s\le N_{\mathrm{tot}}$, define $\mathsf{Hyb}_s$ to use
the saved database oracles for the first $s$ queries and $\ora_F$
for the remaining queries. Estimation uses fresh samples and does
not act on the verifier's private register $\mathsf R$.

\begin{protocol}
  \protocoltitle{$\mathsf{Hyb}_s^{\ora_F}$}
  \textbf{Oracle:} $\ora_F$.
  \protocolbody
  Initialize $\mathsf R$ to $\ket0$, $D_{0,N}=\varnothing$, and $T_0=\varnothing$.
  In round $i$, receive $m_i$ and set $D_{i,0}\gets D_{i-1,N}$.
  Run $V[i]$ on $(\mathsf R,m_i)$, applying the following rules
  at each query $j$ and at the end of the round.
  \begin{enumerate}
    \item Immediately before query $j$ of round $i$, copy
      $D_{i,j}\gets D_{i,j-1}$ and run the estimation loop of
      $\widehat V_\infty$ using fresh samples. Then keep $D_{i,j}$ fixed.
    \item Apply the query to $\mathsf R$ using
      \[
        \begin{cases}
          U_{D_{i,j}},&(i-1)N+j\le s,\\
          U_F,&(i-1)N+j>s.
        \end{cases}
      \]
    \item At the end of the round, measure the verifier's outgoing
      register $\mathsf M$ to obtain $m'_i$. Append
      $(m_i,m'_i,(D_{i,k})_{k=1}^N)$ to $T_{i-1}$ to obtain $T_i$,
      and send $m'_i$;
      in the last round, output $m'_r$.
  \end{enumerate}
\end{protocol}

Note that estimation uses the saved databases in every hybrid, with
the convention for $\mathsf{Prep}$ on impossible histories. At the query
where consecutive hybrids differ, all earlier verifier queries also
used the saved databases. We next show that the verifier is then in
state $\ket{\psi_{i,j}}$, conditioned on the messages and estimation
outcomes so far. This will allow us to apply \cref{est:onequery}.

\begin{lemma}[Conditioning on the estimation outcomes]
  \label{est:refined-history}
  Consider $\mathsf{Hyb}_s$ just before estimating query $j$ of round $i$,
  where $(i-1)N+j\le s+1$. Let $A$ be the next trial index and record
  the messages and preceding estimation outcomes as
  \[
    \tau=(\pi_{i-1},m_i,(z_a,b_a)_{a<A}),
  \]
  where $z_a$ is the address sampled in trial $a$ and
  $b_a=\ora_F(z_a)$.
  Conditioned on any value of $\tau$ with positive probability,
  the verifier's register $\mathsf R$ is in state $\ket{\psi_{i,j}}$.
  This remains true after conditioning on further estimation samples
  before applying the query.
\end{lemma}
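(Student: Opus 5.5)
I would prove \cref{est:refined-history} by induction along the execution of $\mathsf{Hyb}_s$, keeping track of the joint state of the verifier's register $\mathsf R$, the prover, and the classical estimation record. The invariant is the following. Conditioned on any positive-probability value of the record so far (the messages $\pi$, the saved databases, and all estimation outcomes $(z_a,b_a)$), the register $\mathsf R$ is in the pure state that $\mathsf{Prep}$ would produce from the current history, run forward to the current point with the saved database oracles. It is also uncorrelated with the estimation record beyond the history. The point is that estimation trials act only on fresh registers produced by $\mathsf{Prep}$, never on $\mathsf R$. Their outcomes are therefore functions of the history $T$, the oracle $F$, and fresh randomness that is independent of $\mathsf R$. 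Conditioning on them leaves the state of $\mathsf R$ unchanged.

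The first step uses \cref{lem:classical-prover}. I replace $P^*$ by a classical randomized prover without memory, who is a function of the transcript and fresh coins. Then, conditioned on the transcript, the state of $\mathsf R$ is a product with everything on the prover's side, and it depends only on $V$, the oracle answers applied to $\mathsf R$, and $\pi$.

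For the induction, consider the queries applied to $\mathsf R$ up to query $j$ of round $i$. The hypothesis $(i-1)N+j\le s+1$ means every earlier query (each has index at most $s$) was answered with $U_{D_{\cdot,\cdot}}$ using the saved databases. Hence, conditioned on $\pi_{i-1}$, the evolution of $\mathsf R$ through rounds $1,\dots,i-1$ is exactly the circuit $V[s']^{(D_{s',k})_k}$ followed by a projection of $\mathsf M$ onto the observed messages. By the definition of $\mathsf{Prep}$ (rejection sampling on these same projections), this equals the normalized output of $\mathsf{Prep}(T_{i-1})$, a pure state by the purification convention. In round $i$, supplying $m_i$ and applying $V[i]$ up to query $j$ with the databases $D_{i,1},\dots,D_{i,j-1}$ gives exactly $\ket{\psi_{i,j}}$ as defined before \eqref{est:querystate}.

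It remains to handle conditioning on the estimation record. Each trial is generated from $T_{i-1}$, $m_i$, the current $D_{i,j}$, $F$, and fresh quantum samples. By induction over trials, the joint distribution factors as (state of $\mathsf R$ given history) $\otimes$ (trial outcomes given history). Bayes' rule then shows that conditioning on $\tau$, or on further samples, preserves $\ket{\psi_{i,j}}$. The prover's later messages depend on the transcript only, so conditioning on $m_i$ is handled by the product structure.

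The main obstacle is this independence claim. The databases $D_{s,k}$ saved in earlier rounds are themselves random and feed into the state of $\mathsf R$. One must check that conditioning on the earlier estimation outcomes (which determine those databases) only fixes which oracle was applied, without additionally biasing $\mathsf R$. I would handle this by proving the invariant jointly at every query boundary rather than only at round boundaries. At each boundary, the estimation registers have been discarded and measured independently of $\mathsf R$, so the conditional state is exactly the one given by the fixed saved databases.
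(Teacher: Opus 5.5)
Your proposal is correct and follows essentially the paper's argument. Both proofs induct over the preceding operations. The verifier gates (all using the saved databases, by $(i-1)N+j\le s+1$) together with the projections onto the observed $m'_u$ reproduce exactly what $\mathsf{Prep}$ does. Estimation trials act only on fresh registers, so conditioning on $(z_a,b_a)$ leaves $\mathsf R$ unchanged. Prover replies are handled via \cref{lem:classical-prover}.

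One point is worth making explicit, as the paper does at the end: positivity of $\tau$ forces $p(T_{i-1})>0$. That is what makes the normalized $\mathsf{Prep}$ output you appeal to well defined.
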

\begin{proof}
  The messages and trial outcomes in $\tau$ determine all saved
  databases. All preceding verifier queries used these databases,
  so we prove the statement by induction over the preceding operations.
  \begin{enumerate}
    \item \emph{Verifier operations.}
      The verifier applies the same gates as the preparation procedure.
    \item \emph{Estimation trials.}
      Fix the messages and outcomes so far, denoted by $\tau'$.
      Conditioned on $\tau'$, the fresh estimation registers are
      independent of $\mathsf R$, and
      $b_a=\ora_F(z_a)$. Hence, for every positive-probability extension,
      \[
        \rho^{\mathsf R}_{\tau',z_a,b_a}=\rho^{\mathsf R}_{\tau'},
      \]
      where $\rho^{\mathsf R}_{\tau'}$ is the verifier's conditional state.
    \item \emph{Classical messages.}
      Conditioning on $m'_u$ projects $\mathsf M$ onto $\ket{m'_u}$,
      as in $\mathsf{Prep}$.
      By \cref{lem:classical-prover}, the distribution of a prover
      reply depends only on $P^*,F$ and the transcript, so conditioning
      on the reply leaves $\mathsf R$ unchanged.
  \end{enumerate}
  Thus $p(T_{i-1})>0$ and the conditional state is $\ket{\psi_{i,j}}$,
  even after further estimation samples.
\end{proof}

\begin{lemma}[One hybrid step]
  \label{est:hybrid-step}
  For every $s\in[N_{\mathrm{tot}}]$,
  \[
    \begin{aligned}
      &\bigl|\Pr[\langle (P^*)^{\ora_F},\mathsf{Hyb}_{s-1}^{\ora_F}\rangle=\ACCEPT]
        -\Pr[\langle (P^*)^{\ora_F},\mathsf{Hyb}_s^{\ora_F}\rangle=\ACCEPT]\bigr|\\
      &\hspace{5cm}\le\frac{\epsilon}{3N_{\mathrm{tot}}}.
    \end{aligned}
  \]
\end{lemma}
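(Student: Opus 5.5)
The plan is to couple $\mathsf{Hyb}_{s-1}$ and $\mathsf{Hyb}_s$ exactly up to the unique query where they differ. Let $s=(i-1)N+j$. The two hybrids run identical operations, including estimation trials with the same fresh samples, verifier gates, measured messages, and the same classical replacement prover from \cref{lem:classical-prover}, up to the point immediately after estimating query $j$ of round $i$. Hence they induce the same joint distribution of the record $\tau=(\pi_{i-1},m_i,(z_a,b_a)_a)$ at that point, including the completed trials for query $j$. In particular the final database $D_{i,j}$ has the same distribution in both. The only difference is that $\mathsf{Hyb}_{s-1}$ then applies $U_F$ to $\mathsf R$, while $\mathsf{Hyb}_s$ applies $U_{D_{i,j}}$. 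Every later operation is the same map applied to the resulting joint state. This covers the later estimation trials (their sampled addresses depend on saved databases, which are classical, and on $\mathsf{Prep}$, which acts on fresh registers), the later queries to $U_F$ or to saved databases, the measurements, and the prover's replies.

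Condition on a positive-probability value of the full estimation record for query $j$. By \cref{est:refined-history}, applicable because $(i-1)N+j\le s\le (s-1)+1$, the verifier register is in the pure state $\ket{\psi_{i,j}}$ in both hybrids. This state is determined by the classical history and is unaffected by the samples used for query $j$. The rest of the execution is a fixed quantum-classical channel acting on $\mathsf R$, the classical record, and the prover's fresh randomness. By monotonicity of trace distance, the conditional difference in acceptance probabilities is therefore at most $\tfrac12\|U_F\ket{\psi_{i,j}}\!\bra{\psi_{i,j}}U_F^\dagger-U_{D}\ket{\psi_{i,j}}\!\bra{\psi_{i,j}}U_D^\dagger\|_1\le\|(U_F-U_{D_{i,j}})\ket{\psi_{i,j}}\|_2$.

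Now average over the conditioning and apply \cref{est:onequery}. On the event $\mathcal B_{i,j}$ we bound the difference by $1$. Its probability, conditioned on everything before the estimation of query $j$, is at most $\epsilon/(6N_{\mathrm{tot}})$, so the same bound holds unconditionally. Outside $\mathcal B_{i,j}$ the difference is at most $\sqrt{4\theta}=2\sqrt\theta=2\cdot\epsilon/(12N_{\mathrm{tot}})=\epsilon/(6N_{\mathrm{tot}})$ by \eqref{est:parameters}. Summing the two contributions gives $\epsilon/(3N_{\mathrm{tot}})$.

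The main obstacle is justifying the coupling rigorously. The later behaviour in both hybrids must depend on the differing query only through the post-query state of $\mathsf R$ together with classical data that has the same distribution in both. This is delicate because the later estimation trials use $\mathsf{Prep}$ on the saved history, not on $\mathsf R$, and the prover adapts to the transcript. I would handle it by writing each hybrid, from that point on, as one fixed channel $\Phi_\tau$ that depends only on the classical record $\tau$. This channel includes the classical replacement prover and all fresh estimation registers, which are independent of $\mathsf R$. I would then check that $\Phi_\tau$ is literally identical in the two hybrids, since the later query rules for indices greater than $s$ coincide. Once this is in place, the contractivity step above is immediate.
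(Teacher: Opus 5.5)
Your proposal is correct and follows essentially the same route as the paper. You condition on the common classical record through the estimation of query $s$ and use \cref{est:refined-history} to place both verifiers in $\ket{\psi_{i,j}}$. You then bound the post-query trace distance by $2\sqrt\theta$ outside $\mathcal B_{i,j}$ via \cref{est:onequery}, and by $1$ on $\mathcal B_{i,j}$, for a total of $\epsilon/(3N_{\mathrm{tot}})$. Your fixed continuation channel $\Phi_\tau$ simply spells out what the paper states in one line, namely that both hybrids then run the same continuation.
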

\begin{proof}
  Fix $s=(i-1)N+j$ and condition on the same $\tau$ in both hybrids before
  estimating this query. By \cref{est:refined-history}, every
  positive-probability estimation outcome leaves both verifiers in
  state $\ket{\psi_{i,j}}$. Their states after the query are
  \[
    \ket{\phi_F}=U_F\ket{\psi_{i,j}},
    \qquad
    \ket{\phi_{D_{i,j}}}=U_{D_{i,j}}\ket{\psi_{i,j}}.
  \]
  We consider two cases:
  \begin{enumerate}
    \item $p_{i,j}(D_{i,j})\le\theta$. By \cref{est:onequery},
      \[
        \tfrac12\bigl\|\ket{\phi_F}\!\bra{\phi_F}
          -\ket{\phi_{D_{i,j}}}\!\bra{\phi_{D_{i,j}}}\bigr\|_1
        \le\|\ket{\phi_F}-\ket{\phi_{D_{i,j}}}\|_2\le2\sqrt\theta.
      \]
      Both hybrids then run the same continuation, including the
      transcript-dependent replies of $P^*$. Hence their acceptance
      probabilities differ by at most $2\sqrt\theta$.
    \item $p_{i,j}(D_{i,j})>\theta$. This is $\mathcal B_{i,j}$;
      we bound the difference by one.
  \end{enumerate}
  Averaging over the estimation outcomes and using \cref{est:onequery}
  bounds the difference by
  \[
    2\sqrt\theta+\Pr[\mathcal B_{i,j}]
    \le\frac{\epsilon}{6N_{\mathrm{tot}}}
       +\frac{\epsilon}{6N_{\mathrm{tot}}}
    =\frac{\epsilon}{3N_{\mathrm{tot}}}.
  \]
  Since this holds for every $\tau$ with positive probability,
  averaging over $\tau$ proves the lemma.
\end{proof}

\begin{lemma}[Accuracy without a database cutoff]
  \label{est:uncapped}
  For every $F\subseteq\mathcal X$ and every prover $P^*$,
  \[
    \begin{aligned}
      &\bigl|\Pr[\langle (P^*)^{\ora_F},V^{\ora_F}\rangle=\ACCEPT]
        -\Pr[\langle (P^*)^{\ora_F},\widehat V_\infty^{\ora_F}\rangle=\ACCEPT]\bigr|\\
      &\hspace{5cm}\le\epsilon/3.
    \end{aligned}
  \]
\end{lemma}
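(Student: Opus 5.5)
The plan is a telescoping hybrid argument over $\mathsf{Hyb}_0,\ldots,\mathsf{Hyb}_{N_{\mathrm{tot}}}$, applying \cref{est:hybrid-step} at each step. We first identify the two endpoints with the two executions in the statement.

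\emph{Left endpoint.} In $\mathsf{Hyb}_0$ every verifier query is answered with $U_F$. The estimation loops run on fresh registers and never touch $\mathsf R$; they only produce saved databases, which are discarded. So the joint distribution of the transcript and the decision equals that of $\langle (P^*)^{\ora_F},V^{\ora_F}\rangle$. To justify this, we use the replacement prover of \cref{lem:classical-prover}: its replies depend only on the transcript and $F$, hence not on the estimation outcomes. The estimation randomness can therefore be marginalized out.

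\emph{Right endpoint.} In $\mathsf{Hyb}_{N_{\mathrm{tot}}}$ every query in round $i$ is answered with the saved $U_{D_{i,j}}$, each computed by exactly the estimation loop of $\widehat V_\infty$. We must show that measuring $\mathsf M$ on the live register $\mathsf R$ gives the same conditional distribution of $m'_i$ as $\widehat V_\infty$'s step 2. That step runs $\mathsf{Prep}(T_{i-1})$ and then $V[i]^{(D_{i,k})_k}$ on a fresh register.

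\emph{Matching the two round-$i$ computations.} This is a conditional-state statement in the spirit of \cref{est:refined-history}, extended to the end of the round. Conditioned on $T_{i-1}$, $m_i$ and the estimation outcomes, $\mathsf R$ in the hybrid is the verifier's state after the previous rounds. Those rounds used the databases in $T_{i-1}$ and were postselected on the recorded messages $m'_s$, which is exactly what $\mathsf{Prep}(T_{i-1})$ outputs. The same induction as in \cref{est:refined-history} applies: estimation trials leave $\mathsf R$ unchanged, and prover replies are transcript-determined. Applying the same round-$i$ circuit with the same database answers then gives identical distributions of $m'_i$. By induction on rounds, the full interaction in $\mathsf{Hyb}_{N_{\mathrm{tot}}}$, with the same prover, has the same distribution as in $\widehat V_\infty$. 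This includes the databases, since both use the same estimation procedure given the same history.

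\emph{Combining.} Summing \cref{est:hybrid-step} over $s\in[N_{\mathrm{tot}}]$ with the triangle inequality gives total difference at most $N_{\mathrm{tot}}\cdot\epsilon/(3N_{\mathrm{tot}})=\epsilon/3$, as claimed.

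The main obstacle is the right-endpoint identification. The simulator re-prepares a fresh state by rejection sampling, whereas the hybrid keeps one coherent register throughout. The argument needs that the conditional state of $\mathsf R$ given the classical history is the pure state produced by $\mathsf{Prep}$, with no dependence on measurement branches beyond the recorded messages. I would handle this by the purification convention: the only measurements in the hybrid on $\mathsf R$ are of the outgoing messages $\mathsf M$. Estimation trials and prover replies act on independent registers, as shown in \cref{est:refined-history}.
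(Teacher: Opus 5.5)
Your proposal is correct and matches the paper's proof. You identify $\mathsf{Hyb}_0$ with $V$ because estimation acts only on fresh registers, identify $\mathsf{Hyb}_{N_{\mathrm{tot}}}$ with $\widehat V_\infty$ via the conditional-state argument of \cref{est:refined-history}, and telescope \cref{est:hybrid-step} to get $\epsilon/3$. The one point you leave implicit, which the paper states for the left endpoint, is that each estimation loop terminates with probability one, so inserting it does not change the interaction.
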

\begin{proof}
  We first identify the endpoints of the hybrid sequence:
  \begin{enumerate}
    \item $\mathsf{Hyb}_0$ has the same acceptance probability as $V$.
      Estimation finishes with probability one and uses fresh registers,
      leaving the verifier's state and messages unchanged.
    \item $\mathsf{Hyb}_{N_{\mathrm{tot}}}$ has the same acceptance
      probability as $\widehat V_\infty$.
      By \cref{est:refined-history}, retaining $\mathsf R$ between rounds
      or preparing it from the saved history gives the same distribution
      of messages and estimation outcomes.
  \end{enumerate}
  By \cref{est:hybrid-step}, the $N$ replacements in each round
  contribute at most $\epsilon/(3r)$. Summing over rounds gives
  $r\cdot\epsilon/(3r)=\epsilon/3$.
\end{proof}

\subsubsection{Bounding the Cutoff Probability}
We next bound the number of queries. Let $\widehat V$ run
$\widehat V_\infty$ until some database $D_{i,j}$ reaches size $t$.
If this happens, it stops estimation, sends fixed messages of the
prescribed lengths in the current and remaining rounds, and finally
outputs $\REJECT$ and terminates.

Since at most $t$ addresses are added before the simulator stops,
the interval count used for $\widehat V_\infty$ gives
\begin{equation}
  \label{est:querybound}
  \#\text{classical queries}\le\#\text{trials}
  \le M:=(N_{\mathrm{tot}}+t)\cdot K_0
  =\poly(r,N,t,\ell,1/\epsilon).
\end{equation}
It remains to bound the probability that the cutoff is reached.
For this, we encode the execution by its transcript and the indices
of trials that added an address. We will show that a decoder can
recover all $t$ addresses from this string without querying $\ora_F$.

\paragraph{Encoding an execution that reaches the cutoff.}
Fix $P^*$ and give the encoder $F$ and independent random tapes
$\omega=(\omega_1,\ldots,\omega_M)$ shared with the decoder
(one per trial).\footnote{
We realize each measurement as a sample from its conditional
outcome distribution; the exact classical implementation is given below.}
The encoder runs $\langle(P^*)^{\ora_F},\widehat V^{\ora_F}\rangle$
using $\omega$ for estimation. If it reaches the cutoff, it sends
\[
  m=(\pi,I),\qquad I=\{a_1<\cdots<a_t\}\subseteq[M],
  \qquad |m|=c+t\lceil\log_2M\rceil=s,
\]
where $\pi$ is the full $c$-bit transcript, including the messages after
the cutoff and the final decision, and $a_1,\ldots,a_t$ are the addition
indices. Since $\pi$ specifies all messages, the encoder need not share
the randomness used to generate them.

Whenever the simulator makes a query, its address lies outside
$D_{i,j}$. The answer is therefore one exactly at the indices in $I$,
which lets the decoder replay the estimation trials.

\begin{protocol}
  \protocoltitle{$\mathsf{Dec}(\pi,I)$}
  \textbf{Input:} A transcript $\pi\in\bits^c$ and trial indices
  $I\in\binom{[M]}t$.
  \par
  \textbf{Shared randomness:} $\omega=(\omega_1,\ldots,\omega_M)$,
  sampled independently of $F$.
  \par
  \textbf{Output:} A set of $t$ distinct oracle addresses, or $\bot$.
  \protocolbody
  \begin{enumerate}
    \item
      Parse $\pi=(m_1,m'_1,\ldots,m_r,m'_r)$.
    \item
      Replay $\widehat V$ using the messages in $\pi$, without interacting
      or sampling outgoing messages. Number estimation trials consecutively.
      For each trial $a$, with current round and query indices $i,j$:
      \begin{enumerate}[label*=\arabic*.,leftmargin=*]
        \item Sample its address $z_a$ using $\omega_a$.
        \item If $z_a\notin D_{i,j}$, supply $b_a=1$ if
          $a\in I$, and $b_a=0$ otherwise. If
          $z_a\in D_{i,j}$ and $a\in I$, output $\bot$
          and terminate. Otherwise complete the trial's database and
          counter updates as in $\widehat V$.
        \item If $|D_{i,j}|=t$, output $D_{i,j}$ and terminate.
      \end{enumerate}
    \item
      If the simulation ends before the database reaches size $t$,
      output $\bot$ and terminate.
  \end{enumerate}
\end{protocol}

\begin{lemma}[Incompressibility reduction]
  \label{est:cap}
  For every $F\subseteq\mathcal X$ and every prover $P^*$, the
  interaction $\langle (P^*)^{\ora_F},\widehat V^{\ora_F}\rangle$ satisfies
  \[
    \Pr[\text{some }D_{i,j}\text{ reaches size }t]\le g(F).
  \]
  The event ranges over all estimation trials in all rounds.
\end{lemma}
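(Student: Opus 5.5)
The approach is a coupling argument. We show that, on every execution of $\langle (P^*)^{\ora_F},\widehat V^{\ora_F}\rangle$ driven by a tape $\omega$ that reaches the cutoff, the message $m=(\pi,I)$ produced by the encoder satisfies $\mathsf{Dec}(\pi,I;\omega)\in\binom Ft$. The cutoff event is then contained in the event inside \eqref{est:gdefinition}. First we fix the coupling. All randomness of the interaction splits into the estimation tapes $\omega=(\omega_1,\ldots,\omega_M)$, the verifier's message measurements, and the prover's coins. By \cref{lem:classical-prover} the prover's coins can be taken to be the coins of a classical replacement that is independent of $\omega$. We realize each estimation trial $a$ by sampling its address $z_a$ deterministically from $\omega_a$ and the conditional distribution of $\mathsf X$ in $\ket{\psi_{i,j}}$. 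That distribution is computed from the history $T_{i-1}$, the message $m_i$, and the earlier databases $D_{i,1},\ldots,D_{i,j-1}$, via $\mathsf{Prep}$ and the database answers, with no oracle queries. Conditioned on all non-$\omega$ randomness, $\omega$ is still uniform, so $\Pr[\text{cutoff}]=\E_{\text{other coins}}\Pr_\omega[\text{cutoff}]$.

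Second, we prove by induction over trials that the decoder's replay coincides with the real execution up to the cutoff. The invariant is that before trial $a$ both processes have the same round and query indices $(i,j)$, the same counter $k$, the same databases $(D_{s,u})$, and the same prefix of messages $m_1,m'_1,\ldots,m_i$. The messages agree because the decoder reads them from $\pi$, which is exactly the real transcript. Given the invariant, the sampled address $z_a$ is the same function of $\omega_a$ and the same conditional distribution in both processes.

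Next we check the trial update. If $z_a\in D_{i,j}$, the real simulator makes no query and adds nothing, so $a\notin I$. The decoder therefore does not output $\bot$ and increments $k$ in the same way. If $z_a\notin D_{i,j}$, the real answer is $b_a=\ora_F(z_a)$, and $b_a=1$ exactly when the trial adds $z_a$, that is, exactly when $a\in I$. So the decoder supplies the same bit, and the updates to $D_{i,j}$ and $k$ agree.

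Round transitions are handled the same way: the outgoing message $m'_i$ is taken from $\pi$, and the next $T_i$ is appended identically. One subtlety needs a check here. The real simulator measures $\mathsf M$ to get $m'_i$, while the decoder does not sample it but simply copies it from $\pi$. This is harmless, since we only need both processes to have the same history, and $\pi$ records the realized value.

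By the invariant, when the real execution first has $|D_{i,j}|=t$, the decoder also reaches that trial with the same database and outputs it. Every added address satisfied $\ora_F(z)=1$, so $D_{i,j}\subseteq F$, and the entries are distinct by construction; hence the output lies in $\binom Ft$. The message is well formed: $|I|=t$ because exactly $t$ additions occurred, and $I\subseteq[M]$ by \eqref{est:querybound}. After the cutoff, $\widehat V$ sends fixed messages and rejects, so $\pi$ is a valid $c$-bit string.

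Finally, for each fixed value of the other coins, the set of $\omega$ for which the cutoff occurs is contained in $\{\omega:\exists(\pi,I),\ \mathsf{Dec}(\pi,I;\omega)\in\binom Ft\}$. The latter event does not depend on $P^*$ or on the other coins, and its probability is $g(F)$. Averaging gives the lemma. The main obstacle is the second step: we must make sure the decoder's replay uses only information in $(\pi,I,\omega)$. In particular, the address distribution at each trial must depend only on the saved databases and messages, never on $\ora_F$ or on the verifier's message-measurement randomness. This is exactly why $\mathsf{Prep}$ conditions on the recorded messages rather than reusing the measurement outcomes.
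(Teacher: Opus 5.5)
Your proposal is correct and takes essentially the same route as the paper. You couple $\widehat V$ and $\mathsf{Dec}$ through the shared estimation tape $\omega$, keeping separate randomness for outgoing messages. You then show by induction over trials that the replay reproduces the same databases and addresses, since $\ora_F(z_a)=1$ exactly when $a\in I$, and conclude that the cutoff event lies inside the $\omega$-event defining $g(F)$, uniformly over $P^*$. Your extra bookkeeping (the check $I\subseteq[M]$ via \eqref{est:querybound}, and averaging over the non-$\omega$ coins) only spells out steps the paper leaves implicit.
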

\begin{proof}
  Use the shared randomness $\omega$ for estimation in $\widehat V$ and
  every $\mathsf{Dec}(\pi,I)$, with separate randomness for outgoing messages.
  Suppose the cutoff is reached, and choose $(\pi,I)$ as above.
  By induction on trials, $\mathsf{Dec}$ and $\widehat V$ have the same databases
  and messages, so $\omega_a$ gives the same address $z_a$. At every
  actual query, $\ora_F(z_a)=1$ if and only if $a\in I$, so $\mathsf{Dec}$
  supplies the correct answer. Hence
  \[
    \{\text{some }D_{i,j}\text{ reaches size }t\}
      \subseteq\bigcup_{\pi\in\bits^c}\ \bigcup_{I\in\binom{[M]}t}
        \left\{\mathsf{Dec}(\pi,I;\omega)\in\binom Ft\right\}.
  \]
  Taking probabilities bounds the cutoff event by $g(F)$, as defined
  in \eqref{est:gdefinition}. Since the union includes every transcript,
  this bound holds for every $P^*$, including provers whose replies
  depend on $F$.
\end{proof}

\subsubsection{Accuracy with a Database Cutoff}
We now compare $\widehat V$ with $\widehat V_\infty$. Fix $F,P^*$
and use the same estimation samples and message randomness in both
executions until the cutoff. The executions agree up to that point,
so their acceptance probabilities can differ only if the cutoff is
reached. Therefore,
\begin{equation}
  \label{est:cutoff-error}
  \begin{aligned}
    &\bigl|\Pr[\langle (P^*)^{\ora_F},\widehat V_\infty^{\ora_F}\rangle=\ACCEPT]
      -\Pr[\langle (P^*)^{\ora_F},\widehat V^{\ora_F}\rangle=\ACCEPT]\bigr|\\
    &\hspace{3cm}\le
      \Pr[\text{some }D_{i,j}\text{ reaches size }t
        \text{ in }\langle (P^*)^{\ora_F},\widehat V^{\ora_F}\rangle].
  \end{aligned}
\end{equation}
Combining \eqref{est:cutoff-error} with \cref{est:uncapped,est:cap}
proves the following lemma.

\begin{lemma}[Simulation with exact sampling]
  \label{est:exact}
  For the simulator $\widehat V$ above and $g$ defined in
  \eqref{est:gdefinition}, every $F\subseteq\mathcal X$ and every prover $P^*$ satisfy
  \[
    \begin{aligned}
      &\bigl|\Pr[\langle (P^*)^{\ora_F},V^{\ora_F}\rangle=\ACCEPT]
        -\Pr[\langle (P^*)^{\ora_F},\widehat V^{\ora_F}\rangle=\ACCEPT]\bigr|\\
      &\hspace{5cm}\le\epsilon/3+g(F).
    \end{aligned}
  \]
\end{lemma}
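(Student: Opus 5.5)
The plan is a triangle inequality through the uncapped simulator $\widehat V_\infty$, followed by bounding each of the two legs with results already established. Fix $F\subseteq\mathcal X$ and a prover $P^*$. By \cref{lem:classical-prover}, we use the same classical randomized replacement for $P^*$ in all three interactions: against $V$, against $\widehat V_\infty$, and against $\widehat V$. Then
\[
  \bigl|\Pr[\langle (P^*)^{\ora_F},V^{\ora_F}\rangle=\ACCEPT]
    -\Pr[\langle (P^*)^{\ora_F},\widehat V^{\ora_F}\rangle=\ACCEPT]\bigr|
\]
is at most the distance between $V$ and $\widehat V_\infty$ plus the distance between $\widehat V_\infty$ and $\widehat V$.

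The first leg is at most $\epsilon/3$ by \cref{est:uncapped}, which holds for every $F$ and every prover. For the second leg we use the coupling behind \eqref{est:cutoff-error}. Run $\widehat V_\infty$ and $\widehat V$ on the same estimation tapes $\omega$ and the same message-measurement randomness, with the prover's replies coupled through the shared transcript.

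The point to check is that the two executions are identical up to the first moment some $D_{i,j}$ reaches size $t$. Before that moment, $\widehat V$ executes exactly the instructions of $\widehat V_\infty$. They only diverge afterwards, when $\widehat V$ sends fixed messages and rejects. So on the complement of the cutoff event the transcripts and decisions coincide, and the acceptance probabilities differ by at most the probability of that event, evaluated in $\langle (P^*)^{\ora_F},\widehat V^{\ora_F}\rangle$. This is valid because the event is determined by the common prefix, so its probability is the same in either execution.

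\Cref{est:cap} bounds the cutoff probability by $g(F)$, uniformly over $P^*$. Summing the two legs gives $\epsilon/3+g(F)$. The only delicate step is the coupling claim: the cutoff event must be measurable with respect to the shared prefix, and the prover's behaviour must depend only on the transcript, which \cref{lem:classical-prover} guarantees. I expect this step to need a brief justification rather than real work, since the rest is direct citation.
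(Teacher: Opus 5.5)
Your proposal is correct and matches the paper's proof: a triangle inequality through $\widehat V_\infty$, using \cref{est:uncapped} for the first leg, and the coupling bound \eqref{est:cutoff-error} followed by \cref{est:cap} for the second. Your observation that the cutoff event is determined by the shared prefix of the two coupled executions is the same justification the paper gives for \eqref{est:cutoff-error}.
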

\begin{proof}
  Apply the triangle inequality to \cref{est:uncapped},
  \eqref{est:cutoff-error}, and \cref{est:cap}.
\end{proof}

\paragraph{Classical replacement.}
It remains to obtain a classical verifier $V'$ with bounded
randomness. Since local computation is unrestricted, we can compute
the sampling probabilities of $\widehat V$ classically. We then
round these probabilities, as follows.

\begin{proof}[Proof of \cref{est:simulation}]
  There are at most $M+r$ address and outgoing-message samples,
  each with at most $L$ outcomes, where
  \[
    L=\max\{2^{\ell+1},2^c\},\qquad
    k_{\mathrm{rand}}=\left\lceil\log_2\frac{3(M+r)\cdot L}{\epsilon}\right\rceil.
  \]
  Define $V'$ to follow $\widehat V$, replacing each sampled
  distribution $(p_1,\ldots,p_d)$, $d\le L$, by
  \[
    q_u=2^{-k_{\mathrm{rand}}}\cdot\lfloor2^{k_{\mathrm{rand}}}\cdot p_u\rfloor
      \quad(u<d),
    \qquad q_d=1-\sum_{u<d}q_u.
  \]
  Sampling from $(q_1,\ldots,q_d)$ uses $k_{\mathrm{rand}}$ uniform bits
  and changes the distribution by total variation distance at most
  $L\cdot2^{-k_{\mathrm{rand}}}$. This holds for every history, using the
  specified default state for $\mathsf{Prep}$ when necessary.
  Replacing the samples one at a time gives
  \[
    \begin{aligned}
      &\left|\Pr[\langle (P^*)^{\ora_F},(V')^{\ora_F}\rangle=\ACCEPT]
        -\Pr[\langle (P^*)^{\ora_F},\widehat V^{\ora_F}\rangle=\ACCEPT]\right|\\
      &\hspace{3cm}\le (M+r)\cdot L\cdot2^{-k_{\mathrm{rand}}}
        \le\epsilon/3.
    \end{aligned}
  \]
  The rounded probabilities are computable under our gate convention, so
  $V'$ uses at most $M$ oracle queries and $(M+r)\cdot k_{\mathrm{rand}}$
  random bits. Combining the last bound with \cref{est:exact} gives
  \eqref{est:uniform}. The decoder $\mathsf{Dec}$ uses the exact
  probabilities instead; it samples classically using its random
  tape and terminates with probability one.
  Neither construction depends on $F$ or the prover.
\end{proof}

\subsubsection{Applying Incompressibility}
We now apply \cref{est:simulation} to the distributions from
\cref{def:oracle-distributions}. The only remaining task is to bound
$g(F_\ora)$, where $F_\ora=\{z:\ora(z)=1\}$. We use code-preimage
incompressibility under $\YES_\lambda$ and independent-marking
incompressibility under $\NO_\lambda$.

\begin{proof}[Proof of \cref{est:application}]
  On input $1^\lambda$, let $r(\lambda),N(\lambda),c(\lambda)$
  be the verifier's number of rounds, queries per round, and transcript
  length including the final decision, respectively.
  By \cref{lem:code}, queries $(y,v)\in\mathcal Z_\lambda$ have length
  $O(\lambda^6\log\lambda)$, so we use the upper bound
  $\ell(\lambda)=\lambda^8$ for sufficiently large $\lambda$.
  We apply \cref{est:simulation} with
  \[
    \epsilon=10^{-4},\qquad t(\lambda)=c(\lambda)+\lambda.
  \]
  The choice of $K_0(\lambda)$ in \eqref{est:parameters} gives
  \[
    M(\lambda)=\bigl(r(\lambda)N(\lambda)+t(\lambda)\bigr)K_0(\lambda)
    =\poly(\lambda).
  \]
  For the rest of the proof, suppress $\lambda$ in $c,t,M$ and put
  $s=c+t\lceil\log_2M\rceil$. The construction of $V'$, $\mathsf{Dec}$, and $g$
  is the same for both oracle distributions. Write
  \[
    \mathcal X=\mathcal Z_\lambda=\bits^q\times C,\qquad
    F_\ora=\{(y,v)\in\mathcal Z_\lambda:\ora(y,v)=1\},\qquad
    \beta=2^s\delta^t,
  \]
  where $\delta=2^{-\lceil\lambda/12\rceil}$.
  For sufficiently large $\lambda$, $t\le2^\lambda$. We now apply
  the two incompressibility theorems. In each experiment, the encoder
  uses the sampled oracle and shared tape $\omega$ to choose a message
  decoding to $t$ pairs in $F_\ora$, whenever such a message exists.
  The probability that its message decodes to $t$ pairs in $F_\ora$
  is exactly $g(F_\ora)$.
  \begin{itemize}
    \item \emph{Under $\YES_\lambda$.}
      The oracle $\ora$ is determined by the sampled hashes $H$, and
      \[
        F_\ora
          =\{(y,v)\in\mathcal X:H(v)=y,\ y_{\lambda+1}=\cdots=y_q=0\}
          \subseteq\{(y,v)\in\mathcal X:H(v)=y\}.
      \]
      Therefore \cref{thm:incompressibility} gives
      \[
        \E_{\ora\leftarrow\YES_\lambda}g(F_\ora)
          \le2^s\delta^{4t}\le\beta.
      \]
    \item \emph{Under $\NO_\lambda$.}
      For $\ora\leftarrow\NO_\lambda$, the entries $\ora(y,v)$ for
      $(y,v)\in\mathcal X$ are independent
      $\mathrm{Bernoulli}(\delta)$ bits. By
      \cref{thm:independent-incompressibility},
      \[
        \E_{\ora\leftarrow\NO_\lambda}g(F_\ora)
          \le2^s\delta^t=\beta.
      \]
  \end{itemize}
  Hence, for either $\mathcal P\in\{\YES_\lambda,\NO_\lambda\}$,
  Markov's inequality gives
  \[
    \Pr_{\ora\leftarrow\mathcal P}[g(F_\ora)>\epsilon]
      \le\beta/\epsilon.
  \]
  Since $t=c+\lambda$ and $M=\poly(\lambda)$, we have
  $s\le t(1+\lceil\log_2M\rceil)=O(t\log\lambda)$. Thus, for
  sufficiently large $\lambda$,
  \[
    \begin{aligned}
      \log_2\beta
        &=s-\lceil\lambda/12\rceil t\\
        &\le t\left[O(\log\lambda)-\frac{\lambda}{12}\right]
         \le-\frac{\lambda t}{24}.
    \end{aligned}
  \]

  Finally, fix an oracle $\ora$ with $g(F_\ora)\le\epsilon$.
  By \cref{est:cap}, the probability of reaching the cutoff anywhere
  in the interaction is at most $\epsilon$ for every prover $P^*$.
  Applying \cref{est:simulation}, we obtain
  \[
    \begin{aligned}
      &\bigl|\Pr[\langle(P^*)^\ora,V^\ora\rangle=\ACCEPT]
        -\Pr[\langle(P^*)^\ora,(V')^\ora\rangle=\ACCEPT]\bigr|\\
      &\hspace{3cm}\le\epsilon+g(F_\ora)
       \le2\epsilon<1/100.
    \end{aligned}
  \]
  The exceptional event $g(F_\ora)>\epsilon$ is the same for all
  provers and has probability at most $\beta/\epsilon$ under either
  distribution. This proves \eqref{est:distributional} with
  $\eta_\lambda=\beta/\epsilon\le2^{-\Omega(\lambda^2)}$.
\end{proof}

%% file: sections/qcip/public-coin-lower-bound.tex
\subsection{Lower Bound for Public-Coin Verifiers}
\label{sec:application}
In this section, we prove \cref{app:public-coin-lower-bound} by
constructing a cheating prover whose average acceptance probability
under $\NO_\lambda$ is at least $1-(N+1)\delta$. The construction
uses perfect acceptance under $\YES_\lambda$ and incompressibility
to find replies consistent with the prover's actual oracle.

\paragraph{The two distributions.}
Fix $\lambda$ and the verifier $V$, including its oracle-independent
advice. Let
\[
  \mathcal X=\mathcal Z_\lambda=\bits^q\times C,\qquad
  |\mathcal X|=2^q|C|,\qquad
  \delta=2^{-\lceil\lambda/12\rceil}.
\]
For $S\subseteq\mathcal X$, let $\ora_S:\mathcal X\to\bits$ be the
indicator of $S$. Under $\YES_\lambda$, sample
$H_1,\ldots,H_q:\Sigma\to\bits$ with independent
$\mathrm{Bernoulli}(\lambda^{-4})$ entries, and set
\[
  A=\{(y,v)\in\mathcal X:H(v)=y,\ y_{\lambda+1}=\cdots=y_q=0\}.
\]
The sampled oracle is $\ora_A$. Under $\NO_\lambda$, instead sample
an oracle $\ora_{\NO}:\mathcal X\to\bits$ with independent
$\mathrm{Bernoulli}(\delta)$ entries. Put
\[
  \mathsf{MARKED}=\{z\in\mathcal X:\ora_{\NO}(z)=1\}.
\]
For a fixed nonempty $T\subseteq\mathcal X$, apply
\cref{thm:incompressibility} with $s=0$, $t=|T|$, and a decoder that always
outputs $T$. Since $T\subseteq A$ implies $H(v)=y$ for every
$(y,v)\in T$, this bounds the first probability below; the second
follows from independent marking:
\begin{equation}
  \label{app:initial-bound}
  \Pr_H[T\subseteq A]\le\delta^{4|T|},\qquad
  \Pr_{\ora_{\NO}\leftarrow\NO_\lambda}
    [T\subseteq\mathsf{MARKED}]=\delta^{|T|}
  \qquad(|T|\le2^\lambda).
\end{equation}
Both bounds also hold for $T=\varnothing$.
We will use the gap between these probabilities to choose replies
whose required pairs are marked in $\ora_{\NO}$.

\paragraph{The initial distribution $\mu$.}
By \cref{lem:classical-prover}, we may take the designated prover
$P$ to be deterministic. For each oracle, choose the first reply
table that makes $V$ accept on every public random tape, if one
exists; otherwise, output $\bot$ and terminate. This preserves
perfect acceptance on every oracle where the original prover
achieved it and makes the reply table computable by finite enumeration.

Let
\[
  \mathcal G=\{\widehat\ora\in\operatorname{supp}(\YES_\lambda):
    \Pr[\langle P^{\widehat\ora},V^{\widehat\ora}\rangle=\ACCEPT]=1\},
  \qquad
  \alpha=\Pr_{\widehat\ora\leftarrow\YES_\lambda}
    [\widehat\ora\in\mathcal G]\ge\delta^2,
\]
and let $\mu$ be $\YES_\lambda$ conditioned on $\mathcal G$.
In particular,
\begin{equation}
  \label{app:perfect-source}
  \Pr[\langle P^{\widehat\ora},V^{\widehat\ora}\rangle=\ACCEPT]=1
  \qquad(\widehat\ora\in\operatorname{supp}(\mu)),
\end{equation}
where the probability is over $V$'s public coins. The distribution
$\mu$ remains fixed throughout the construction and is independent
of the actual oracle $\ora_{\NO}$.
Conditioning \eqref{app:initial-bound} on $\mathcal G$ gives
\begin{equation}
  \label{app:source-bound}
  \Pr_{\widehat\ora\leftarrow\mu}
    [\forall z\in T,\ \widehat\ora(z)=1]
    \le\frac{\delta^{4|T|}}{\alpha}\le\delta^{4|T|-2}
  \qquad(T\subseteq\mathcal X,\ 1\le|T|\le2^\lambda).
\end{equation}

Write $N(\lambda)$ for the query bound, $R(\lambda)\ge1$ for the
number of replies, $\ell_j(\lambda)$ for the length of reply
$j$, and $c(\lambda)$ for the total communication.
If necessary, add an empty challenge and reply to ensure $R\ge1$.
Set
\begin{equation}
  \label{app:parameters}
  K=c+R+3,\qquad
  d_j=\sum_{i=1}^j(\ell_i+1),\qquad d_0=0.
\end{equation}
The extra $j$ in $d_j$ allows a factor-two loss per round in the
probability threshold below. Note that $d_j\le K-3$.
Take $\lambda$ sufficiently large that
\[
  \delta\le\tfrac12,\qquad N\delta^2<1,\qquad K\le2^\lambda.
\]
We will impose two further bounds on $\lambda$ when bounding the
abort probability.

\subsubsection{The Cheating Prover}

We now describe the cheating prover for a fixed oracle
$\ora_{\NO}$ sampled from $\NO_\lambda$. It keeps and updates
\[
  \tau=(r_1,m_1,\ldots,r_j,m_j),\qquad
  E\subseteq\operatorname{supp}(\mu),
\]
where $r_i,m_i$ are the challenge and reply in round $i$, and $E$
is the set of candidate oracles. Initially,
$\tau=\varnothing$ and $E=\operatorname{supp}(\mu)$.

Given a transcript $\tau$ and challenge $r$, let
$P^{\widehat\ora}(\tau,r)$ denote the designated prover's reply.
We keep only oracles $\widehat\ora$ for which $P^{\widehat\ora}$
would send the same replies as $\widetilde P$, so after $j$
successful searches,
\begin{equation}
  \label{app:source-conditions}
  P^{\widehat\ora}(\tau_{<i},r_i)=m_i
  \qquad(\widehat\ora\in E,\ 1\le i\le j),
\end{equation}
where $\tau_{<i}=(r_1,m_1,\ldots,r_{i-1},m_{i-1})$.
For $F\subseteq\operatorname{supp}(\mu)$ and $T\subseteq\mathcal X$,
we write
\begin{equation}
  \label{app:hypothetical-validity}
  F[T]=\{\widehat\ora\in F:
    \forall z\in T,\ \widehat\ora(z)=1\}.
\end{equation}
Thus $F[T]$ consists of the candidates in $F$ that mark every pair
in $T$. Throughout the interaction,
$\mu(F)=\Pr_{\widehat\ora\leftarrow\mu}[\widehat\ora\in F]$
is measured without renormalizing when we remove oracles.

\paragraph{Choosing pairs to test.}
On challenge $r_j$, we try replies $m$ one by one, starting each
search with
$F_m=\{\widehat\ora\in E:P^{\widehat\ora}(\tau,r_j)=m\}$.
Restricting to $F_m$ may force additional pairs $(y,v)$ to be marked.
For example, a reply containing a preimage can identify the pair that
the verifier will later query. We therefore choose a set $T$ of pairs
to require simultaneously, while keeping sufficient probability on
the remaining oracles $F_m[T]$. Specifically, we call $T$
\emph{eligible} for $F_m$ if
\begin{equation}
  \label{app:eligible}
  T\subseteq\mathcal X,\qquad |T|<K,\qquad
  \mu(F_m[T])\ge2^{-d_j}\delta^{2|T|}.
\end{equation}
Fix total orders on replies and on $\mathcal X$. For each reply,
while an eligible set exists, choose one of maximum cardinality,
breaking ties lexicographically, and test whether
$T\subseteq\mathsf{MARKED}$. The empty set is allowed.
The threshold lower-bounds $\mu(F_m[T])$. Maximality will ensure
that these candidates have no
common marked pair outside $T$. This use of likely constraints,
maximality, and removal resembles density restoration
in~\cite[Section~3.3]{GPW} and~\cite[Section~5.3]{GGJL25}.
Here, the prover tests each eligible set in $\ora_{\NO}$ before
restricting its candidates. Its reply procedure is as follows.

\begin{protocol}
  \protocoltitle{Reply of $\widetilde P$ to challenge $r_j$}
  \textbf{State:} $\tau,E$. \quad
  \textbf{Oracle:} $\ora_{\NO}$.
  \protocolbody
  \begin{enumerate}
    \item For each possible reply $m$, in order:
      \begin{enumerate}[label*=\arabic*.]
        \item Initialize
          $F_m=\{\widehat\ora\in E:
            P^{\widehat\ora}(\tau,r_j)=m\}$.
        \item While some $T$ satisfies \eqref{app:eligible} for the
          current $F_m$:
          \begin{enumerate}[label*=\arabic*.]
            \item Choose such a $T$ of maximum cardinality,
              breaking ties lexicographically.
            \item Query $\ora_{\NO}$ on each $(y,v)\in T$.
              Retain only the bit
              $b=\bigwedge_{z\in T}\ora_{\NO}(z)$,
              with $b=1$ for $T=\varnothing$.
            \item If $b=1$, set
              \[
                E\gets F_m[T],\qquad \tau\gets\tau\|(r_j,m),
              \]
              then send $m$ and return to the interaction.
            \item If $b=0$, set $F_m\gets F_m\setminus F_m[T]$.
              Keep $(\tau,E)$ unchanged and repeat the loop.
          \end{enumerate}
        \item Discard $F_m$ and try the next reply with
          $(\tau,E)$ unchanged.
      \end{enumerate}
    \item If no reply succeeds, output $\bot$ and terminate.
  \end{enumerate}
\end{protocol}

Only $m$ is sent to $V$. A failed test removes $F_m[T]$, since each
of these candidates disagrees with $\ora_{\NO}$ on some pair in $T$.
The remaining candidates may still give reply $m$ without marking
all of $T$, so the search continues with the same reply. The set $E$
is unchanged by failed tests, and each new reply starts a fresh search
with its full collection $F_m$. If no reply succeeds, we say that
the prover \emph{aborts}.

\paragraph{Maximality and termination.}
Choosing an eligible set of maximum cardinality makes $T$
inclusion-maximal among the eligible sets for the current $F_m$.
After a failed test, $F_m[T]=\varnothing$. Since $F_m$ only shrinks
and the threshold is positive, this $T$ cannot become eligible again.
There are finitely many sets of size less than $K$, so the search
for each reply terminates with a successful test or with no eligible
set remaining.

The search uses finite enumeration and rational arithmetic:
enumerating the hash functions $H$ defining $\YES_\lambda$ and the
designated reply tables computes each probability $\mu(F_m[T])$.

\subsubsection{Invariants of the Prover}

After each successful search, every pair marked by all candidates
in $E$ belongs to $\mathsf{MARKED}$. We also bound the probability
that an oracle sampled from $\mu\mid E$ marks additional pairs.

We first prove these properties for one inclusion-maximal eligible set.

\begin{lemma}[Maximal eligible sets]
  \label{lem:maximal-eligible-set}
  Fix a round $j\in[R]$ and $F\subseteq\operatorname{supp}(\mu)$.
  Let $T$ be inclusion-maximal among sets satisfying
  \[
    T\subseteq\mathcal X,\qquad |T|<K,\qquad
    \mu(F[T])\ge2^{-d_j}\delta^{2|T|}.
  \]
  Then
  \begin{equation}
    \label{app:common-pairs}
    |T|\le K-2,\qquad
    T=\{z\in\mathcal X:\forall\widehat\ora\in F[T],\
         \widehat\ora(z)=1\}.
  \end{equation}
  Moreover, for every nonempty $U\subseteq\mathcal X\setminus T$
  with $|T\cup U|<K$,
  \begin{equation}
    \label{app:maximal-residual}
    \Pr_{\widehat\ora\leftarrow\mu\mid F[T]}
      [\forall z\in U,\ \widehat\ora(z)=1]<\delta^{2|U|}.
  \end{equation}
\end{lemma}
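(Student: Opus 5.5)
The plan is to derive all three claims from the two facts about $T$: it is eligible, and no strict superset of size less than $K$ is eligible. Throughout, note that $F[T\cup U]=(F[T])[U]$. We also use the upper bound \eqref{app:source-bound}: for $1\le|T'|\le2^\lambda$ we have $\mu(F[T'])\le\mu(\operatorname{supp}(\mu)[T'])\le\delta^{4|T'|-2}$.

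\emph{Size bound.} Suppose $|T|=K-1$, so $T$ is nonempty and $|T|\le2^\lambda$. Eligibility gives $2^{-d_j}\delta^{2|T|}\le\mu(F[T])\le\delta^{4|T|-2}$, hence $2^{-d_j}\le\delta^{2|T|-2}=\delta^{2K-4}$. Since $\delta\le 1/2$ and $d_j\le K-3$, the right-hand side is at most $2^{-(2K-4)}<2^{-(K-3)}\le2^{-d_j}$ for $K\ge 2$. This is a contradiction, so $|T|\le K-2$. This step is where the quadratic-versus-linear gap between the exponents $4|T|$ and $2|T|$ is used.

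\emph{Residual bound.} Fix a nonempty $U\subseteq\mathcal X\setminus T$ with $|T\cup U|<K$. The set $T\cup U$ is a strict superset of $T$ of size less than $K$, so by maximality it is not eligible:
\[
\mu(F[T\cup U])<2^{-d_j}\delta^{2|T|+2|U|}\le\mu(F[T])\,\delta^{2|U|}.
\]
The second inequality uses eligibility of $T$, which also gives $\mu(F[T])>0$, so the conditional distribution is well defined. Dividing by $\mu(F[T])$ yields \eqref{app:maximal-residual}, because the conditional probability equals $\mu(F[T\cup U])/\mu(F[T])$.

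\emph{Common pairs.} The inclusion $T\subseteq\{z:\forall\widehat\ora\in F[T],\ \widehat\ora(z)=1\}$ holds by the definition of $F[T]$. For the reverse inclusion, take any $z\notin T$ and set $U=\{z\}$. Since $|T|\le K-2$, we have $|T\cup U|\le K-1<K$, so the residual bound applies. It gives conditional probability less than $\delta^2<1$ that $\widehat\ora(z)=1$. Hence some $\widehat\ora\in F[T]$ (in the support of $\mu$) has $\widehat\ora(z)=0$, and $z$ is not a common marked pair. This establishes \eqref{app:common-pairs}.

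The main subtlety is ordering: the size bound must come first so that singleton extensions stay below $K$. The only other care needed is that \eqref{app:source-bound} requires $|T|\ge1$, which holds in the size argument because $K-1\ge1$.
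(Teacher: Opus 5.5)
Your proof is correct and follows the paper's argument essentially step for step. Both derive the contradiction at $|T|=K-1$ by comparing eligibility with the source bound (using $\alpha\ge\delta^2$, $\delta\le 1/2$, $d_j\le K-3$), get the residual bound from the non-eligibility of $T\cup U$ under maximality, and obtain the common-pair identity from singletons $U=\{z\}$; you also spell out two points the paper leaves implicit, namely $\mu(F[T])>0$ and that $|T|\le K-2$ is what makes the singleton step legitimate.
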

\begin{proof}
  If $|T|=K-1$, eligibility and \eqref{app:source-bound} imply
  \[
    2^{-d_j}\delta^{2(K-1)}\le\mu(F[T])
      \le\frac{\delta^{4(K-1)}}{\alpha}.
  \]
  Since $\alpha\ge\delta^2$, $\delta\le1/2$, and $d_j\le K-3$,
  this gives the contradiction
  \[
    1\le\frac{2^{d_j}\delta^{2(K-1)}}{\alpha}
      \le2^{d_j}\delta^{2K-4}\le2^{1-K}<1.
  \]
  Thus $|T|\le K-2$.

  Now let $\varnothing\ne U\subseteq\mathcal X\setminus T$ with
  $|T\cup U|<K$. By maximality, $T\cup U$ fails the eligibility
  threshold. Therefore,
  \[
    \Pr_{\widehat\ora\leftarrow\mu\mid F[T]}
      [\forall z\in U,\ \widehat\ora(z)=1]
      =\frac{\mu(F[T\cup U])}{\mu(F[T])}
      <\delta^{2|U|}.
  \]
  Every oracle in $F[T]$ marks every pair in $T$. Conversely, taking
  $U=\{z\}$ shows that no $z\notin T$ is marked by every oracle in
  $F[T]$, proving \eqref{app:common-pairs}.
\end{proof}

To apply these bounds throughout the interaction, let
$S=\{z\in\mathcal X:\forall\widehat\ora\in E,\ \widehat\ora(z)=1\}$;
the prover need not store $S$ separately.
We will show by induction on $j$ that after $j$ successful searches,
\eqref{app:source-conditions} holds, together with
\begin{equation}
  \label{app:source-invariant}
  |S|\le K-2,\qquad
  \mu(E)\ge2^{-d_j}\delta^{2|S|},\qquad
  S\subseteq\mathsf{MARKED},
\end{equation}
and
\begin{equation}
  \label{app:residual-bound}
  \Pr_{\widehat\ora\leftarrow\mu\mid E}
    [\forall z\in U,\ \widehat\ora(z)=1]\le\delta^{2|U|}
  \quad
  (\varnothing\ne U\subseteq\mathcal X\setminus S,\ |S\cup U|<K).
\end{equation}
For $j=0$, we have $\mu(E)=1$ and $S=\varnothing$: by the singleton
case of \eqref{app:source-bound}, no pair is marked by every oracle
in $E$. Equation~\eqref{app:source-bound} also implies
\eqref{app:residual-bound}, since $4|U|-2\ge2|U|$ for $U\ne\varnothing$.

For the induction step, suppose the test of $T$ for reply $m$ gives
$b=1$, so the updated set is $E=F_m[T]$.
By eligibility, $\mu(E)\ge2^{-d_j}\delta^{2|T|}$, and
\cref{app:common-pairs,app:maximal-residual} give
$S=T$ and \eqref{app:residual-bound}. Moreover, the successful test
ensures $S\subseteq\mathsf{MARKED}$, and every $\widehat\ora\in E$
is consistent with both the earlier replies and $m$. Receiving the
next public challenge does not change $E$, so these properties hold
until an abort.

We can further bound $|S|$ in terms of the communication. Since
every oracle in $E$ marks every pair in $S$,
\[
  2^{-d_j}\delta^{2|S|}\le\mu(E)
    \le\frac{\delta^{4|S|}}{\alpha}.
\]
Consequently,
\begin{equation}
  \label{app:forced-pair-bound}
  |S|\le\frac{d_j+\log_2(1/\alpha)}{2\log_2(1/\delta)}
    \le1+\frac{d_j}{2\log_2(1/\delta)}.
\end{equation}
\subsubsection{Finding a Reply}

We next prove that the probability of any abort is at most
$J2^{-s}$, for $J,s$ defined below. Fix $(\tau,E)$ and $r_j$, and
simulate the reply search with every test declared to fail. The
tested sets are then determined without querying $\ora_{\NO}$.
We use the probability of the candidates removed at each test to
define a distribution on these sets. Equation~\eqref{app:residual-bound}
will let us apply \cref{app:containment} to this distribution.

\begin{lemma}[Coverage of the reply search]
  \label{app:requirement-distribution}
  Fix a transcript $\tau$ through reply $j-1$, a nonempty collection
  $E\subseteq\operatorname{supp}(\mu)$, and the next challenge $r_j$.
  Let $S$ be the set of pairs $(y,v)$ on which every oracle in $E$
  returns one.
  Suppose
  \[
    |S|\le K-2,\qquad
    \mu(E)\ge2^{-d_{j-1}}\delta^{2|S|},
  \]
  and suppose the residual bound \eqref{app:residual-bound} holds.
  Simulate the reply search with every test declared to fail, without
  querying the oracle. Let $T_i$ be the $i$th tested set and $E_i'$
  the collection of oracles removed at that test.
  These collections are pairwise disjoint and satisfy
  $\sum_i\mu(E_i')>\mu(E)/2$.

  Define the distribution $\pi$ on subsets of $\mathcal X$ by
  \[
    \pi(T)=\frac{\sum_{i:T_i\setminus S=T}\mu(E_i')}
                    {\sum_i\mu(E_i')}.
  \]
  Every set in $\operatorname{supp}(\pi)$ lies outside $S$ and has
  size less than $K$. Moreover,
  \begin{equation}
    \label{app:requirement-bound}
    \Pr_{T\leftarrow\pi}[U\subseteq T]\le2\delta^{2|U|}
    \qquad(U\subseteq\mathcal X,\ 1\le|U|\le K).
  \end{equation}
  If this state and challenge occur in the interaction, the actual
  reply search succeeds whenever some $T_i\setminus S$ is
  contained in $\mathsf{MARKED}$.
\end{lemma}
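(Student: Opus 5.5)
The plan is to read everything off the structure of the all-fail simulation and apply \cref{lem:maximal-eligible-set} to each tested set. Two preliminary observations drive the argument. First, the initial collections $F_m=\{\widehat\ora\in E:P^{\widehat\ora}(\tau,r_j)=m\}$ over all replies $m$ partition $E$. Second, within one search a failed test only removes $F_m[T_i]$ from the current $F_m$. Hence the removed collections $E_i'=F_m[T_i]$ (with $F_m$ taken at that moment) are pairwise disjoint subsets of $E$, which gives disjointness immediately.

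\emph{Removed mass.} Every oracle in $E$ marks $S$, so $F_m[S]=F_m$ for every current $F_m$. Since $|S|\le K-2<K$, the set $S$ is eligible whenever $\mu(F_m)\ge 2^{-d_j}\delta^{2|S|}$. The search for $m$ stops only when no eligible set remains, so the leftover satisfies $\mu(F_m)<2^{-d_j}\delta^{2|S|}$. There are at most $2^{\ell_j}$ replies and $d_j=d_{j-1}+\ell_j+1$, so the total leftover is
\[
  <2^{\ell_j}\,2^{-d_j}\delta^{2|S|}=\tfrac12\,2^{-d_{j-1}}\delta^{2|S|}\le\mu(E)/2 .
\]
Hence $\sum_i\mu(E_i')>\mu(E)/2$, and in particular $\pi$ is well defined.

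\emph{Support of $\pi$.} Each $T_i$ is inclusion-maximal among eligible sets for the current $F_m$. Applying \cref{lem:maximal-eligible-set} with $F=F_m$ gives $|T_i|\le K-2$, and it shows that $T_i$ equals the set of pairs marked by every oracle in $F_m[T_i]$. This collection is nonempty because its measure meets a positive threshold. Every such oracle lies in $E$ and so marks $S$; therefore $S\subseteq T_i$. Consequently $T_i\setminus S$ is disjoint from $S$ and has size less than $K$.

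\emph{The bound \eqref{app:requirement-bound}.} We split into three cases on $U$.
\begin{enumerate}
  \item If $U\cap S\ne\varnothing$, then $U\not\subseteq T$ for every $T$ in the support, so the probability is $0$.
  \item If $U\cap S=\varnothing$ but $|S\cup U|\ge K$, then $U\subseteq T_i\setminus S$ would force $|T_i|\ge|S\cup U|\ge K$, a contradiction; again the probability is $0$.
  \item If $U\subseteq\mathcal X\setminus S$ and $|S\cup U|<K$, then every $E_i'$ with $U\subseteq T_i$ consists of oracles marking $T_i\supseteq U$, so $E_i'\subseteq E[U]$. By disjointness and \eqref{app:residual-bound},
  \[
    \sum_{i:U\subseteq T_i}\mu(E_i')\le\mu(E[U])\le\delta^{2|U|}\mu(E).
  \]
  Dividing by the total removed mass, which exceeds $\mu(E)/2$, gives the factor $2\delta^{2|U|}$.
\end{enumerate}

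\emph{Success of the actual search.} The actual search is deterministic given the test bits. It therefore performs exactly the simulated tests $T_1,T_2,\dots$ in the same order until its first successful test. If some $T_i\setminus S\subseteq\mathsf{MARKED}$, then $T_i\subseteq\mathsf{MARKED}$, since $S\subseteq\mathsf{MARKED}$ by the invariant \eqref{app:source-invariant} whenever this state is reached. So either an earlier test succeeds, or test $i$ itself returns $b=1$; in both cases a reply is sent.

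The main obstacle is the removed-mass bound. The key points there are to use $S$ itself, rather than $\varnothing$, as the witness of eligibility, so that the $\delta^{2|S|}$ factor matches the hypothesis on $\mu(E)$, and to use the extra $+1$ in $d_j$ to absorb the number of replies together with the factor $1/2$.
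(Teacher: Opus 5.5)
Your proposal is correct and follows the paper's proof essentially step for step. You get disjointness from the partition of $E$ into the collections $F_m$, the removed-mass bound from using $S$ itself as the eligible witness together with the extra $+1$ in $d_j$, the inclusion $S\subseteq T_i$ via \cref{lem:maximal-eligible-set}, the same case analysis for \eqref{app:requirement-bound} via \eqref{app:residual-bound}, and success of the actual search from $S\subseteq\mathsf{MARKED}$; your explicit check that $F_m[T_i]$ is nonempty, so that its common marked set is meaningful, is a small detail the paper leaves implicit.
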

\begin{proof}
  \emph{Probability of the removed candidates.}
  At test $i$ for reply $m$, the search removes $E_i'=F_m[T_i]$.
  No candidate is removed twice during the same reply search, and distinct
  replies start from disjoint collections $F_m$. Hence the sets
  $E_i'$ are pairwise disjoint. By \eqref{app:common-pairs}, $T_i$
  is the set of pairs marked by every oracle in $E_i'$. Since
  $E_i'\subseteq E$, we have $S\subseteq T_i$.

  Now consider $F_m$ when the search for reply $m$ ends. No eligible
  set remains, and $F_m[S]=F_m$, so
  \[
    \mu(F_m)=\mu(F_m[S])
      <2^{-d_j}\delta^{2|S|}
      \le2^{-\ell_j-1}\mu(E).
  \]
  Summing over the $2^{\ell_j}$ possible replies gives
  \[
    \sum_m\mu(F_m)<\tfrac12\mu(E),
    \qquad
    \sum_i\mu(E_i')>\tfrac12\mu(E).
  \]

  \emph{Containment probabilities.}
  Fix a nonempty $U\subseteq\mathcal X\setminus S$ with
  $|S\cup U|<K$. If $U\subseteq T_i\setminus S$, then
  $E_i'\subseteq E[U]$. By disjointness, the bound
  $\sum_i\mu(E_i')>\mu(E)/2$, and \eqref{app:residual-bound},
  \[
    \begin{aligned}
      \Pr_{T\leftarrow\pi}[U\subseteq T]
        &=\frac{\sum_{i:U\subseteq T_i\setminus S}\mu(E_i')}
                {\sum_i\mu(E_i')}\\
        &\le2\Pr_{\widehat\ora\leftarrow\mu\mid E}
          [\forall z\in U,\ \widehat\ora(z)=1]
         \le2\delta^{2|U|}.
    \end{aligned}
  \]
  If $U$ meets $S$ or $|S\cup U|\ge K$, then
  $\Pr_{T\leftarrow\pi}[U\subseteq T]=0$, since $S\subseteq T_i$
  and $|T_i|<K$. This proves \eqref{app:requirement-bound} in all cases.

  \emph{Success in the actual search.}
  If $(\tau,E)$ is reached before an abort, then
  $S\subseteq\mathsf{MARKED}$. Thus
  $T_i\subseteq\mathsf{MARKED}$ if and only if
  $T_i\setminus S\subseteq\mathsf{MARKED}$. The actual search
  follows the simulated sequence up to its first successful test.
  It therefore finds a reply whenever some
  $T_i\setminus S\subseteq\mathsf{MARKED}$.
\end{proof}

The actual state $(\tau,E)$ depends on earlier tests, so we cannot
assume that the entries of $\ora_{\NO}$ remain independent after
conditioning on this state. Instead, we apply the containment lemma
to each possible state under $\ora_{\NO}\leftarrow\NO_\lambda$, then
take a union bound. We first bound the number of such states.

\begin{lemma}[Counting prover states]
  \label{app:state-count}
  For a fixed full public random tape, as $\ora_{\NO}$ varies, the prover
  has at most $J$ possible states $(\tau,E)$ at the start of rounds
  before an abort, where
  \[
    J=R\cdot2^c\cdot(|\mathcal X|+1)^{K(R-1)}.
  \]
\end{lemma}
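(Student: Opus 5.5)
The plan is to fix the full public random tape, so that every challenge $r_1,\ldots,r_R$ is determined, and then show that the state at the start of each round is determined by a short classical record of the earlier searches. The prover is deterministic given $\ora_{\NO}$ and the tape. Moreover, $\mu$, the designated reply tables $P^{\widehat\ora}$ and the tie-breaking orders are all fixed independently of $\ora_{\NO}$. Hence the only way $\ora_{\NO}$ enters the evolution of $(\tau,E)$ is through the retained test bits $b$.

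The first step is to compress a single successful search in round $i<j$. Given $(\tau,E)$ before round $i$ and the known challenge $r_i$, the new state is determined by two things. The first is the reply $m_i$ that succeeded. The second is the set $T$ whose test succeeded, because then $E\gets F_{m_i}[T]$ and $\tau\gets\tau\|(r_i,m_i)$.

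Admittedly, $F_{m_i}$ at the moment of success depends on which earlier tests failed. I will sidestep this by noting that $E_{\mathrm{new}}=F_{m_i}[T]$ is the set removed at that test. By \eqref{app:common-pairs}, every oracle in $E_{\mathrm{new}}$ marks $T$. I expect this to be the main subtlety. To avoid it cleanly, I will use the deterministic structure of the search directly. The search for a fixed reply, with all tests declared to fail, produces a fixed sequence $T_1,T_2,\ldots$, and it succeeds at the first $T_k\subseteq\mathsf{MARKED}$. Thus the outcome is determined by $m_i$ together with the identity of the successful set $T_k$, since the index $k$ is recovered as the first position in the fixed sequence equal to that set. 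The earlier replies' searches all failed, and that is determined by the fact that $m_i$ is the reply that was sent.

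Consequently, the state after round $i$ is a function of the previous state, $m_i$, and a set $T$ with $|T|<K$. I encode $T$ as a $K$-tuple over $\mathcal X\cup\{\bot\}$, giving at most $(|\mathcal X|+1)^K$ choices. This is where the maximal-set bound $|T|\le K-2<K$ from \cref{lem:maximal-eligible-set} is used.

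Next, I count by induction on the round. States at the start of round $j$ are determined by $(m_1,T^{(1)},\ldots,m_{j-1},T^{(j-1)})$. Here $m_1\cdots m_{j-1}$ ranges over at most $2^{d}$ strings with $d\le c$, since the replies are part of the communication, and there are at most $(|\mathcal X|+1)^{K(j-1)}\le(|\mathcal X|+1)^{K(R-1)}$ choices of sets. Summing over the at most $R$ rounds $j\in[R]$ then gives the stated bound $J=R\cdot 2^c\cdot(|\mathcal X|+1)^{K(R-1)}$. The factor $2^c$ also absorbs the differing reply lengths, and the padding convention makes the round structure fixed.

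The one point I must check carefully is that no further hidden dependence on $\ora_{\NO}$ exists. The failed tests within the successful reply and within the earlier replies only shrink $F_m$ deterministically along the fixed simulated sequence, so they do not. The queried bits other than $b$ are discarded, so they do not either. With that verified, the map from the record $(m_{<j},T^{(<j)})$ to the state is well defined, which proves the count.
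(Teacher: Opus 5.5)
Your proposal is correct and is essentially the paper's proof. With the public tape fixed, you recover the state before round $j$ from the successful pairs $(m_i,T_i)_{i<j}$ by replaying each reply search with all tests declared failed until the first (indeed unique) test of $T_i$ for $m_i$, then count $2^c(|\mathcal X|+1)^{K(j-1)}$ records per round and sum over $j\le R$ to get $J$ (the appeal to $|T|\le K-2$ is unnecessary, since eligibility already forces $|T|<K$).
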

\begin{proof}
  We recover the state before round $j$ from the successful choices
  $(m_i,T_i)_{i<j}$ as follows:
  \begin{enumerate}
    \item Initialize $\tau=\varnothing$ and $E=\operatorname{supp}(\mu)$.
    \item For $i=1,\ldots,j-1$, run the reply search with challenge
      $r_i$ from the fixed public tape. Declare every test a failure
      until the test of $T_i$ for reply $m_i$, then apply its successful
      update to $(\tau,E)$.
  \end{enumerate}
  The stopping point is unique because a failed set $T$ cannot be
  tested again for the same reply. Thus at most $c$ reply bits and
  $j-1$ sets of fewer than $K$ pairs determine the state, giving
  \[
    \#\{\text{states at the start of rounds}\}
      \le\sum_{j=1}^R2^c(|\mathcal X|+1)^{K(j-1)}
      \le J.
  \]
\end{proof}

We now apply \cref{app:containment} to the distributions from
\cref{app:requirement-distribution}. Take $\lambda$ large enough that
$K\delta\le1/8$, and set
\[
  s=\left\lfloor\frac{1}{8K\delta}\right\rfloor.
\]
We also take $\lambda$ large enough that $J2^{-s}\le\delta$.
Such a choice is possible because $\log J$ is polynomially bounded
in $\lambda$ by \cref{lem:code}, whereas $s$ grows exponentially.

\begin{lemma}[Successful reply searches]
  \label{app:search-success}
  Over $\ora_{\NO}\leftarrow\NO_\lambda$
  and the verifier's public coins, the cheating prover aborts with
  probability at most $J\cdot2^{-s}$.
\end{lemma}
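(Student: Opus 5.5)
We fix the verifier's full public random tape throughout and bound the abort probability over $\ora_{\NO}\leftarrow\NO_\lambda$ alone. Averaging over the tape at the end gives the stated bound. With the tape fixed, \cref{app:state-count} gives at most $J$ possible prover states $(\tau,E)$ at the start of rounds. We need not handle the fact that the realized state depends on $\ora_{\NO}$. Instead, for each of these $J$ candidate states (together with the challenge $r_j$ determined by the tape and $\tau$), we consider the oracle-independent event that the reply search starting from that state fails.

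Step 1 handles a fixed candidate state. If the prover aborts in round $j$, the state reached before round $j$ satisfies the invariants \eqref{app:source-invariant} and \eqref{app:residual-bound}, with $d_{j-1}$ in the threshold. These are exactly the hypotheses of \cref{app:requirement-distribution}. So only states satisfying these hypotheses need to be considered; any other state cannot be reached and contributes nothing. For such a state, the simulated all-fail search produces sets $T_1,T_2,\ldots$ and a distribution $\pi$ on subsets of $\mathcal X\setminus S$. These are determined without querying $\ora_{\NO}$, so they are independent of $\ora_{\NO}$. They have sizes below $K$ and satisfy the containment bound \eqref{app:requirement-bound}. By the last clause of \cref{app:requirement-distribution}, an abort from this state forces the event that no $T_i\setminus S$ is contained in $\mathsf{MARKED}$. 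Here $\mathsf{MARKED}$ is an independent $\delta$-biased random subset of $\mathcal X$, and we may take $S\subseteq\mathsf{MARKED}$ as given on reaching the state. We do not condition on it, however, because the event concerns only the sets $T_i\setminus S$, which are disjoint from $S$. Applying \cref{app:containment} to $\pi$, with set-size bound $K$, spread parameter $2\delta^{2|U|}$ and marking probability $\delta$, shows that every set in $\operatorname{supp}(\pi)$ misses $\mathsf{MARKED}$ with probability at most $2^{-s}$, where $s=\lfloor 1/(8K\delta)\rfloor$. We expect this to be the form that lemma takes, given the choices $K\delta\le1/8$ and $s$ made just before the statement.

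Step 2 takes a union bound. An abort in some round implies that, for the state actually reached at the start of that round, the fixed event of Step 1 occurs. Hence
\[
  \Pr_{\ora_{\NO}}[\text{abort}]\le\sum_{\text{states}}\Pr_{\ora_{\NO}}[\text{no }T_i\setminus S\subseteq\mathsf{MARKED}]\le J\cdot2^{-s}.
\]
Since this holds for every public tape, averaging over the tape gives the claim.

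The main obstacle is matching the hypotheses of \cref{app:containment} precisely. Its exact form is not stated above, so we must confirm three things: that the coverage bound $2\delta^{2|U|}$ for all $1\le|U|\le K$, which is quadratically smaller than the marking probability $\delta^{|U|}$, gives the tail $2^{-s}$ with this $s$; that restricting to sets disjoint from $S$ preserves independence; and that the invariants we rely on are the ones holding at the start of round $j$, with the threshold $d_{j-1}$ as required in \cref{app:requirement-distribution}.
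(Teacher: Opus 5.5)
Your proposal is correct and matches the paper's proof. You fix the public tape and, for each of the at most $J$ reachable states, use \cref{app:requirement-distribution} to bound the event of reaching that state and aborting there by the unconditional event that no set in $\operatorname{supp}(\pi)$ lies in $\mathsf{MARKED}$. You then apply \cref{app:containment}, whose form you guessed correctly, take a union bound over states, and average over the tape. One small remark: the disjointness of the sets $T_i\setminus S$ from $S$ is not needed, since you only bound a joint event by the unconditional probability of a larger event.
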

\begin{proof}
  Fix the full public random tape. By \cref{app:state-count}, at
  most $J$ states at the start of rounds are reachable before an
  abort as $\ora_{\NO}$ varies. For each such state, the tape fixes
  the next challenge, so we obtain a distribution $\pi$ from
  \cref{app:requirement-distribution}. Whenever the state is reached,
  the corresponding set $S$ satisfies $S\subseteq\mathsf{MARKED}$; hence
  \[
    \begin{aligned}
      &\Pr_{\ora_{\NO}\leftarrow\NO_\lambda}
        [\text{the prover reaches }(\tau,E)\text{ and aborts there}]\\
      &\qquad\le\Pr_{\ora_{\NO}\leftarrow\NO_\lambda}
        [\nexists T\in\operatorname{supp}(\pi):
          T\subseteq\mathsf{MARKED}]\\
      &\qquad\le2^{-s},
    \end{aligned}
  \]
  where the last inequality applies \cref{app:containment} under
  $\NO_\lambda$, without conditioning on reaching the state. A union
  bound over the at most $J$ states bounds the abort probability by
  $J2^{-s}$. Averaging over the public random tape proves the lemma.
\end{proof}

\subsubsection{The Final Oracle Queries}

It remains to show that, conditional on any public random tape and
test outcomes with no abort, $V$ rejects with probability at most $N\delta$.
We first prove that its final computation accepts with oracle
$\ora_S$, where $S$ is the final set of common marked pairs. We
then bound the probability that replacing $\ora_S$ by $\ora_{\NO}$
changes any query answer.

Write $V_{\mathrm{fin}}^\ora(\tau)$ for the final computation
(including oracle queries), which is deterministic because $\tau$
includes all public coins. Let $\eta$ consist of the full public
random tape and the sequence of the prover's binary test outcomes.
Each test records only the conjunction
$\bigwedge_{z\in T}\ora_{\NO}(z)$. We bound the rejection probability over
$\ora_{\NO}\leftarrow\NO_\lambda$ conditioned on $\eta$.

\begin{lemma}[Final verification]
  \label{lem:final-database-check}
  Fix $\eta$ of positive probability in which no search
  aborted. Let $\tau,E$ be the resulting state, and let $S$ be the set
  of pairs $(y,v)$ on which every oracle in $E$ returns one.
  Then
  \[
    \begin{gathered}
      V_{\mathrm{fin}}^{\ora_S}(\tau)=\ACCEPT,\\
      \Pr_{\ora_{\NO}\leftarrow\NO_\lambda}
        [V_{\mathrm{fin}}^{\ora_{\NO}}(\tau)=\REJECT\mid\eta]
        \le N\delta.
    \end{gathered}
  \]
\end{lemma}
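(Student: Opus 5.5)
Two parts: acceptance under $\ora_S$, then a union bound over the $N$ final queries. Here $V_{\mathrm{fin}}^{\ora}(\tau)$ is a deterministic adaptive sequence of at most $N$ queries followed by a decision.

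\emph{Acceptance under $\ora_S$.} Run $V_{\mathrm{fin}}^{\ora_S}(\tau)$ and let $Q$ be the queried addresses outside $S$; these are fixed once $\tau$ and $S$ are, and $|Q|\le N$. By the residual bound \eqref{app:residual-bound} with singletons $U=\{z\}$ (valid since $|S\cup\{z\}|\le K-1<K$ by \eqref{app:source-invariant}), we get $\Pr_{\widehat\ora\leftarrow\mu\mid E}[\widehat\ora(z)=1]\le\delta^2$ for each $z\in Q$. A union bound over $Q$ gives probability at most $N\delta^2<1$ that a sampled candidate marks some $z\in Q$. Hence some $\widehat\ora\in E$ satisfies $\widehat\ora(z)=0$ for all $z\in Q$. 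It also marks all of $S$, so it agrees with $\ora_S$ on every address that $V_{\mathrm{fin}}^{\ora_S}(\tau)$ queries. Adaptivity is harmless: by induction on queries, the computation under $\widehat\ora$ makes the same queries and gets the same answers, so it outputs the same decision. By \eqref{app:source-conditions}, $\tau$ is exactly the transcript produced by $P^{\widehat\ora}$ on these public coins. Perfect acceptance \eqref{app:perfect-source} then forces $V_{\mathrm{fin}}^{\widehat\ora}(\tau)=\ACCEPT$, and therefore $V_{\mathrm{fin}}^{\ora_S}(\tau)=\ACCEPT$.

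\emph{Comparing $\ora_S$ with $\ora_{\NO}$.} Conditioned on $\eta$, the state $(\tau,E)$ and hence $S$ and $Q$ are fixed, and no abort means $S\subseteq\mathsf{MARKED}$. If $\ora_{\NO}(z)=0$ for every $z\in Q$, then $\ora_{\NO}$ and $\ora_S$ agree on every address queried by the $\ora_S$-run, so the $\ora_{\NO}$-run is identical and accepts. It therefore suffices to show $\Pr[\ora_{\NO}(z)=1\mid\eta]\le\delta$ for each fixed $z\notin S$, and then take a union bound over $|Q|\le N$.

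\emph{Main obstacle: the conditional marking bound.} The event $\eta$ fixes the public tape, which is independent of $\ora_{\NO}$. Given the tape, the tested sets are determined adaptively by the earlier outcomes, so $\eta$ is the intersection of two kinds of events. Successful tests give monotone increasing events $\{T\subseteq\mathsf{MARKED}\}$. Failed tests give monotone decreasing events $\{T\not\subseteq\mathsf{MARKED}\}$. Because $E$ is the candidate set after the successful test, $S$ is the final successful $T$. I would first argue that the successful sets are nested inside $S$ across rounds: an earlier successful $T$ is common to its $E$, and later sets $E$ shrink, so $T\subseteq S$. Then $\eta$ is $\{S\subseteq\mathsf{MARKED}\}\cap D$ with $D$ decreasing. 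Conditioning on $S\subseteq\mathsf{MARKED}$ fixes the coordinates in $S$ to one and leaves the others i.i.d.\ $\mathrm{Bernoulli}(\delta)$. On the remaining coordinates, $D$ is still decreasing, and $\{\ora_{\NO}(z)=1\}$ is increasing for $z\notin S$. The Harris--FKG inequality on the product measure then gives $\Pr[\ora_{\NO}(z)=1\mid\eta]\le\delta$. The step needing the most care is justifying this decomposition of $\eta$ despite the adaptive choice of tests. For that I would write $\eta$ explicitly as a union over the deterministic test sequence fixed by the outcome string, which is a single sequence once the outcomes are fixed.
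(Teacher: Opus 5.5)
Your proposal is correct and follows the paper's proof essentially step for step. You use the same singleton residual bound and union bound to find a candidate in $E$ that agrees with $\ora_S$ on the $\ora_S$-run's queries. You split $\eta$ the same way, into $S\subseteq\mathsf{MARKED}$ plus a decreasing event from failed tests, and you finish with the same union bound over at most $N$ queries. The only difference is cosmetic: you cite the Harris--FKG inequality for $\Pr[\ora_{\NO}(z)=1\mid\eta]\le\delta$, whereas the paper proves this single-coordinate case by hand (monotonicity gives $p_1\le p_0$, then Bayes' rule).
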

\begin{proof}
  \emph{Acceptance under $\ora_S$.}
  By transcript consistency \eqref{app:source-conditions} and
  perfect acceptance \eqref{app:perfect-source},
  \[
    V_{\mathrm{fin}}^{\widehat\ora}(\tau)=\ACCEPT
    \qquad(\widehat\ora\in E).
  \]
  Run $V_{\mathrm{fin}}^{\ora_S}(\tau)$ and let $Q\subseteq\mathcal X$
  be the set of queried inputs, so $|Q|\le N$. The query sequence
  is determined by $S$ and $\tau$. Since $|S|\le K-2$, the singleton
  case of \eqref{app:residual-bound} applies to every $z\in Q\setminus S$.
  A union bound gives
  \[
    \begin{aligned}
      \Pr_{\widehat\ora\leftarrow\mu\mid E}
        [\exists z\in Q\setminus S:\widehat\ora(z)=1]
      &\le\sum_{z\in Q\setminus S}
        \Pr_{\widehat\ora\leftarrow\mu\mid E}[\widehat\ora(z)=1]\\
      &\le |Q|\delta^2\le N\delta^2<1.
    \end{aligned}
  \]
  Every candidate agrees with $\ora_S$ on $S$, so some
  $\widehat\ora\in E$ agrees with $\ora_S$ on all of $Q$. The two
  deterministic computations follow the same query sequence, and hence
  \[
    V_{\mathrm{fin}}^{\ora_S}(\tau)
      =V_{\mathrm{fin}}^{\widehat\ora}(\tau)=\ACCEPT.
  \]

  \emph{Conditional marking probabilities.}
  Since $\mu$ is fixed and the prover is deterministic, $\eta$
  determines every tested set. The candidate set $E$ only shrinks
  after successful searches, so its common marked set only grows.
  By \eqref{app:common-pairs}, every successful test set is contained
  in the final $S$, and the last equals $S$. For this fixed public
  random tape, conditioning on $\eta$ therefore imposes on
  $\ora_{\NO}$ exactly the conditions $S\subseteq\mathsf{MARKED}$ and
  \[
    \mathcal F=\bigcap_{\text{failed tests }T}
      \{T\not\subseteq\mathsf{MARKED}\}.
  \]
  We first condition on $S\subseteq\mathsf{MARKED}$, leaving
  the entries $\ora_{\NO}(z)$ for $z\in\mathcal X\setminus S$
  independent. For each $z\in\mathcal X\setminus S$, set
  \[
    p_b=\Pr_{\ora_{\NO}\leftarrow\NO_\lambda}
      [\mathcal F\mid S\subseteq\mathsf{MARKED},\ \ora_{\NO}(z)=b]
    \qquad(b\in\bits).
  \]
  Observe that $\mathcal F$ is decreasing in the remaining entries:
  setting $\ora_{\NO}(z)$ to zero cannot destroy it. Independence
  therefore gives $p_1\le p_0$, and Bayes' rule yields
  \begin{equation}
    \label{app:terminal-marks}
    \Pr_{\ora_{\NO}\leftarrow\NO_\lambda}
      [\ora_{\NO}(z)=1\mid\eta]
      =\frac{\delta p_1}{\delta p_1+(1-\delta)p_0}\le\delta
    \qquad(z\in\mathcal X\setminus S).
  \end{equation}

  \emph{Agreement on the final queries.}
  The computations with $\ora_S$ and $\ora_{\NO}$ agree until
  their first differing query answer. Since $S\subseteq\mathsf{MARKED}$,
  such a difference requires $\ora_{\NO}(z)=1$ at some
  $z\in Q\setminus S$. By \eqref{app:terminal-marks} and a union
  bound along the fixed $\ora_S$ query sequence,
  \[
    \begin{aligned}
      &\Pr_{\ora_{\NO}\leftarrow\NO_\lambda}
        [V_{\mathrm{fin}}^{\ora_{\NO}}(\tau)=\REJECT\mid\eta]\\
      &\qquad\le\sum_{z\in Q\setminus S}
        \Pr_{\ora_{\NO}\leftarrow\NO_\lambda}
          [\ora_{\NO}(z)=1\mid\eta]\\
      &\qquad\le N\delta.
    \end{aligned}
  \]
\end{proof}

\begin{proof}[Proof of \cref{app:public-coin-lower-bound}]
  Average \cref{lem:final-database-check} over the public coins and
  test outcomes with no abort, and add the abort probability from
  \cref{app:search-success}. Since $J2^{-s}\le\delta$,
  \[
    \begin{aligned}
      \E_{\ora\leftarrow\NO_\lambda}
        \Pr[\langle\widetilde P^\ora,V^\ora\rangle=\ACCEPT]
        &\ge1-N\delta-J2^{-s}\\
        &\ge1-(N+1)\delta.
    \end{aligned}
  \]
  The prover uses only its actual oracle and the fixed distribution
  $\mu$. All parameter bounds depend only on the number of rounds $R$,
  total communication $c$, and query bound $N$, uniformly over $V$'s
  advice and the conditioning event $\mathcal G$.
  This proves \eqref{app:public-coin-failure}.
\end{proof}

%% file: appendices/oracle-separations.tex
\section{From Distributions to Oracle Separations}
\label{app:diagonal}

In this section, we prove the oracle-input promise and fixed-oracle
decision separations from \cref{thm:main}. We first restrict to oracles
satisfying the QMA guarantees, then diagonalize against QCIP verifiers.

\subsection{Oracle-Input Promise Separation}
\label{app:promise}
In this section, we define a promise problem from the QMA acceptance
guarantees and show that it admits no QCIP protocol.
Although $\YES_\lambda$ and $\NO_\lambda$ may have overlapping supports,
these guarantees define disjoint promise sets. By \cref{thm:main},
each distribution assigns probability at least $1-2^{-\Omega(\lambda)}$
to its corresponding promise set.

\begin{definition}[Oracle-input promise problem]
  \label{def:oracle-promise}
  Let $V$ be the decoder-based verifier specified in
  \cref{sec:qma-verifier}, with witness length $a(\lambda)$ from
  \cref{lem:code}. Fix a constant $\lambda_0\ge2$ such that
  \cref{lem:code} holds and $\gamma_{\NO}<1/3$ for every
  $\lambda\ge\lambda_0$.
  For $\lambda\ge\lambda_0$, define the promise sets as follows:
  \begin{enumerate}
    \item $\ora\in L_{\YES,\lambda}$ if
      $\ora\in\operatorname{supp}(\YES_\lambda)$ and
      \[
        \exists\ket\psi:\quad
        \Pr[V^\ora(1^\lambda,\ket\psi)=\ACCEPT]\ge1-2^{-\lambda}.
      \]
    \item $\ora\in L_{\NO,\lambda}$ if
      $\ora\in\operatorname{supp}(\NO_\lambda)$ and
      \[
        \forall\ket\psi:\quad
        \Pr[V^\ora(1^\lambda,\ket\psi)=\ACCEPT]\le\gamma_{\NO}.
      \]
  \end{enumerate}
  In both conditions, the states range over $a(\lambda)$ qubits.
  For $\lambda<\lambda_0$, take $L_{\YES,\lambda}=\varnothing$ and let
  $L_{\NO,\lambda}$ contain only the zero oracle.
\end{definition}

As before, the input consists of $1^\lambda$ and query access to
$\ora\in L_{\YES,\lambda}\cup L_{\NO,\lambda}$, without access
to the hash functions $H$. Observe that these promise sets are disjoint:
$1-2^{-\lambda}\ge3/4>\gamma_{\NO}$ for $\lambda\ge\lambda_0$,
and the YES set is empty at smaller lengths.

\begin{theorem}[Oracle-input promise separation]
  \label{thm:oracle-input}
  The promise problem of \cref{def:oracle-promise} admits a QMA
  protocol but no QCIP protocol. More precisely:
  \begin{enumerate}
    \item \textbf{QMA upper bound.}
      A uniform polynomial-time quantum verifier $V$ uses an
      $a(\lambda)$-qubit witness and at most one classical oracle query.
      For every $\lambda$,
      \[
        \begin{aligned}
          \ora\in L_{\YES,\lambda}
          &\ \Longrightarrow\ 
          \exists\ket\psi:\ 
          \Pr[V^\ora(1^\lambda,\ket\psi)=\ACCEPT]\ge2/3,\\
          \ora\in L_{\NO,\lambda}
          &\ \Longrightarrow\ 
          \forall\ket\psi:\ 
          \Pr[V^\ora(1^\lambda,\ket\psi)=\ACCEPT]\le1/3.
        \end{aligned}
      \]
    \item \textbf{QCIP lower bound.}
      For every QCIP protocol $(P',V')$ and every sufficiently large $\lambda$,
      at least one of the following holds:
      \[
        \begin{aligned}
          &\exists\ora\in L_{\YES,\lambda}:\quad
            \Pr[\langle(P')^\ora,(V')^\ora\rangle=\ACCEPT]<2/3;\\
          &\exists\ora\in L_{\NO,\lambda}\quad\exists P^*:\quad
            \Pr[\langle(P^*)^\ora,(V')^\ora\rangle=\ACCEPT]>1/3.
        \end{aligned}
      \]
  \end{enumerate}
  Every prover receives $1^\lambda$ and has unrestricted oracle access.
  The threshold for $\lambda$ depends only on $V'$'s round, query,
  and communication bounds.
  Each acceptance probability is taken over an execution with the
  oracle fixed.
\end{theorem}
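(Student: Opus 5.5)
The plan is to derive both parts directly from \cref{thm:main}, using the fact that the promise sets of \cref{def:oracle-promise} are carved out exactly by the acceptance guarantees in that theorem.

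\emph{QMA upper bound.} I take the verifier $V$ of \cref{sec:qma-verifier}. By \cref{qma:yesprob}, it uses an $a(\lambda)$-qubit witness, runs in polynomial time, and makes at most one classical query. For $\lambda\ge\lambda_0$ the two implications are immediate from the definitions.
\begin{itemize}
  \item If $\ora\in L_{\YES,\lambda}$, some witness is accepted with probability at least $1-2^{-\lambda}\ge2/3$.
  \item If $\ora\in L_{\NO,\lambda}$, every witness is accepted with probability at most $\gamma_{\NO}<1/3$, by the choice of $\lambda_0$.
\end{itemize}
For $\lambda<\lambda_0$, the YES set is empty. The NO set contains only the zero oracle, on which $V$ never accepts, since its single query returns $0$.

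\emph{QCIP lower bound.} I argue by contradiction. Fix $(P',V')$ and a sufficiently large $\lambda$, and suppose both alternatives fail: $(P')$ makes $(V')$ accept with probability at least $2/3$ on every $\ora\in L_{\YES,\lambda}$, and every $P^*$ is accepted with probability at most $1/3$ on every $\ora\in L_{\NO,\lambda}$. By \cref{qma:yesprob}, $\YES_\lambda(L_{\YES,\lambda})\ge1-\gamma_{\YES}$. Hence \eqref{eq:main-completeness} holds with $\varepsilon=\gamma_{\YES}=2^{-\lambda}<1/3$, and \cref{thm:main} gives \eqref{eq:main-soundness}: the set of NO-sampled oracles on which all provers are accepted with probability at most $1/3$ has $\NO_\lambda$-probability at most $2/3+2^{-\lambda}$. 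On the other hand, this set contains $L_{\NO,\lambda}\cap\operatorname{supp}(\NO_\lambda)=L_{\NO,\lambda}$, whose $\NO_\lambda$-probability is at least $1-\gamma_{\NO}$ by \cref{qma:noprob}. Since $1-\gamma_{\NO}>2/3+2^{-\lambda}$ for large $\lambda$, this is a contradiction.

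The threshold on $\lambda$ is the maximum of the threshold from \cref{thm:main}, the threshold making this last inequality hold, and $\lambda_0$. The threshold from \cref{thm:main} depends only on the round, query, and communication bounds of $V'$; the other two are absolute. So the combined threshold depends only on these bounds, as claimed.

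The only subtle step is checking that the quantifiers match. \Cref{thm:main} is stated for a fixed $\varepsilon$, and here the completeness hypothesis is used with $\varepsilon=2^{-\lambda}$, which varies with $\lambda$. This is harmless because the conclusion \eqref{eq:main-soundness} holds for every $\varepsilon<1/3$ simultaneously at each sufficiently large $\lambda$, so I can apply it at each $\lambda$ separately with $\varepsilon=2^{-\lambda}$. I also need that the supports of the oracles sampled in \cref{thm:main} are the same support sets used to define $L_{\YES,\lambda}$ and $L_{\NO,\lambda}$; this holds by \cref{def:oracle-promise}.
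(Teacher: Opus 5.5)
Your proposal is correct and follows the paper's proof essentially step for step. For the QMA part, you read completeness and soundness directly off the promise sets. For the QCIP part, you plug $\varepsilon=\gamma_{\YES}=2^{-\lambda}$ into \cref{thm:main} and compare the bound $2/3+\gamma_{\YES}$ with $\NO_\lambda(L_{\NO,\lambda})\ge1-\gamma_{\NO}$; your contradiction framing is just the contrapositive of the paper's case split. The only cosmetic difference is at $\lambda<\lambda_0$: the paper hardwires $V$ to reject, while you observe that the unmodified $V$ never accepts the zero oracle because acceptance requires a query answer of one. Both handlings are valid.
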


\begin{proof}[Proof of \cref{thm:oracle-input}]
  \emph{QMA upper bound.}
  We use $V$, except that it outputs $\REJECT$
  and terminates for $\lambda<\lambda_0$.
  For $\lambda\ge\lambda_0$, the promise gives completeness at least
  $1-2^{-\lambda}\ge3/4>2/3$ and soundness at most $\gamma_{\NO}<1/3$.
  The witness length, running time, and query count are as stated
  in \cref{qma:yesprob}.

  \emph{QCIP lower bound.}
  Fix $(P',V')$ and choose $\lambda$ large enough that \cref{thm:main}
  applies and $\gamma_{\YES}+\gamma_{\NO}<1/3$. If completeness
  fails on some oracle in $L_{\YES,\lambda}$, we obtain the first
  alternative. Otherwise,
  \[
    \Pr_{\ora\leftarrow\YES_\lambda}
      \left[\Pr[\langle(P')^\ora,(V')^\ora\rangle=\ACCEPT]\ge2/3\right]
      \ge1-\gamma_{\YES}.
  \]
  Applying \cref{thm:main} gives
  \[
    \Pr_{\ora\leftarrow\NO_\lambda}
      \left[\forall P^*:\
        \Pr[\langle(P^*)^\ora,(V')^\ora\rangle=\ACCEPT]\le1/3\right]
      \le2/3+\gamma_{\YES},
  \]
  whereas
  \[
    \Pr_{\ora\leftarrow\NO_\lambda}[\ora\in L_{\NO,\lambda}]
    \ge1-\gamma_{\NO}>2/3+\gamma_{\YES}.
  \]
  Hence some $\ora\in L_{\NO,\lambda}$ and prover $P^*$ satisfy
  $\Pr[\langle(P^*)^\ora,(V')^\ora\rangle=\ACCEPT]>1/3$.
\end{proof}

\subsection{Fixed-Oracle Decision Separation}
\label{app:fixed-oracle}

In this section, we obtain a unary language outside QCIP relative
to a fixed classical oracle by diagonalization~\cite{BGS75,FS88};
see also~\cite[Section~4.6]{Zha25}. We choose one oracle block for
each verifier and place later blocks beyond that verifier's query-length
bound. Note that an unbounded prover can still read the later blocks.
We therefore first show that a completeness failure excludes every
prover, even when a fixed auxiliary oracle is available.

\begin{lemma}[Fixed auxiliary oracle]
  \label{app:auxiliary}
  Let $V^{\ora,\ora'}$ be a QCIP verifier with access to an oracle input
  $\ora:\mathcal Z_\lambda\to\bits$ and an auxiliary Boolean oracle
  $\ora'$. For every sufficiently large $\lambda$ and every fixed
  $\ora'$, at least one of the following holds:
  \[
    \begin{aligned}
      &\exists\ora\in L_{\YES,\lambda}\quad\forall P^*:\quad
        \Pr[\langle(P^*)^{\ora,\ora'},V^{\ora,\ora'}\rangle=\ACCEPT]<2/3;\\
      &\exists\ora\in L_{\NO,\lambda}\quad\exists P^*:\quad
        \Pr[\langle(P^*)^{\ora,\ora'},V^{\ora,\ora'}\rangle=\ACCEPT]>1/3.
    \end{aligned}
  \]
  The promise sets are from \cref{def:oracle-promise}. The threshold for
  $\lambda$ depends only on $V$'s round, query, and communication bounds,
  not on $\ora'$.
  Each acceptance probability is over a protocol execution with both
  $\ora$ and $\ora'$ fixed.
\end{lemma}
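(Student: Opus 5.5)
The plan is to reduce to \cref{thm:main} in its strengthened form from \cref{rem:query-model-hardness}. That form allows unrestricted local computation and classical advice fixed independently of the sampled oracle input. The fixed auxiliary oracle $\ora'$ is then treated as advice. Fix $\ora'$ and define a verifier $V_{\ora'}$ in the query model of \cref{def:query-model}. It has the table of $\ora'$ (restricted to what $V$ could query, or in full) hard-wired as advice, and answers queries to $\ora'$ locally. On input $\ora$ it then behaves exactly like $V^{\ora,\ora'}$. The designated and cheating provers likewise get $\ora'$ as advice, so any prover strategy with access to $(\ora,\ora')$ corresponds to a prover with access to $\ora$ plus advice, and conversely. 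The round, query, and communication bounds of $V_{\ora'}$ equal those of $V$. So the threshold on $\lambda$ from \cref{thm:main}, which depends only on these bounds, is independent of $\ora'$.

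Next I would derive the dichotomy by contradiction. Suppose the first alternative fails. Then for every $\ora\in L_{\YES,\lambda}$ some prover $P_\ora^*$ achieves acceptance at least $2/3$. To apply \cref{thm:main} we need a single designated prover, but provers are unbounded and, by \cref{lem:classical-prover}, can be taken to be reply tables that read the whole oracle. So define $P'$ as follows: on oracle $\ora$, it runs $P^*_\ora$, chosen from $\ora$ by a fixed rule (say the first maximizing deterministic reply table, which exists by \cref{lem:classical-prover}(3)). Then $P'$ achieves acceptance at least $2/3$ on all of $L_{\YES,\lambda}$. Since $\YES_\lambda(L_{\YES,\lambda})\ge1-\gamma_{\YES}$ by \cref{qma:yesprob}, hypothesis \eqref{eq:main-completeness} holds with $\varepsilon=\gamma_{\YES}$. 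Then \eqref{eq:main-soundness} gives
\[
\Pr_{\ora\leftarrow\NO_\lambda}\bigl[\forall P^*:\ \Pr[\langle P^*,V_{\ora'}\rangle=\ACCEPT]\le1/3\bigr]\le 2/3+\gamma_{\YES}.
\]
On the other hand, $\NO_\lambda(L_{\NO,\lambda})\ge1-\gamma_{\NO}>2/3+\gamma_{\YES}$ for large $\lambda$, by \cref{qma:noprob}. Hence some $\ora\in L_{\NO,\lambda}$ admits a cheating prover with acceptance greater than $1/3$. Translating back from advice to oracle access gives the second alternative, exactly as in the proof of \cref{thm:oracle-input}.

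The main obstacle is justifying that the advice extension really holds uniformly, that is, that \cref{rem:query-model-hardness} is valid. Concretely, I would check three things. First, the simulation step (\cref{est:simulation}, \cref{est:application}) never used efficiency: $V'$ and $\mathsf{Dec}$ may depend on the advice, and are fixed before $\ora$ is sampled, as the incompressibility theorems require. Second, \cref{lem:public-coin-perfect} tolerates advice. Third, in \cref{app:public-coin-lower-bound} the distribution $\mu$ and the parameters $K,J,s$ depend only on $R,c,N$. A further subtlety is that the queried length of $\ora'$ could be unbounded. This is harmless: the advice may have arbitrary length, and no bound in the proof involves the advice length. I expect the first of these checks to need the most care, since the decoder must replay the verifier's advice-dependent computation without $\ora$; because the advice is oracle-independent shared information, the union bound over $2^s$ messages is unaffected.
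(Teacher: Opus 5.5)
Your proposal is correct and follows essentially the same route as the paper. Both arguments hard-wire $\ora'$ (restricted to strings of length at most $T(\lambda)$) as oracle-independent advice, so that \cref{rem:query-model-hardness} applies uniformly. Both then let the designated prover use the first acceptance-maximizing deterministic reply table for each $\ora$, and rerun the counting argument from the proof of \cref{thm:oracle-input}, which you inline. Your additional checks on why the remark holds go beyond the paper, which simply invokes it. The ``in full'' advice option and the worry about unbounded query lengths are unnecessary, since a polynomial-time $V$ never queries $\ora'$ beyond length $T(\lambda)$.
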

\begin{proof}
  \emph{Accounting for the auxiliary oracle.}
  Give both parties the advice
  $\ora'|_{\bits^{\le T(\lambda)}}$, where $T$ bounds $V$'s running time,
  and implement $V$'s queries to $\ora'$ by reversible lookup.
  Since this table is independent of the sampled oracle input $\ora$,
  the resulting verifier is in the query model of
  \cref{def:query-model}. Hence \cref{rem:query-model-hardness}
  applies uniformly over the advice for fixed round, query, and
  communication bounds.

  \emph{Choosing a prover.}
  Now fix a sufficiently large $\lambda$. For each oracle input $\ora$,
  choose a deterministic reply table maximizing acceptance by
  $V^{\ora,\ora'}$, which exists by \cref{lem:classical-prover}.
  Let $P_\lambda$ read $\ora$ and use its maximizing reply table
  on $1^\lambda$. On other inputs, it outputs $\bot$ and terminates.
  These finitely many tables can be hardwired into $P_\lambda$.

  If the first alternative fails, $(P_\lambda,V)$ is complete on
  all of $L_{\YES,\lambda}$. The proof of \cref{thm:oracle-input}
  then implies that some $\ora\in L_{\NO,\lambda}$ and prover $P^*$ satisfy
  \[
    \Pr[\langle(P^*)^{\ora,\ora'},V^{\ora,\ora'}\rangle=\ACCEPT]>1/3,
  \]
  since its threshold for $\lambda$ is uniform over provers.
\end{proof}

\begin{theorem}[Fixed-oracle decision separation]
  \label{cor:oracle}
  There are a classical oracle $\ora:\bits^*\to\bits$ and a unary language
  $L\in\QMA^\ora\setminus\QCIP^\ora$.
\end{theorem}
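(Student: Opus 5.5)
We will prove \cref{cor:oracle} by the standard diagonalization of~\cite{BGS75,FS88}, building $\ora$ block by block from the oracle-input promise problem of \cref{def:oracle-promise}.

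\emph{Setup.} Enumerate all QCIP protocols as $(V_1,V_2,\ldots)$. Each $V_k$ is a uniform polynomial-time quantum verifier with polynomial bounds on rounds, queries and communication, clocked so that each running time is an explicit polynomial $T_k$. For each length $\lambda$ we reserve a disjoint region of $\bits^*$, say strings with a prefix encoding $\lambda$, on which we place a table $\ora_\lambda:\mathcal Z_\lambda\to\bits$. The language is $L=\{1^\lambda:\ora_\lambda\in L_{\YES,\lambda}\}$. We will make sure that every used $\ora_\lambda$ lies in $L_{\YES,\lambda}\cup L_{\NO,\lambda}$; unused lengths get the zero table, which is in $L_{\NO,\lambda}$ or a NO input at small lengths. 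With this guarantee, $L\in\QMA^\ora$ is immediate from part 1 of \cref{thm:oracle-input}. On input $1^\lambda$, the verifier redirects its single classical query to the $\lambda$-th region and uses the $a(\lambda)$-qubit witness. Completeness and soundness then follow from the promise.

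\emph{Diagonalization.} We process $k=1,2,\ldots$ in order. At stage $k$, we pick a length $\lambda_k$ with three properties: it exceeds every length whose region may have been queried by earlier verifiers, i.e.\ $\lambda_k>T_{k'}(\lambda_{k'})$ for all $k'<k$; it exceeds the threshold of \cref{app:auxiliary} for $V_k$'s bounds; and $T_k(\lambda_k)$ is smaller than the starting length of every later block. We regard the rest of $\ora$ as an auxiliary oracle $\ora'$. This consists of the already fixed blocks, and zero on future blocks within reach of $V_k$, i.e.\ all strings of length at most $T_k(\lambda_k)$ outside the $\lambda_k$ region. We view $V_k$ on $1^{\lambda_k}$ as a verifier with oracle input $\ora_{\lambda_k}$ and auxiliary oracle $\ora'$. \cref{app:auxiliary} then gives one of two outcomes. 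Either some $\ora_{\lambda_k}\in L_{\YES,\lambda_k}$ has every prover accepted with probability below $2/3$, or some $\ora_{\lambda_k}\in L_{\NO,\lambda_k}$ has a prover accepted with probability above $1/3$. We fix that table, so $V_k$ fails to decide $1^{\lambda_k}$ correctly. All lengths skipped between stages get zero tables.

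\emph{The main obstacle.} We must check that later stages do not undo the failure. There are two concerns. For the verifier, this is handled by the block placement: later blocks start beyond $T_k(\lambda_k)$, so $V_k$ never queries them. For the prover, which is unbounded and may read later blocks, the first alternative of \cref{app:auxiliary} is stated for all $P^*$ with access to the full pair $(\ora,\ora')$. Changing $\ora'$ beyond $V_k$'s reach, however, changes the oracle the prover sees. We will handle this as the paper indicates: a prover's replies are just a classical strategy. So the first alternative, quantified over all provers relative to a fixed verifier view, remains true whatever the later blocks contain, because the verifier's view of the oracle is unchanged. In the second alternative, the cheating prover only needs to exist, and it can be taken to depend only on the verifier-visible part. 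We should justify this via \cref{lem:classical-prover}(3), since maximal acceptance is attained by a deterministic reply table depending only on $V_k$'s oracle. Hence no $(P,V_k)$ is a QCIP system for $L$ relative to the final $\ora$, and $L\notin\QCIP^\ora$.
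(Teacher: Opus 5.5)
Your proposal is correct and follows essentially the same diagonalization as the paper. It places one promise block per verifier beyond the reach of earlier verifiers and applies \cref{app:auxiliary} with the already fixed values (and zeros within reach) as the auxiliary oracle. The first alternative survives provers that read later blocks because the verifier's view is unchanged, and the second is witnessed by a deterministic reply table via \cref{lem:classical-prover}. The remaining differences are cosmetic: you define $L$ by whether the block lies in $L_{\YES,\lambda}$, which agrees with the paper's labels $b_i$ because unselected zero blocks lie in $L_{\NO,\lambda}$, and you separate blocks via $\lambda_k>T_{k'}(\lambda_{k'})$ rather than the paper's length bounds $d_i$.
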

\begin{proof}
  \emph{Preparing the oracle blocks.}
  We first realize each oracle input
  $\ora_\lambda:\mathcal Z_\lambda\to\bits$ as a block of a Boolean
  oracle on binary strings. Recall that
  $\mathcal Z_\lambda=\bits^q\times C$,
  $\Sigma=\mathbb F_q^\lambda$, and
  $\lambda^5<q\le2\lambda^5$. Writing each field element in
  $\lceil\log_2 q\rceil$ bits encodes a pair $(y,v)\in\mathcal Z_\lambda$
  in
  \[
    q+q\lambda\lceil\log_2q\rceil
      =O(\lambda^6\log\lambda)\le\lambda^8
  \]
  bits for all sufficiently large $\lambda$.
  Enlarge $\lambda_0$ accordingly and pad each encoding with zeros
  to length $\ell(\lambda)=\lambda^8$. Write
  \[
    \operatorname{enc}_\lambda:\mathcal Z_\lambda\hookrightarrow
      \bits^{\ell(\lambda)}
  \]
  for this encoding and call its image \emph{block $\lambda$}.
  Encoding and decoding, including testing $v\in C$, take polynomial
  time. A block realizes $\ora_\lambda$ by assigning
  $\ora(\operatorname{enc}_\lambda(y,v))=\ora_\lambda(y,v)$;
  strings outside all block encodings have value zero.

  Let $V$ be the QMA verifier of \cref{sec:qma}, and enumerate the
  uniform QCIP verifiers as $(V_i,T_i)_{i\ge1}$, where $T_i$ is a
  polynomial bound on time and query length.

  When applying \cref{app:auxiliary}, we route queries in block
  $\lambda$ to $\ora$ and all other queries to $\ora'$.
  The encoding can be computed reversibly, so routing preserves
  polynomial running time and query bounds.

  \emph{Defeating the verifiers one at a time.}
  We set $d_0=0$. At stage $i$, we choose an input length $\lambda_i$
  for $V_i$ and fix the oracle at every address it can query on that input:
  \begin{enumerate}
    \item Choose $\lambda_i$ so that $\ell(\lambda_i)>d_{i-1}$ and
      $\lambda_i$ is sufficiently large for the promise guarantees of
      \cref{thm:oracle-input} and for \cref{app:auxiliary} applied to $V_i$.
    \item Put $d_i=\max\{d_{i-1},\ell(\lambda_i),T_i(\lambda_i)\}$. Fix every
      unassigned oracle value at length at most $d_i$ to zero, except the
      canonical addresses of block $\lambda_i$.
    \item Let $\ora'_i$ agree with the values fixed so far and equal zero
      elsewhere. Choose a block $\ora_i$ and a label $b_i\in\bits$
      satisfying one of the following:
      \begin{itemize}
        \item $b_i=1$, $\ora_i\in L_{\YES,\lambda_i}$, and
          \[
            \forall P^*:\quad
            \Pr[\langle(P^*)^{\ora_i,\ora'_i},V_i^{\ora_i,\ora'_i}\rangle=\ACCEPT]<2/3.
          \]
        \item $b_i=0$, $\ora_i\in L_{\NO,\lambda_i}$, and
          \[
            \exists P^*:\quad
            \Pr[\langle(P^*)^{\ora_i,\ora'_i},V_i^{\ora_i,\ora'_i}\rangle=\ACCEPT]>1/3.
          \]
      \end{itemize}
      Set block $\lambda_i$ to $\ora_i$.
  \end{enumerate}
  Such a pair $(\ora_i,b_i)$ exists by \cref{app:auxiliary}.
  After all stages, we set every unassigned value to zero and define
  $L=\{1^{\lambda_i}:b_i=1\}$.

  \emph{QCIP lower bound.}
  Fix a stage $i$. Later blocks have address length greater than
  $d_i\ge T_i(\lambda_i)$, so they cannot change $V_i$'s oracle
  answers on $1^{\lambda_i}$. For the prover, we consider two cases:
  \begin{itemize}
    \item If $b_i=1$, Step 3 excludes every strategy, including any
      strategy selected using later blocks.
    \item If $b_i=0$, the deterministic reply table witnessing
      soundness failure still makes $V_i$ accept with probability
      greater than $1/3$.
      Use that table on $1^{\lambda_i}$ and arbitrary replies elsewhere.
  \end{itemize}
  Thus $V_i$ fails completeness or soundness on $1^{\lambda_i}$ even
  after the entire oracle $\ora$ has been fixed.

  \emph{QMA upper bound.}
  Finally, our QMA verifier outputs $\REJECT$ and terminates on
  nonunary inputs or $1^\lambda$ with $\lambda<\lambda_0$; otherwise,
  it runs $V$ on block $\lambda$.
  On block $\lambda_i$, \cref{thm:oracle-input} gives completeness
  if $b_i=1$ and soundness if $b_i=0$. On unselected blocks, $V$
  always outputs $\REJECT$, since those blocks are zero. The
  algorithm is uniform: it need not identify the selected blocks.
  Hence $L\in\QMA^\ora\setminus\QCIP^\ora$.
\end{proof}

%% file: appendices/qcip-amplification.tex
\section{Amplification for QCIP}
\label{app:amplification}

In this section, we record the standard sequential-repetition argument
for QCIP~\cite[Section~2.5]{GHMW15}. It applies relative to every fixed
oracle, even when a cheating prover retains its state across repetitions.

\begin{lemma}[Relativizing amplification]
  \label{lem:qcip-amplification}
  Fix a classical oracle $\ora$ and a QCIP protocol $(P,V)$ with completeness
  $c(n)$ and soundness $s(n)$, where $c,s$ are polynomial-time computable
  rational-valued functions and $0\le s(n)<c(n)\le1$.
  Put $\Delta(n)=c(n)-s(n)$.
  For every polynomial $k:\mathbb N\to\mathbb N_{\ge1}$, sequential
  repetition gives a QCIP protocol $(P',V')$ with
  completeness at least $1-2^{-k(n)\Delta(n)^2/(2\ln2)}$ and soundness at most
  $2^{-k(n)\Delta(n)^2/(2\ln2)}$ relative to $\ora$.
  The verifier's number of messages and oracle queries is at most $k(n)$
  times that of $V$. The construction is the same for every $\ora$.
\end{lemma}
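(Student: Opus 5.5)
The plan is sequential repetition followed by a martingale concentration bound; the prover's memory across repetitions is handled through conditional acceptance probabilities. The verifier $V'$ runs $k=k(n)$ copies of $V$ one after another, all with the same oracle $\ora$. At the end it accepts if and only if the number $X$ of accepting copies satisfies $X\ge k\cdot\frac{c(n)+s(n)}{2}$. This threshold is polynomial-time computable because $c,s$ are. Each copy uses fresh verifier registers, so the message and query counts are exactly $k$ times those of $V$. The construction does not depend on $\ora$, since $V$ itself is fixed independently of the oracle.

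\emph{Completeness.} The honest prover $P'$ runs $P$ afresh in each copy. Copy $i$ starts with fresh verifier registers and a fresh honest prover, so conditioned on the full transcript of copies $1,\ldots,i-1$ it accepts with probability at least $c(n)$. Write $Y_i\in\bits$ for the indicator that copy $i$ accepts. Then the process $\sum_{i\le u}(Y_i-c)$ is a submartingale with increments bounded in $[-1,1]$. Azuma--Hoeffding then gives
\[
\Pr\left[X<k\tfrac{c+s}{2}\right]\le\exp\left(-2k(\Delta/2)^2\right)=\exp\left(-k\Delta^2/2\right)=2^{-k\Delta^2/(2\ln2)}.
\]

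\emph{Soundness.} This is the main step. The difficulty is that a cheating prover $P^*$ may keep state across copies and adapt its strategy to earlier transcripts. By \cref{lem:classical-prover}, we may take $P^*$ to be a classical randomized strategy that sees the full transcript. Fix the transcript $\pi_{<i}$ of the first $i-1$ copies. Because the verifier's registers for copy $i$ are fresh, i.e.\ in state $\ket0$ and uncorrelated with everything else, the rest of the interaction defines a single prover for one run of $V$. Concretely, this prover is $P^*$ with $\pi_{<i}$ hardwired and with its randomness conditioned on $\pi_{<i}$. Part~2 of \cref{lem:classical-prover} supplies the product structure needed here: the prover's conditional state carries no correlation with the fresh verifier registers. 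Soundness of $(P,V)$ relative to the fixed $\ora$ then gives
\[
\Pr[Y_i=1\mid\pi_{<i}]\le s(n).
\]
So $\sum_{i\le u}(Y_i-s)$ is a supermartingale with bounded increments. The same Azuma--Hoeffding bound yields $\Pr[X\ge k\tfrac{c+s}{2}]\le 2^{-k\Delta^2/(2\ln 2)}$.

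Every step above holds for an arbitrary fixed classical oracle. The one point I would check carefully is that the conditioning argument is valid when the prover has unrestricted access to $\ora$. This goes through because the argument needs only that the verifier's state at the start of copy $i$ is $\ket0$ and independent of the prover, and that property holds for every fixed $\ora$.
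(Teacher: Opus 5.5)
Your proposal is correct and follows essentially the same route as the paper. Both use sequential repetition with threshold $k(c+s)/2$. Both observe that copy $i$ starts from a fresh verifier, so the prover's conditional strategy is a single-run prover and $\Pr[Y_i=1\mid\text{past}]\le s$. Both finish with a martingale Hoeffding bound: the paper iterates Hoeffding's lemma and applies exponential Markov with $t=2\Delta$, which is exactly the proof of Azuma--Hoeffding.

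The only slip is cosmetic. The stated exponent needs each increment to lie in an interval of length one ($[-c,1-c]$ or $[-s,1-s]$), not merely in $[-1,1]$, which would only give $e^{-k\Delta^2/8}$; your displayed bound already uses the correct length-one form.
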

\begin{proof}
  Fix an input $x$ of length $n$ and suppress the argument $n$ in
  $c,s,\Delta,k$. We define $(V')^\ora$ as follows:
  \begin{enumerate}
    \item For $i=1,\ldots,k$, execute $V^\ora$ on $x$ with fresh workspace
      and randomness. Record its measured acceptance bit $X_i\in\bits$ and
      discard the remaining workspace. Use the recorded bits only in
      the final decision.
    \item Output $\ACCEPT$ if $\sum_{i=1}^k X_i\ge k(c+s)/2$
      and $\REJECT$ otherwise, then terminate.
  \end{enumerate}
  We let the designated prover $P'$ run independent copies of $P$.

  \emph{Soundness.}
  Suppose $x$ is a NO input, and fix a cheating prover $P^*$.
  Condition on any values of $X_1,\ldots,X_{i-1}$ with nonzero
  probability. The prover's conditional private state and its
  strategy in repetition $i$ define a prover $P_i^*$ for $V^\ora$.
  Since the verifier starts afresh in repetition $i$, soundness
  of $V^\ora$ implies
  \[
    \Pr[X_i=1\mid X_1,\ldots,X_{i-1}]\le s.
  \]
  By Hoeffding's lemma~\cite{Hoeffding63}, for $t\ge0$ we have
  \[
    \E[e^{t(X_i-s)}\mid X_1,\ldots,X_{i-1}]\le e^{t^2/8}.
  \]
  Iterating this inequality and applying exponential Markov's
  inequality with $t=2\Delta$, we obtain
  \begin{gather*}
    \E\exp\!\left(t\sum_{i=1}^k(X_i-s)\right)\le e^{kt^2/8},\\
    \Pr\!\left[\sum_{i=1}^kX_i\ge k\cdot\frac{c+s}{2}\right]
      \le \exp\!\left(-\frac{tk\Delta}{2}+\frac{kt^2}{8}\right)
      =2^{-k\Delta^2/(2\ln2)}.
  \end{gather*}

  \emph{Completeness.}
  If $x$ is a YES input, each independent execution with $P'$ accepts
  with probability at least $c$. Hoeffding's inequality therefore gives
  \[
    \Pr\!\left[\sum_{i=1}^kX_i<k\cdot\frac{c+s}{2}\right]
      \le 2^{-k\Delta^2/(2\ln2)}.
  \]
  Finally, each repetition exchanges only classical messages and
  uses the same oracle $\ora$, with at most $k$ times as many messages
  and queries as $V$.
\end{proof}

In particular, if $\Delta(n)\ge1/\poly(n)$, polynomially many repetitions give
completeness at least $1-2^{-p(n)}$ and soundness at most $2^{-p(n)}$
for any polynomial $p$. Thus the choice $2/3,1/3$ in \cref{def:qcip}
is without loss of generality, also relative to an oracle.

%% file: appendices/interactive-proof-tools.tex
\section{Standard Tools for Interactive Proofs}
\label{app:ip-transformations}

In this section, we show that classical provers suffice when all
messages are classical. We then verify that the standard public-coin
and perfect-completeness transformations preserve polynomial query
bounds, as stated in \cref{lem:public-coin-perfect}.

\input{appendices/classical-provers}

\input{appendices/protocol-transformations}

%% file: appendices/classical-provers.tex
\subsection{Classical Prover Strategies}
\label{app:classical-provers}
In this section, we prove \cref{lem:classical-prover}: classical
messages allow a classical prover to reproduce any quantum prover's
strategy, and a deterministic prover suffices to maximize acceptance.
The key observation is that the parties' private states remain
independent conditioned on the transcript, for a fixed input and oracle.

\begin{proof}[Proof of \cref{lem:classical-prover}]
  Fix the input and oracle $\ora$, and let $P^*$ be the given prover.
  Let $\mathsf V,\mathsf P$ be the parties' private registers.
  We use the purification convention of \cref{sec:prelims}.

  \emph{Conditional states.}
  We proceed by induction on the number of messages. The initial
  state is a product state. Suppose that, after a transcript $\pi$
  of nonzero probability, the conditional state is
  $\ket{\xi_\pi}_{\mathsf V}\ket{\zeta_\pi}_{\mathsf P}$.
  If the verifier sends the next message, its local computation acts as
  \[
    \ket0_{\mathsf M}\ket{\xi_\pi}_{\mathsf V}
      \ket{\zeta_\pi}_{\mathsf P}
      \longmapsto
    \sum_m\ket m_{\mathsf M}\ket{\xi_{\pi,m}}_{\mathsf V}
      \ket{\zeta_\pi}_{\mathsf P},
  \]
  where $\mathsf M$ is the message register. Upon measuring
  $\mathsf M$, outcome $m$ has probability $\|\xi_{\pi,m}\|_2^2$;
  conditioned on an outcome of nonzero probability, the private state is
  \[
    \frac{\ket{\xi_{\pi,m}}_{\mathsf V}}{\|\xi_{\pi,m}\|_2}
      \otimes\ket{\zeta_\pi}_{\mathsf P}.
  \]
  The argument is the same when the prover sends a message.
  Thus the conditional state remains a product, with each factor
  determined by that party's strategy, the input, $\ora$, and the
  transcript so far.
  In particular, the original private states satisfy
  \[
    \rho_\pi^{\mathsf V\mathsf P}
      =\rho_\pi^{\mathsf V}\otimes\rho_\pi^{\mathsf P}.
  \]

  \emph{Classical provers.}
  We can therefore define the classical prover from $P^*$ alone.
  For each transcript $\pi$ preceding a reply, run $P^*$ from its
  initial state, supplying the verifier messages in $\pi$ and
  conditioning on its replies agreeing with $\pi$. Write its
  next-reply state and the resulting reply distribution as
  \[
    \sum_m\ket m_{\mathsf M}\ket{\zeta_{\pi,m}}_{\mathsf P},
    \qquad q_\pi(m)=\|\zeta_{\pi,m}\|_2^2,
  \]
  where we choose an arbitrary reply distribution if the conditioning
  event has probability zero. Given the input, oracle access, and full
  transcript $\pi$, the classical strategy samples and returns
  $m\leftarrow q_\pi$ using fresh randomness. It retains no private
  state between replies.
  Observe that $q_\pi$ depends only on $P^*$, the input, $\ora$,
  and $\pi$. The conditional-state calculation above shows that
  the classical strategy preserves both the transcript distribution
  and the conditional verifier states against every verifier.

  \emph{Deterministic provers.}
  Finally, fix $V$. We may sample the classical strategy's replies
  in advance: independently choose
  $f(\pi)\leftarrow q_\pi$ for every transcript $\pi$ preceding a reply.
  Let $\nu$ be the resulting distribution on the finite set
  $\mathcal T$ of deterministic reply tables. Since these advance
  choices do not change the interaction's distribution,
  \[
    \Pr[\langle(P^*)^\ora,V^\ora\rangle=\ACCEPT]
      =\E_{f\leftarrow\nu}a(f)\le\max_{f\in\mathcal T}a(f),
  \]
  where $a(f)$ is $V$'s acceptance probability when the prover
  replies with $f(\pi)$ after transcript $\pi$. Since
  $\mathcal T$ is finite, a deterministic prover attains this
  maximum and hence does at least as well as $P^*$.
\end{proof}

%% file: appendices/protocol-transformations.tex
\subsection{Public Coins and Perfect Completeness}
\label{app:combined-transformations}

In this section, we observe that the transformations of Goldwasser
and Sipser~\cite{GS} and F\"urer et al.~\cite{FGMSZ} apply in the query
model of \cref{def:query-model}. Their combined guarantee is
\cref{lem:public-coin-perfect}, whose proof invokes the two lemmas below.
For public-coin protocols, we count a challenge and reply as one round
(with an empty challenge if the prover speaks first).

\begin{proof}[Proof of \cref{lem:public-coin-perfect}]
  Apply \cref{lem:gs} to $(P,V)$ and then apply
  \cref{lem:perfect-completeness} to the resulting protocol.
  For each fixed oracle, the first lemma changes the two acceptance
  bounds from $2/3-1/100$ and $1/3+1/100$ to $2/3$ and $1/3$,
  respectively. The second changes $2/3$ to $1$ and preserves $1/3$.
  Both lemmas preserve the query model and keep all verifier queries
  after the last reply. Moreover, the second produces a deterministic
  prover, and both verifiers depend only on the original verifier and
  the polynomial bounds in \cref{def:query-model}.
\end{proof}

\subsubsection{Public Coins}
\label{app:public-coin-conversion}

In this section, we observe that the Goldwasser--Sipser
transformation~\cite{GS} preserves polynomial query bounds.
We first record why public-coin verifiers can postpone their queries
until after the interaction.

\begin{fact}[Delaying public-coin queries]
  \label{fact:public-coin-delay}
  Let $V$ be a classical public-coin verifier with fixed message lengths
  and a fixed number of rounds. Its oracle queries can be postponed until
  after the last reply, preserving their number and lengths and the
  acceptance probability for every fixed oracle and prover $P^*$.
\end{fact}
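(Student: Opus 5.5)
We prove the fact by a direct simulation. Let $V$ be a classical public-coin verifier. It sends fresh uniform strings $r_1,\ldots,r_R$ before the final decision, and its final computation is deterministic given the transcript and the oracle (\cref{def:query-model}). We construct a verifier $V_{\mathrm{del}}$ that sends exactly the same messages as $V$ but makes no oracle queries during the interaction. After the last reply, it runs $V$'s entire computation and answers the queries as it goes.

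The key observation is that, by the definition of public coin, the messages of $V$ do not depend on its oracle answers. Each challenge $r_j$ is a fresh uniform string, sampled independently of everything seen so far. Any queries $V$ makes between messages can therefore only influence its internal state. Since the final computation is a deterministic function of the transcript and the oracle, that internal state plays no role beyond what the final computation can recompute. If the definition instead allows $V$ to retain private randomness or query answers, we fold them into the final computation: private randomness can be sampled at the end, and intermediate queries can be recomputed once the whole transcript $\tau=(r_1,m_1,\ldots,r_R,m_R)$ is known. Concretely, $V_{\mathrm{del}}$ behaves as follows.
\begin{enumerate}
  \item In round $j$, it sends a uniform $r_j$ and receives $m_j$.
  \item After receiving $m_R$, it runs $V$ on the transcript $\tau$, answering each oracle query by querying the oracle at the same address. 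It outputs whatever $V$ outputs.
\end{enumerate}

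For correctness, fix the oracle $\ora$ and a prover $P^*$. We couple the two executions by using the same challenges $r_j$ and the same prover randomness. Induction on rounds then shows that the prover sees identical messages in both executions. Its replies are identical as well: $P^*$ has oracle access and sees only the transcript, not the verifier's queries. Hence the transcript distributions coincide. Given $\tau$, both verifiers run the same deterministic computation with the same oracle, so they make the same queries and reach the same decision. The number and lengths of the queries are unchanged, since they are literally the queries made in that computation.

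The only step that needs care is handling a definition in which intermediate queries could have influenced anything other than the final decision. This is excluded precisely by the public-coin requirement that the messages be fresh and uniform. We would state this explicitly, and also note that padding keeps the round counts and message lengths fixed.
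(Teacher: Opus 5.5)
Your proposal is correct and follows essentially the same route as the paper: send and record the uniform challenges and replies without querying, then rerun $V$ with the saved challenges as its random tape and the saved replies, so coupled executions give identical transcripts, queries, and decision. The only difference is that the paper handles early halting by ignoring replies after the simulated verifier halts, while you appeal to padding; either suffices.
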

\begin{proof}
  We define the modified verifier in two phases:
  \begin{enumerate}
    \item In each round, sample and send the public challenge, then
      save the prover's reply. Make no oracle queries.
    \item Simulate $V^\ora$ using the saved challenges as its random tape
      and supplying the saved replies when requested. Return its decision;
      ignore replies after the simulated verifier halts.
  \end{enumerate}
  For any fixed oracle $\ora$ and prover $P^*$, the challenges and
  replies have the original distribution up to the point where
  $V$ halts. Thus the second phase preserves its queries and decision.
\end{proof}

\begin{lemma}[Public coins]
  \label{lem:gs}
  Let $(P,V)$ be a protocol in the classical query model of
  \cref{def:query-model}. There is a public-coin protocol $(P',V')$
  in the same model, where $V'$ depends only on $V$ and the polynomial
  bounds in \cref{def:query-model}. The verifier $V'$
  makes all oracle queries after the last prover reply.
  For every oracle input $\ora$:
  \begin{enumerate}
    \item \textbf{Designated prover.}
      If $P^\ora$ makes $V^\ora$ accept with probability at least
      $2/3-1/100$, then $(P')^\ora$ makes $(V')^\ora$ accept with
      probability at least $2/3$.
    \item \textbf{Arbitrary provers.}
      If every prover $P^*$ makes $V^\ora$ accept with probability at most
      $1/3+1/100$, then every prover $P^*$ makes $(V')^\ora$ accept with probability
      at most $1/3$.
  \end{enumerate}
\end{lemma}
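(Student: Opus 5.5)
The plan is to follow the classical Goldwasser--Sipser proof and check two things at each step: that it uses only the features of \cref{def:query-model}, and that it preserves polynomial query bounds. The argument goes as follows.

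\emph{Step 1: amplify.} First apply parallel repetition with majority to $(P,V)$, or sequential repetition as in \cref{lem:qcip-amplification}; both are valid for classical verifiers against arbitrary provers relative to a fixed oracle. Repeat polynomially many times so that completeness becomes at least $1-2^{-\kappa}$ and soundness at most $2^{-\kappa}$, for a suitable polynomial $\kappa(\lambda)$. The gap $(2/3-1/100)-(1/3+1/100)$ is a constant, so this is enough. Queries, rounds, communication and randomness all grow by only a polynomial factor, and the verifier still depends only on $V$ and the bounds.

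\emph{Step 2: simulate private coins by set lower bounds.} Fix the oracle $\ora$ and let $\rho\in\bits^{k}$ be the verifier's random tape. The new public-coin verifier $V'$ runs the Goldwasser--Sipser round-by-round protocol. In each round the prover claims approximate sizes of the sets of random tapes that are consistent with the partial transcript and lead to acceptance. These claims are justified by Sipser's hashing lower-bound protocol: the verifier publicly sends random pairwise-independent hash functions, and the prover answers with a tape $\rho$ mapping to zero together with the claimed message. At the end, $V'$ checks that the final exhibited tape $\rho^*$ is consistent with the whole simulated transcript, namely that running $V^\ora$ on $\rho^*$ and the prover's messages produces those verifier messages and accepts.

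The point to verify here is that only this final consistency check touches the oracle, and it is a single deterministic run of $V^\ora$ on one tape. Earlier rounds only check claimed counts and hash values, which need no oracle access. This costs at most the query bound of the amplified $V$, and all queries come after the last reply. The sizes of the sets depend on $\ora$, but the prover knows $\ora$ and the verifier never has to compute them. This is why unrestricted local computation and unrestricted oracle access for the prover are harmless.

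\emph{Step 3: the acceptance analysis.} This analysis is the standard one and holds for each fixed $\ora$, because fixing $\ora$ turns $V^\ora$ into an ordinary private-coin verifier with a fixed (possibly inefficient) next-message function. Completeness of $V'$ comes out at least $2/3$, and soundness at most $1/3$ after choosing $\kappa$ and the hashing parameters appropriately, with only polynomial overhead. Finally, \cref{fact:public-coin-delay} can be invoked to postpone any remaining queries, and padding fixes the message lengths.

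I expect the main obstacle to be the soundness bookkeeping in Goldwasser--Sipser. The standard error terms must combine to give soundness at most $1/3$, starting from soundness $2^{-\kappa}$, against provers that depend on the oracle. I would handle this by citing the known soundness bound, which is valid for every fixed private-coin verifier, and choosing $\kappa$ larger than the polynomial blow-up factor in that bound.
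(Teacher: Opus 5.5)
Your proposal is correct and follows essentially the same route as the paper. Both amplify by repetition and apply Goldwasser--Sipser for each fixed oracle; the only oracle-dependent step is the final check that an exhibited private tape is consistent with the transcript and accepting, costing at most one run of the amplified verifier per check, while hash and size checks need no queries, and queries are then delayed via \cref{fact:public-coin-delay} and messages padded. The only difference is cosmetic: the paper amplifies just to $2/3$ and $1/3$ and relies on the round-preserving amplification inside Goldwasser--Sipser, whereas you amplify to exponentially small error up front, and either works.
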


\begin{proof}[Proof of \cref{lem:gs}]
  We first strengthen the two acceptance bounds to $2/3$ and $1/3$
  under the respective hypotheses, using a constant number of sequential
  repetitions with majority decision. We then apply Goldwasser--Sipser and its
  round-preserving amplification~\cite[Sections~4.1--4.2]{GS}, obtaining
  a public-coin verifier $V'$.

  \emph{Oracle queries.}
  To bound the queries, let $\widehat V$ be the original verifier
  after both amplification steps. The only oracle-dependent test checks
  whether a transcript $\tau$ and private random tape $\xi$ give a
  consistent accepting execution:
  \begin{enumerate}
    \item Run $\widehat V^\ora$ with random tape $\xi$.
    \item Whenever it sends a message, compare it with the corresponding
      verifier message in $\tau$. On a mismatch, output $\REJECT$ and
      terminate the test. Whenever it requests a prover reply, supply
      the corresponding reply from $\tau$.
    \item Return the decision of $\widehat V^\ora$.
  \end{enumerate}
  Observe that this test makes no more queries than one execution
  of $\widehat V$. All other tests evaluate public hashes or compare
  integers, so they need no oracle queries. Their parameters depend
  only on the round, communication, and random-bit bounds of $V$,
  independently of its local running time.

  \emph{Fixed public challenges.}
  We let $V'$ send uniform hash descriptions and strings of fixed
  maximum lengths, using earlier replies to select the required
  submatrices and prefixes. Out-of-range parameters cause $V'$ to
  output $\REJECT$ and terminate. Unused bits do not affect either
  acceptance guarantee: the designated prover ignores them, and an
  arbitrary prover can sample them privately in the original protocol.
  We then pad early halts as in \cref{def:query-model} and apply
  \cref{fact:public-coin-delay}.

  \emph{Acceptance bounds.}
  For each fixed $\ora$, the Goldwasser--Sipser acceptance guarantees
  establish both implications in \cref{lem:gs}: acceptance at least
  $2/3$ for a designated prover under the first hypothesis, and
  acceptance at most $1/3$ for every prover under the second.
  The prover $P'$ reads $\ora$ and enumerates deterministic reply
  tables, using the first making $(V')^\ora$ accept with probability
  at least $2/3$. If none exists, it outputs $\bot$ and terminates.
\end{proof}

\subsubsection{Perfect Completeness}
\label{app:perfect-completeness}

In this section, we observe that the perfect-completeness
transformation of F\"urer et al.~\cite[Section~3]{FGMSZ} preserves
polynomial query bounds. Recall that their verifier runs several
copies with shifted public random tapes and accepts if any copy accepts.

\begin{lemma}[Perfect completeness]
  \label{lem:perfect-completeness}
  Let $(P,V)$ be a public-coin protocol in the classical query model of
  \cref{def:query-model}, whose verifier $V$ makes all its oracle queries
  after the last prover reply. There is another such protocol $(P',V')$,
  where $P'$ is deterministic and $V'$ depends only on $V$ and the
  polynomial bounds in \cref{def:query-model}.
  For every oracle input $\ora$:
  \begin{enumerate}
    \item \textbf{Designated prover.}
      If $\langle P^\ora,V^\ora\rangle$ accepts with probability
      at least $2/3$, then
      \[
        \Pr[\langle(P')^\ora,(V')^\ora\rangle=\ACCEPT]=1.
      \]
    \item \textbf{Arbitrary provers.}
      If every prover $P^*$ makes $V^\ora$ accept
      with probability at most $1/3$, then every prover $P^*$ makes $(V')^\ora$
      accept with probability at most $1/3$.
  \end{enumerate}
\end{lemma}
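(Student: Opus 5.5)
We follow the transformation of Fürer, Goldreich, Mansour, Sipser and Zachos~\cite[Section~3]{FGMSZ} and only verify that it stays in the query model and keeps all queries after the last reply. We first amplify $(P,V)$ by parallel repetition with majority vote. For a public-coin protocol this preserves the round structure, and for each fixed $\ora$ it yields completeness at least $1-2^{-k}$ and soundness at most $2^{-k}$, where $k$ is polynomial in the length $\rho$ of the concatenated public random tape. Parallel (rather than sequential) repetition is fine here because soundness of parallel repetition for public-coin protocols against unbounded provers follows by the standard round-by-round argument. The resulting verifier $\widehat V$ still makes all of its queries at the end, and it makes polynomially many of them.

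Next we define $V'$. The prover first sends shifts $s_1,\ldots,s_\rho\in\bits^{\rho}$. The verifier then runs $\rho$ parallel copies of $\widehat V$. In each round it sends a single fresh public challenge segment $r$. The prover answers copy $u$ as if that copy's challenge were $r\oplus s_u$ (restricted to the relevant segment). After the last reply, $V'$ runs each copy's deterministic final computation on its shifted tape and accepts iff some copy accepts. This makes $\rho$ times as many queries as $\widehat V$, all of them after the interaction, so we remain in the model of \cref{def:query-model}. The construction depends only on $V$ and the polynomial bounds.

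\emph{Completeness.} Fix $\ora$, and let $A_\ora\subseteq\bits^\rho$ be the set of tapes on which $\widehat P^\ora$ leads $\widehat V$ to accept; it has density at least $1-2^{-k}$. By Lautemann's probabilistic argument, a uniformly random choice of $\rho$ shifts satisfies $\bigcup_u(A_\ora\oplus s_u)=\bits^\rho$ with positive probability when $2^{-k\rho}2^\rho<1$. Such shifts therefore exist. The deterministic prover $P'$ reads $\ora$, picks the lexicographically first covering shifts, and answers copy $u$ using $\widehat P$ on the shifted challenges. Every random tape $r$ then lies in some $A_\ora\oplus s_u$, so that copy accepts with probability $1$.

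\emph{Soundness.} This is the step requiring the most care, because the shifts are chosen adaptively and the copies share one challenge. For fixed shifts, each copy $u$ sees the uniform challenge sequence $r\oplus s_u$, so an arbitrary cheating strategy restricted to copy $u$ is a prover for $\widehat V$ and succeeds with probability at most $2^{-k}$. A union bound over copies gives at most $\rho 2^{-k}$. Since the shifts are sent before any challenge, we can fix them by conditioning, and the bound holds for every choice of shifts, hence for every prover. Taking $k$ so that $\rho 2^{-k}\le1/3$ while $k\rho>\rho$ completes the proof. The soundness of the initial amplification is the only ingredient imported from standard analyses, and it relativizes because every argument is made for a fixed $\ora$.
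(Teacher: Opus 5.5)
Your proposal is correct and follows the paper's proof essentially step for step: amplify, run the F\"urer et al.\ shifted copies with one shared public challenge, use Lautemann's covering argument for completeness, and for soundness fix the shifts, simulate each copy as a prover for $\widehat V$, and take a union bound. The only deviation is that you amplify by parallel rather than sequential repetition, which is valid for public-coin protocols but, because of the majority rule, needs the exponential-moment (Chernoff) version of the round-by-round product argument rather than the plain product bound for the all-accept rule.

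One small omission: the set $A_\ora$ of accepting tapes is only well defined, and $P'$ is only deterministic, once you replace the amplified designated prover by a deterministic one with at least the same acceptance probability, as the paper does via \cref{lem:classical-prover}.
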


\begin{proof}[Proof of \cref{lem:perfect-completeness}]
  We first amplify $V$ by sequential repetition with majority decision,
  reducing the stated error bounds from $1/3$ to $\epsilon$. Write
  $\widehat V$ for the resulting verifier and $\ell\ge1$ for its
  number of public random bits (adding an unused bit if necessary).
  Polynomially many repetitions suffice to make
  $\epsilon<1/(3\ell)$, since $\epsilon$ decreases exponentially
  while $\ell$ grows linearly. We postpone $\widehat V$'s queries
  using \cref{fact:public-coin-delay}.

  To define the new verifier, let $R$ be the number of rounds of
  $\widehat V$, with challenge length $\ell_j$ and reply length $a_j$
  in round $j$, so that $\ell=\sum_{j=1}^R\ell_j$. For each
  $i\in[\ell]$ and shift $s^{(i)}\in\bits^\ell$, write
  $s^{(i)}=s_1^{(i)}\|\cdots\|s_R^{(i)}$ with
  $s_j^{(i)}\in\bits^{\ell_j}$.
  The verifier is as follows~\cite[Section~3.1]{FGMSZ}.
  \begin{protocol}
    \protocoltitle{Perfect-completeness verifier $(V')^\ora$}
    \textbf{Input:} $1^\lambda$.\par
    \textbf{Oracle:} $\ora$.\par
    \textbf{Output:} $\ACCEPT$ or $\REJECT$.
    \protocolbody
    \begin{enumerate}
      \item Receive $s^{(1)},\ldots,s^{(\ell)}\in\bits^\ell$.
      \item For $j=1,\ldots,R$:
        \begin{enumerate}[label*=\arabic*.]
          \item Sample $r_j\leftarrow\bits^{\ell_j}$ uniformly and send it
            to the prover.
          \item Receive $m_j^{(1)},\ldots,m_j^{(\ell)}\in\bits^{a_j}$.
        \end{enumerate}
      \item For each $i\in[\ell]$, run the final computation of
        $\widehat V^\ora$ on the transcript
        \[
          \tau_i=\bigl(r_j\oplus s_j^{(i)},m_j^{(i)}\bigr)_{j=1}^R.
        \]
        Output $\ACCEPT$ if at least one computation accepts and
        $\REJECT$ otherwise, then terminate.
    \end{enumerate}
  \end{protocol}
  The initial prover message has an empty public challenge.
  Fix the oracle $\ora$ for the rest of the proof.

  \emph{Designated prover.}
  Suppose $P$ makes $V^\ora$ accept with probability at least $2/3$.
  Amplification and \cref{lem:classical-prover} give a deterministic
  prover $\widehat P$ making $\widehat V^\ora$ accept with probability at least
  $1-\epsilon$. Let $G\subseteq\bits^\ell$ be the tapes on which
  $\widehat V^\ora$ accepts against $\widehat P$.
  We choose shifts whose translates of $G$ cover every public tape.
  For independent uniform shifts $s^{(1)},\ldots,s^{(\ell)}$, the covering
  argument of~\cite{Lautemann83,FGMSZ} gives
  \[
    \E\left[\left|\bits^\ell\setminus
      \bigcup_{i=1}^\ell(G\oplus s^{(i)})\right|\right]
    =2^\ell\left(1-\frac{|G|}{2^\ell}\right)^\ell
    \le 2^\ell\epsilon^\ell<1.
  \]
  Since the number of uncovered tapes is a nonnegative integer,
  some choice of shifts satisfies
  \[
    \bigcup_{i=1}^\ell(G\oplus s^{(i)})=\bits^\ell.
  \]
  The prover sends these shifts and follows $\widehat P$ in each copy.
  Thus at least one copy accepts for every public tape, so
  $(V')^\ora$ accepts with probability $1$ against this prover.

  \emph{Arbitrary provers.}
  Suppose every prover $P^*$ makes $V^\ora$ accept with probability at most
  $1/3$, so amplification reduces this bound to $\epsilon$ for
  $\widehat V^\ora$.
  By \cref{lem:classical-prover}, it suffices to consider a deterministic
  cheating prover $P^*$ for $V'$. Fix its initial shifts.
  To bound copy $i$'s acceptance probability, we use the reduction
  of~\cite[Section~3.3]{FGMSZ}. The prover $\widehat P_i^*$ starts an
  internal execution of $P^*$ and, on challenge
  $z_j\in\bits^{\ell_j}$ from $\widehat V$, does the following:
  \begin{enumerate}
    \item Supply $r_j=z_j\oplus s_j^{(i)}$ to the internal $P^*$.
    \item Receive its reply tuple $(m_j^{(1)},\ldots,m_j^{(\ell)})$ and
      return $m_j^{(i)}$ to $\widehat V$.
  \end{enumerate}
  This strategy uses only challenges received so far. Since the
  $r_j$ are independent and uniform, copy $i$ has the same
  distribution in the two executions. Thus each copy accepts with
  probability at most $\epsilon$, and a union bound yields
  \[
    \begin{aligned}
      \Pr[\text{copy $i$ accepts}]
        &=\Pr[\langle(\widehat P_i^*)^\ora,\widehat V^\ora\rangle=\ACCEPT]
          \le\epsilon,\\
      \Pr[\langle(P^*)^\ora,(V')^\ora\rangle=\ACCEPT]
        &\le\sum_{i=1}^\ell\Pr[\text{copy $i$ accepts}]
          \le\ell\epsilon<1/3.
    \end{aligned}
  \]

  \emph{Query and communication bounds.}
  If $\widehat V$ makes at most $Q$ queries, then $V'$ makes at most
  $\ell\cdot Q$, all after the last reply and with the same query
  lengths. The total communication and number of random bits are also
  polynomially bounded, independently of local running time.

  Finally, $P'$ reads $\ora$ and enumerates deterministic reply
  tables, using the first making $(V')^\ora$ accept on every public
  tape. If none exists, it outputs $\bot$ and terminates.
\end{proof}

%% file: appendices/random-set-containment.tex
\section{Containment in an Independent Random Set}
\label{app:random-set-containment}

In this section, we bound the probability that an independent random
set contains none of the sets in the support of a distribution $\pi$.
The hypothesis limits how often any fixed group of elements occurs
jointly in a sample from $\pi$.

\begin{lemma}[Containment in an independent marked set]
  \label{app:containment}
  Let $K\ge1$ and $0<\delta\le1/(8K)$.
  Let $\mathcal U$ be a finite universe and $\pi$ a distribution on
  subsets of $\mathcal U$, each of size at most $K$, such that
  \[
    \Pr_{T\leftarrow\pi}[U\subseteq T]\le2\delta^{2|U|}
  \]
  for every nonempty $U\subseteq\mathcal U$ with $|U|\le K$.
  Let $\mathsf{MARKED}\subseteq\mathcal U$ include each element
  independently with probability $\delta$. Then
  \[
    \Pr_{\mathsf{MARKED}}
      [\nexists T\in\operatorname{supp}(\pi):T\subseteq\mathsf{MARKED}]
      \le2^{-s},
    \qquad s=\left\lfloor\frac{1}{8K\delta}\right\rfloor\ge1.
  \]
\end{lemma}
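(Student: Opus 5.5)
The plan is to apply a weighted form of Janson's inequality to the family of increasing events $\{T\subseteq\mathsf{MARKED}\}$ for $T\in\operatorname{supp}(\pi)$. Each event is weighted by $w_T=\pi(T)\delta^{-|T|}$, which normalizes the expectation to one. If $\varnothing\in\operatorname{supp}(\pi)$, the claim is trivial, so assume every $T$ in the support is nonempty. Define
\[
  X=\sum_{T\in\operatorname{supp}(\pi)}\pi(T)\,\delta^{-|T|}\,\mathbf 1[T\subseteq\mathsf{MARKED}],
  \qquad \E X=\sum_T\pi(T)=1.
\]
The event in the statement is exactly $\{X=0\}$. I will invoke the weighted Janson inequality (the Boppana--Spencer/Harris-inequality argument, which goes through for nonnegative weights):
\[
  \Pr[X=0]\le\exp\!\left(-\frac{(\E X)^2}{2\Delta}\right),
  \qquad
  \Delta=\sum_{\substack{T,T'\\ T\cap T'\ne\varnothing}}w_Tw_{T'}\Pr[T\cup T'\subseteq\mathsf{MARKED}].
\]
Here the sum runs over ordered pairs and includes the diagonal $T=T'$. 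The goal is to show $\Delta\le 4K\delta$. Since $\E X=1$, this gives $\Pr[X=0]\le e^{-1/(8K\delta)}\le 2^{-s}$, using $e^{-x}\le 2^{-\lfloor x\rfloor}$. The bound $s\ge1$ follows from $\delta\le1/(8K)$.

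To bound $\Delta$, first compute a single term:
\[
  w_Tw_{T'}\Pr[T\cup T'\subseteq\mathsf{MARKED}]=\pi(T)\pi(T')\,\delta^{-|T\cap T'|}.
\]
Next expand the weight through the identity $\delta^{-k}=\sum_{U\subseteq I}(\delta^{-1}-1)^{|U|}$ for $|I|=k$. The $U=\varnothing$ term contributes $1$, and the remaining terms are bounded by $\delta^{-|U|}$. Swapping the order of summation gives
\[
  \Delta\le\Pr_{T,T'\leftarrow\pi}[T\cap T'\ne\varnothing]
   +\sum_{U\ne\varnothing}\delta^{-|U|}\Pr_{T\leftarrow\pi}[U\subseteq T]^2,
\]
where $T,T'$ are independent. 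The first term is at most $\sum_z\Pr[z\in T]^2\le 2\delta^2\,\E|T|\le 2K\delta^2$. For the second term, group by $u=|U|\in[1,K]$. The hypothesis bounds one factor of $\Pr[U\subseteq T]$ by $2\delta^{2u}$, and the other satisfies $\sum_{|U|=u}\Pr[U\subseteq T]=\E\binom{|T|}{u}\le\binom Ku$. So the size-$u$ contribution is at most $2\delta^{u}\binom Ku\le 2(K\delta)^u$. Summing the geometric series with $K\delta\le1/8$ gives at most $\tfrac{16}{7}K\delta$. Together with $2K\delta^2\le K\delta/4$, this yields $\Delta\le 4K\delta$.

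I expect the main obstacle to be justifying the weighted Janson inequality rather than the calculation. The standard statement is for unweighted sums of indicators. I would reprove it along the usual lines: take $\psi(t)=\E e^{-tX}$ and bound $-\psi'(t)/\psi(t)$ from below by $\E X-t\Delta$. This uses the Harris (FKG) inequality for the decreasing function $e^{-tX}$ restricted to events independent of $T$, and the product structure of $\mathsf{MARKED}$. Integrating then gives $\Pr[X=0]\le\exp(-t\E X+t^2\Delta/2)$, and optimizing at $t=\E X/\Delta$ gives the bound. Nonnegative weights enter linearly at every step, so the proof goes through unchanged. If that fails, the fallback is a discretization: replace each weight by an integer multiplicity, which approximates $X$ by an unweighted count of copies of events, and then apply standard Janson.
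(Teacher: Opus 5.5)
Your proof is correct, but it follows a different route from the paper's.

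The paper uses no correlation inequality. It applies a plain second-moment (Chebyshev) bound to $Z(\mathsf M)=\sum_{T\subseteq\mathsf M}\pi(T)\delta_0^{-|T|}$, where $\mathsf M$ is a \emph{sparser} random set of density $\delta_0=1-(1-\delta)^{1/s}$. This only gives a constant failure probability, $\Pr[Z(\mathsf M)=0]\le 4K\delta s\le 1/2$. The paper then boosts this to $2^{-s}$ by writing $\mathsf{MARKED}$ as the union of $s$ independent copies $\mathsf M_1,\ldots,\mathsf M_s$, a sprinkling argument. Your weighted Janson inequality does that boosting internally, via the exponential-moment argument, applied to $\mathsf{MARKED}$ directly.

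The core computation is the same in both. The paper's variance $\E_{T,T'}\delta_0^{-|T\cap T'|}-1$ is expanded over nonempty $U\subseteq T\cap T'$ exactly as you expand $\delta^{-|T\cap T'|}$. Both use the hypothesis on one factor $\Pr[U\subseteq T']$ and $|T|\le K$ on the other. Your $\Delta$ is this variance (at density $\delta$) plus the intersecting-pair correction $\Pr_{T,T'}[T\cap T'\neq\varnothing]\le 2K\delta^2$.

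What the paper's route buys is self-containment: only Chebyshev and independence are needed. What yours buys is a one-shot argument and the slightly stronger bound $e^{-1/(8K\delta)}$. The cost is that you must justify Janson for nonnegative weights. Your sketch works once made precise, as follows:
\begin{itemize}
\item Split $X=Y_T+Z_T$, with $Y_T$ collecting the terms whose sets meet $T$.
\item Under the product measure conditioned on $T\subseteq\mathsf{MARKED}$, apply Harris to the decreasing functions $e^{-tY_T}$ and $e^{-tZ_T}$.
\item Use that $Z_T$ is independent of that event and that $Z_T\le X$.
\item Finish with Jensen.
\end{itemize}
Your discretization fallback also works, since the bound is invariant under scaling all weights and Janson allows repeated events.
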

\begin{proof}
  \emph{Second-moment bound.}
  Let $\mathsf M\subseteq\mathcal U$ include each element
  independently with probability $\delta_0\in(0,1)$, to be chosen
  below. Define
  \[
    Z(\mathsf M)=\sum_{T\subseteq\mathsf M}\pi(T)\delta_0^{-|T|}.
  \]
  Then $Z>0$ if and only if $\mathsf M$ contains some
  $T\in\operatorname{supp}(\pi)$, and $\E_{\mathsf M}Z=1$.
  For the variance, let
  $T,T'\leftarrow\pi$ be independent. Then
  \begin{align*}
    \operatorname{Var}(Z)
      &=\E_{T,T'}\delta_0^{-|T\cap T'|}-1\\
      &=\E_T\sum_{\varnothing\ne U\subseteq T}
          (\delta_0^{-1}-1)^{|U|}\Pr_{T'\leftarrow\pi}[U\subseteq T']\\
      &\le2\bigl((1+\delta^2(\delta_0^{-1}-1))^K-1\bigr)
       \le2\bigl(e^{K\delta^2/\delta_0}-1\bigr),
  \end{align*}
  where the inequalities use the hypothesis on
  $\Pr_{T'\leftarrow\pi}[U\subseteq T']$, $|T|\le K$, and
  $1+x\le e^x$. If $K\delta^2/\delta_0\le1/2$, Chebyshev's
  inequality and $e^x-1\le2x$ for $0\le x\le1/2$ imply
  \[
    \Pr_{\mathsf M}[Z=0]\le\operatorname{Var}(Z)
      \le4K\delta^2/\delta_0.
  \]

  \emph{Independent repetitions.}
  Set $\delta_0=1-(1-\delta)^{1/s}$ and sample independent sets
  $\mathsf M_1,\ldots,\mathsf M_s$ as above. Their union has the
  distribution of $\mathsf{MARKED}$, since each
  element is included independently with probability
  $1-(1-\delta_0)^s=\delta$. Moreover, $\delta\le s\delta_0$ by a
  union bound, so
  \[
    K\delta^2/\delta_0\le K\delta s\le1/8.
  \]
  Hence $\Pr_{\mathsf M_i}[Z(\mathsf M_i)=0]\le4K\delta s\le1/2$.
  If $\bigcup_{i=1}^s\mathsf M_i$ contains no member of
  $\operatorname{supp}(\pi)$, then $Z(\mathsf M_i)=0$ for every $i$.
  Independence therefore implies
  \[
    \begin{aligned}
      &\Pr_{\mathsf{MARKED}}
        [\nexists T\in\operatorname{supp}(\pi):T\subseteq\mathsf{MARKED}]\\
      &\qquad\le\prod_{i=1}^s\Pr_{\mathsf M_i}[Z(\mathsf M_i)=0]
        \le2^{-s}.
    \end{aligned}
  \]
\end{proof}